\newcounter{nseccion}[section]
\newtheorem{theorem}{Theorem}[nseccion]
\newtheorem{corollary}[theorem]{Corollary}
\newtheorem{proposition}[theorem]{Proposition}
\newtheorem{definition}[theorem]{Definition}
\newtheorem{lemma}[theorem]{Lemma}
\newtheorem{example}[theorem]{Example}
\renewcommand\subsubsection{\@secnumfont}{\bfseries}%
\renewcommand\subsubsection{\@startsection{subsubsection}{3}
  \z@{.5\linespacing\@plus.7\linespacing}{-.5em}%
  {\normalfont\bfseries}}
\begin{document}


\title[A conformal boundary for space-times based on light-like geodesics]{A conformal boundary for space-times based on light-like geodesics: the 3-dimensional case} 

\author{A. Bautista$^{1}$}
\email{abautist@math.uc3m.es}
\author{A. Ibort$^{1,2}$}%
\email{albertoi@math.uc3m.es}
\address{\footnotesize $^{1}$Dpto. de Matem\'aticas, Univ. Carlos III de Madrid \protect\\ Avda. de la Universidad 30, 28911 Legan\'es, Madrid, Spain.
}%
\address{\footnotesize $^{2}$ICMAT, Instituto de Ciencias Matem\'{a}ticas (CSIC-UAM-UC3M-UCM)\protect\\ C/ Nicol\'as Cabrera, 13-15, 28049, Madrid, Spain.
}%

\author{J. Lafuente$^{3}$}
\email{lafuente@mat.ucm.es}
\address{\footnotesize $^{3}$Dpto. de Geometr\'{\i}a y Topolog\'{\i}a, Univ. Complutense de
Madrid \protect\\ Avda. Complutense s/n, 28040 Madrid, Spain.
}%

\author{R. Low$^{4}$}
\email{mtx014@coventry.ac.uk}

\address{\footnotesize $^{4}$School of Computing, Electronics and Mathematics, Coventry University \protect\\ Priory Street, Coventry CV1 5FB, UK.}

\begin{abstract}
A new causal boundary, which we will term the $l$-boundary, inspired by the geometry of the space of light rays and invariant by conformal diffeomorphisms for space-times of any dimension $m\geq 3$, proposed by one of the authors (R.J. Low, \textit{The space of null geodesics (and a new causal boundary)}, Lecture Notes in Physics \textbf{692}, Springer, 2006, 35--50) is analyzed in detail for space-times of dimension 3.  
Under some natural assumptions it is shown that the completed space-time becomes a smooth manifold with boundary and its relation with Geroch-Kronheimer-Penrose causal boundary is discussed.   A number of examples illustrating the properties of this new causal boundary as well as a discussion on the obtained results will be provided.
\end{abstract}

\keywords{Causal boundary, c-boundary, space-time}

\date{January, 2017}

\maketitle 

\section{Introduction}

\setcounter{nseccion}{1}

In order to study a space-time  $M$ in the large, the attachment of a `causal' boundary can be useful. 
There are several boundaries defined in the literature: \emph{Geroch's g-boundary} \cite{Ge68}, \emph{Schmidt's b-boundary} \cite{Sh71}, and the \emph{GKP c-boundary}, called also Geroch-Kronheimer-Penrose's boundary, causal boundary or just $c$-boundary \cite{GKP68}.  Their interest depends on the properties we want to study and their definition being sometimes controversial, though Flores, Herrera and S\'anchez  \cite{FHS11} have provided general arguments that ensure the admissibility of a proposed causal boundary at the three natural levels, i.e., as a point set, as a chronological space and as a topological space with its essential uniqueness stressed.

The development of a topological characterization of causality relations in the space of light rays started by R. Low in \cite{Lo88} (see also \cite{Lo90}, \cite{Lo94}) led the author to a new definition of a causal boundary for a strongly causal space-time  by considering the problem of attaching a future endpoint to a null geodesic $\gamma$ in the space of light rays of the given space-time. The idea behind is to treat all null geodesics which focus at the same point at infinity as the light cone of the (common) future endpoint of these null geodesics \cite{Lo06}.   

The recent contributions in the dual description of causality relations in terms on the geometry and topology of the corresponding spaces of light rays and skies (see for instance \cite{Ch08}, \cite{Ch10}, \cite{Ba14}, \cite{Ba15} and references therein) make  this new notion of causal boundary become more relevant as it can provide, not only an alternative description of the $c$-boundary, but a more suitable way of addressing the overall notion of causal boundary versus the (in general badly behaved) notion of conformal boundary.  Actually the first question raised in  \cite{Lo06}, regarding the proposed new notion of boundary, is if it agrees with GKP $c$-boundary, a question that will be thoroughly addressed below.  We will see that, unfortunately, they are not necessarily  the same in general, but it is easy to find examples in which they are closely related and the set of points where they coincide will be characterized. 

The construction of the new boundary involves determining the limit of the curve of tangent spaces to the skies $S(\gamma(s))$ along the geodesic $\gamma$ in the corresponding Grassmannian manifold (see Sect. \ref{sec:preliminaries} for definitions). Even if such a limit exists because of the compactness of the Grassmannian manifold, it need not be unique, which poses an additional difficulty in the construction of the new boundary.    However in three-dimensional space-times skies are one-dimensional and the corresponding Grassmannian is the projective real line, then the limit exists and is unique which allows an unambiguous  definition of the boundary points.   Thus in this work we will restrict the construction of the new boundary to three-dimensional space-times $M$.   Unexpectedly,  it will be shown that under some natural assumptions the boundary not only carries a natural topology but a smooth structure that makes the extended manifold $\overline{M}$ into a smooth manifold with boundary. As this boundary adds endpoints to the light rays, we will call it the $l$-boundary.

The paper will be organized as follows: in Section~\ref{sec:Low-boundary}, we will accomplish the construction of the  $l$-boundary for $\dim M=3$ and then, in Section \ref{Low-c-boundary} the relation with the causal $c$-boundary will be discussed; it will be checked  that in some simple situations it has good properties. 
We will illustrate the obtained results by collecting some relevant examples in section \ref{sec:examples}. 
Finally, in section \ref{sec:discussion}, the obtained results as well as some open problems will be discussed.


\section{The $l$-boundary for 3--dimensional space-times} \label{sec:Low-boundary}

\setcounter{nseccion}{2}

\subsection{Preliminaries on the spaces of light rays and skies of a space-time}\label{sec:preliminaries}  Let us consider a time-oriented $m$-dimensional conformal Lorentz manifold $(M,\mathcal{C})$ and denote by $\mathcal{N}$ its space of light rays. 
Assuming that $M$ is strongly causal and null pseudo--convex, we ensure that $\mathcal{N}$ is a Hausdorff differentiable manifold \cite[sect. 3]{Lo90b}.

As shown in \cite[Sect. 2.3]{Ba14}, the construction of topological and differentiable structures for the space $\mathcal{N}$  can be achieved by a suitable choice of coordinate charts of subbundles of the tangent bundle $TM$. 
Fixing an auxiliary metric $\mathbf{g}\in \mathcal{C}$, the set $\mathbb{N}^{+}=\{\xi \in TM:\mathbf{g}\left( \xi ,\xi \right) =0,\xi \neq 0,\xi
\,\,\mathrm{future}\}\subset TM$ defines the subbundle of future null
vectors on $M$ and the fibre of $\mathbb{N}^{+}$ at $p\in M$ will be denoted by $\mathbb{N}^{+}_p$. 
Null geodesics defined by two different proportional elements $\xi_1 , \xi_2 \in \mathbb{N}^{+}_p$ have the same image in $M$, and then $\xi_1$ and $ \xi_2$ define the same light ray $\gamma $ in $\mathcal{N}$. 
Since $M$ is assumed to be strongly causal, then for any $p\in M$ there exists a globally hyperbolic, causally convex and convex normal neighbourhood $V\subset M$ with differentiable spacelike Cauchy surface $C$ such that if $\lambda$ is a causal curve passing through $V$, then $\lambda\cap C$ is exactly one point.
Then any light ray $\gamma$ passing through $V$ can be determined by its intersection point with $C$ and a null direction at said point. 
If $\mathbb{N}^{+}\left(C\right)$ is the restriction of the subbundle $\mathbb{N}^{+}$ to the Cauchy surface $C$ then a realization of a coordinate chart at $\gamma\in \mathcal{N}$ can be obtained from a coordinate chart of 
\[
\Omega\left(C\right)=\left\{ v\in \mathbb{N}^{+}\left(C\right):\mathbf{g}\left(v,T\right)=-1 \right\}
\]
where $T\in\mathfrak{X}\left(M\right)$ is a fixed global timelike vector field.

For any point $x\in M$, the set of light rays passing through $x$ is named \emph{the sky of} $x$ and
it will be denoted by $S\left( x\right) $ or $X$, i.e.
\begin{equation}
S\left( x\right) = \{\gamma \in \mathcal{N}: x \in \gamma \subset M\}  = X.
\end{equation}%
Notice that the light rays $\gamma \in S(x)$ are in one-to-one
correspondence with the set of null lines at $T_x M$, hence the sky $S\left( x\right) $ of any point $x\in M$ is diffeomorphic
to the standard sphere $\mathbb{S}^{m-2}$. 
The set of all skies is called the
\emph{space of skies} and defined as
\begin{equation}
\Sigma =\{X\subset \mathcal{N}:X=S\left( x\right) \,\,  \mathrm{ for \, \, some } \,\, x\in M\}
\end{equation}%
and the \emph{sky map} as the application $S:M\rightarrow \Sigma $ that, by \cite[Cor. 17]{Ba15}, is a diffeomorphism when the differentiable structure compatible with the \emph{reconstructive or regular topology} is provided in $\Sigma$ \cite[Def. 1]{Ba14}, \cite[Def. 13]{Ba15}.

An auxiliary metric $\mathbf{g}\in\mathcal{C}$ allows to determine the geodesic parameter for the light ray $\gamma\in \mathcal{N}$ such that $\gamma\left(0\right)\in C$ and $\gamma'\left(0\right)\in \Omega\left(C\right)$. 
So, any curve $\Gamma\subset \mathcal{N}$ corresponds to a null geodesic variation in $M$. Since tangent vectors at $T_{\gamma}\mathcal{N}$ can be defined by tangent vectors $\Gamma ^{\prime }(0)$ of smooth curves $\Gamma:(-\epsilon ,\epsilon )\rightarrow \mathcal{N}$ such that $\Gamma (0)=\gamma $, then the Jacobi field on $\gamma$ of the null geodesic variation defined by $\Gamma$ defines a tangent vector in $T_{\gamma}\mathcal{N}$. 
Since $\Gamma ^{\prime }(0)$ does not depend on the parametrization of the light ray $\gamma$ nor on the auxiliary metric $\mathbf{g}$, then $\eta\in T_{\gamma}\mathcal{N}$ can be identified with an equivalence class of Jacobi fields on $\gamma$ given by 
\[
[J]=J(\mathrm{mod}\gamma ^{\prime })
\]
where $J$ is a Jacobi field along $\gamma$ defined by a null geodesic variation corresponding to a curve $\Gamma:(-\epsilon ,\epsilon )\rightarrow \mathcal{N}$ such that $\Gamma\left(0\right)=\gamma$ and $\Gamma'\left(0\right)=\eta$.
Notice that any Jacobi vector field $J$ defined by a null geodesic variation of $\gamma\in \mathcal{N}$ verifies 
\[
\mathbf{g}\left(J\left(t\right),\gamma'\left(t\right)\right)= \mathrm{constant}
\]
for all $t$ in the domain of $\gamma$.
Abusing the notation, we will also denote simply by $J$ vectors in $T\mathcal{N}$.

A canonical contact structure $\mathcal{H}\subset T\mathcal{N}$ exists in $\mathcal{N}$. 
Although $\mathcal{H}$ can be defined by the canonical 1--form $\theta$ on $T^{*}M$, a description in terms of Jacobi fields can be found at \cite{Lo98}, \cite{Lo06}.
For any $\gamma\in \mathcal{N}$, the hyperplane $\mathcal{%
H}_\gamma \subset T_\gamma\mathcal{N}$ is given by:
\begin{equation}  \label{contact}
\mathcal{H}_{\gamma}=\lbrace J\in T_{\gamma}\mathcal{N}:\mathbf{g}%
\left(J,\gamma ^{\prime }\right)=0 \rbrace \, .
\end{equation}
where $\mathbf{g}\in\mathcal{C}$ is an auxiliary metric defining the parametrization of $\gamma$ such that $\gamma'\left(0\right)\in \Omega\left(C\right)$.

Using the previous description of $T_{\gamma}\mathcal{N}$, if $x\in M$ and $\gamma \in X=S\left(x\right)\in \Sigma$  with $\gamma \left( s_{0}\right) =p$, then
\begin{equation}\label{tangent_sky}
T_{\gamma }X=\{J\in T_{\gamma }\mathcal{N}:J\left( s_{0}\right) =0\left(
\mathrm{{mod}\gamma ^{\prime }}\right) \} \, . 
\end{equation}%
It can be easily seen that if $J\in T_{\gamma }X$, since $\mathbf{g}\left( J,\gamma ^{\prime
}\right) $ is constant and $J\left( s_{0}\right) =0\left( \mathrm{{mod}%
\gamma ^{\prime }}\right) $, then $\mathbf{g}\left( J,\gamma ^{\prime
}\right) =0$ and therefore $T_{\gamma }X\subset \mathcal{H}_{\gamma }$. 
Therefore any $T_{\gamma }X$ is a subspace of $\mathcal{H}_{\gamma }$ and
since $\dim X = m -2$, then $X$ is a Legendrian manifold of the contact structure on $\mathcal{N}$.

The following notation will be used in this paper: if $N$ is a manifold, then its reduced tangent bundle is denoted by $\widehat{T}N$, this is, $\widehat{T}N = \bigcup_{x\in M}\widehat{T}_{x}N$ where $\widehat{T}_x N = T_x N \setminus {0}$.

As indicated in the introduction, in \cite{Lo06} the following new idea for a causal boundary in $M$ is introduced. 
Given a future-directed inextensible null geodesic $\gamma:\left(a,b\right)\rightarrow M$, we can consider the curve $\widetilde{\gamma}:\left(a,b\right)\rightarrow \mathrm{Gr}^{m-2}\left(\mathcal{H}_{\gamma}\right)$ defined by 
$$
\widetilde{\gamma}\left(s\right)= T_{\gamma}S\left(\gamma\left(s\right) \right) \, ,
$$ 
where $S(\gamma(s))$ denotes the sky of the point $\gamma(s)$, that is, the congruence of light rays passing through it.  Notice that the skies $S(p)$ are diffeomorphic to $(m-2)$-dimensional spheres, so $T_{\gamma}S\left(\gamma\left(s\right) \right)$ is contained in the Grassmannian manifold $\mathrm{Gr}^{m-2}\left(\mathcal{H}_{\gamma}\right)$ of $\left(m-2\right)$--dimensional subspaces of $\mathcal{H}_{\gamma}\subset T_{\gamma}\mathcal{N}$.
Defining
\begin{equation}\label{boundary-field}
\begin{tabular}{l}
$\ominus_{\gamma} = \lim_{s\mapsto a^{+}}\widetilde{\gamma}\left(s\right)\in \mathrm{Gr}^{m-2}\left(\mathcal{H}_{\gamma}\right)$ , \vspace{3mm} \\
$\oplus_{\gamma} = \lim_{s\mapsto b^{-}}\widetilde{\gamma}\left(s\right)\in \mathrm{Gr}^{m-2}\left(\mathcal{H}_{\gamma}\right)$ ,
\end{tabular}
\end{equation}
if the previous limits exist, then it is possible to assign endpoints to $\widetilde{\gamma}$. 
The compactness of $\mathrm{Gr}^{m-2}\left(\mathcal{H}_{\gamma}\right)$ assures the existence of accumulation points when $s\mapsto a^{+},b^{-}$. 
If $\ominus_{\gamma}$ and $\oplus_{\gamma}$ exist for any $\gamma\in \mathcal{N}$, they define subsets in $\mathrm{Gr}^{m-2}\left(\mathcal{H}\right)$ but, \emph{a priori}, they do not define a distribution. 
Low defines the points in this new future causal boundary as the classes of equivalence of light rays that can be connected by a curve tangent to some $\oplus_{\gamma}$ at any point \cite{Lo06}.
Analogously, the new past causal boundary is defined by using $\ominus_{\gamma}$.

Now, we will show that, in case of $M$ being $3$--dimensional, this new notion of causal boundary, that will be referred to as the $l$-boundary of $M$ in what follows, have fair topological and differentiable structures.
Observe that in such case $\mathcal{N}$ is also $3$--dimensional since $\dim \mathcal{N} = 2m-3 =3$, and the Grassmannian manifold $\mathrm{Gr}^{m-2}\left(\mathcal{H}\right)$ becomes $\mathrm{Gr}^1\left(\mathcal{H}\right)=\mathbb{P}\left(\mathcal{H}\right)$.


\subsection{Construction of  the $l$-boundary for three-dimensional space-times}\label{sec:construction-Low}
In order to define precisely the $l$-boundary of a space-time, we will construct first a manifold $\widetilde{\mathcal{N}}$ equipped with a regular distribution $\widetilde{\mathcal{D}}$ generated by the tangent spaces of the skies. 
The quotient space $\Sigma^{\sim} = \widetilde{\mathcal{N}} / \widetilde{\mathcal{D}}$ will be shown to be diffeomorphic to $M$. 
Then, assigning endpoints to any $\widetilde{\gamma}\subset \widetilde{\mathcal{N}}$ we will get two distributions $\ominus$ and $\oplus$ in $\mathcal{N}$ whose orbits, under some conditions,  will be identified to points at the boundary of $\widetilde{\mathcal{N}}$. 
Finally, this boundary can be propagated to $M$ via an extension of the diffeomorphism $\Sigma^{\sim} \simeq M$. In this way, the $l$-boundary, as described qualitatively in the last paragraph of the previous section, would be seen now as the orbits of the distributions $\ominus$ and $\oplus$ and it will inherit a differentiable structure.

\subsubsection{Constructing $\widetilde{\mathcal{N}}$}

Let us consider a conformal manifold $\left(M,\mathcal{C}\right)$ where $M$ is $3$--dimensional, strongly causal and null pseudo--convex space-time.   Let us recall that a space-time $M$ is said to be null pseudo-convex \cite{Lo90b} if, given any compact set $K$ in $M$, there is a compact set $K'$ in $M$ such that any null geodesic segment with endpoints in $K$ lies in $K'$ .   Then if follows that $M$ is null pseudo-convex iff $\mathcal{N}$ is Hausdorff (see Prop. 3.2 and ff. in \cite{Lo90b}).  Thus the previous assumption on $M$ being null pseudo-convex is just to ensure that $\mathcal{N}$ is Hausdorff.  Notice that the more conventional assumption of $M$ possessing no naked singularities implies that $\mathcal{N}$ is Hausdorff too, however this condition becomes too strong as it is equivalent to global hyperbolicity, in fact the compactness of the diamonds $J^+(p) \cap J^-(q)$ becomes equivalent to the absence of an inextensible causal curve which lies entirely in the causal future or past of a point\cite{Sa06}.

In this sense it is possible to try to place this property within the causality ladder \cite{Mi08} where it should go immediately below globally hyperbolic spaces.   Examples of strongly causal non null pseudo-convex space-times are provided for instance by Minkowski space-time with a single point removed or Minkowski space-time where a space-like half line has been removed (see Fig. \ref{diapositiva1}).   Notice that the first space is non-causally simple \cite{BE96}, \cite{Mi08}, \cite{Sa06} 
while the second is not only non-causally simple but non-causally continuous too (the illustration displays a non-closed $J^+(p)$)) and it could be conjectured that strongly causal null pseudoconvex space-times are causally simple. 

\begin{figure}[h]
  \centering
    \includegraphics[scale=0.5]{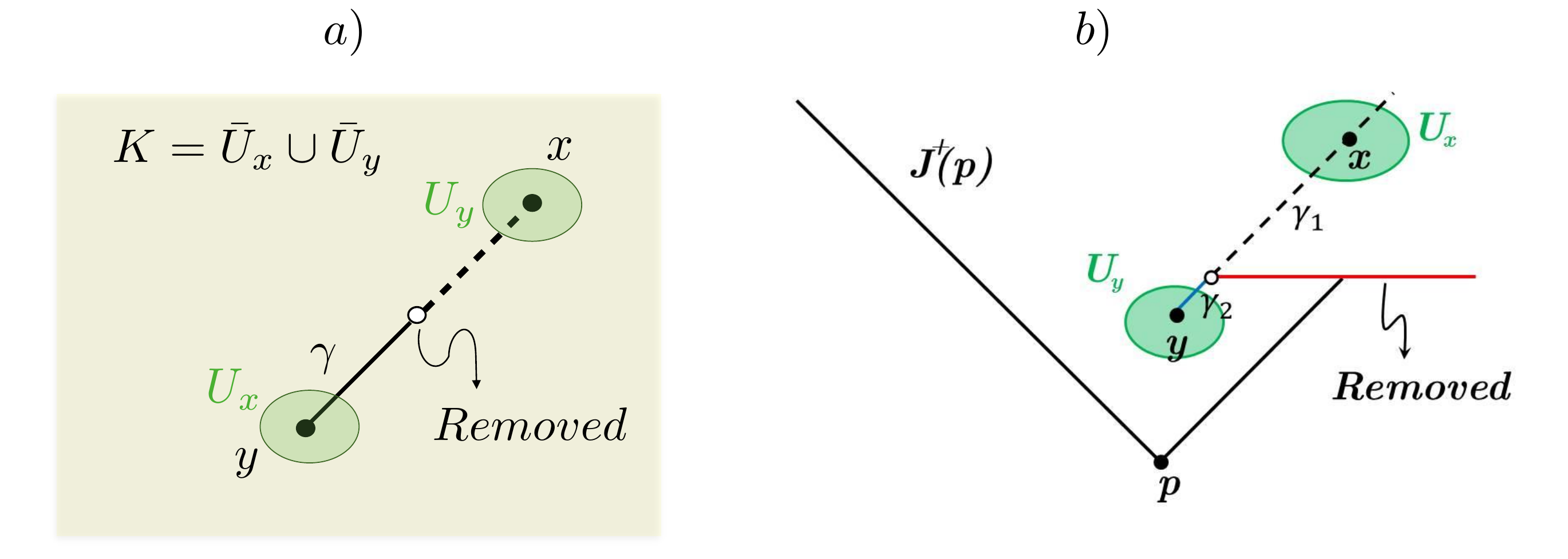}
  \caption{Representation of non null pseudo-convex space-times. a) Minkowski space-time with a single point removed.  There is no compact set containing the compact set $K = \bar{U}_x \cup\bar{U}_y$ and any null  geodesic segment joining pairs of points in $K$. b) Minkowski space-time with a space-like half-line removed.}
  \label{diapositiva1}
\end{figure}

We will restrict in what remains of this section to 3-dimensional space-times, even though many, but not all, arguments and conclusions reached can be extended easily to higher dimensional space-times.  We will use in what follows a particular choice $\mathbf{g}\in\mathcal{C}$ as an auxiliary metric.   Notice that since the projection
$\pi\colon \widehat{T}\mathcal{N} \rightarrow \mathbb{P}\left(T\mathcal{N} \right)$, 
 $J \mapsto \mathrm{span}\left\{J\right\}$, is a submersion, the restriction 
\[
\pi \left. \right|_{\widehat{\mathcal{H}}}:\widehat{\mathcal{H}}\rightarrow \mathbb{P}\left(\mathcal{H}\right) \, ,
\]
 where $\widehat{\mathcal{H}}$ denotes the intersection $\widehat{T}\mathcal{N}\cap \mathcal{H}$, also is so.
Observe that for $X\in \Sigma$ and $J\in T_{\gamma}X$,  we have that $\lambda J\in T_{\gamma}X$ and $\pi\left(\lambda J\right) = \pi\left(J\right)$ for any $\lambda\in \mathbb{R}-\lbrace 0 \rbrace$. 

Let $X\in \Sigma$ be a sky.  Define the map 
\begin{equation}\label{map-tangentsky}
\rho_X \colon X \rightarrow \mathbb{P}\left(\mathcal{H}\right) \, , \qquad \gamma \mapsto T_{\gamma}X \, .
\end{equation}
Let us check that $\rho_X$ is differentiable.   Let $U$ be an open neigborhood of $X$  in the reconstructive topology for $\Sigma$ (see \cite{Ba14}), that is, there is an open set $\mathcal{U} \subset \mathcal{N}$ such that $U = \{ X \in \Sigma \colon X \subset \mathcal{U}\}$.
Restrict the canonical projection $\tau\colon{T\mathcal{N}} \to  \mathcal{N}$ to the regular submanifold $\widehat{T}X\subset \mathcal{H}\left(\mathcal{U}\right)$, where $\mathcal{H}(\mathcal{U})$ denotes the restriction of the bundle $\mathcal{H}$ over $\mathcal{N}$ to the open set $\mathcal{U}$. Consider a differentiable local section $\sigma \colon W\subset X\rightarrow  \widehat{T}X$ of $\left.\tau\right|_{\widehat{T}X}$.
Since any $T_{\gamma}X$ is $1$--dimensional, then $\left.\rho_X\right|_{W} = \left.\pi\right|_{\widehat{T}X} \circ \sigma$ (independently of the section $\sigma$). Then, because $\rho_X|_W$ is the composition of differentiable maps, is differentiable.

Now, we will show that $\rho_X$ is an immersion by proving that it maps regular curves into regular curves.
So, consider any regular curve $\Gamma:I\rightarrow X$. 
The composition of $\Gamma$ with the map in (\ref{map-tangentsky}) gives us the differentiable curve $c=\rho_{X} \circ \Gamma:I\rightarrow \mathbb{P}\left(\mathcal{H}\right)$ defined by $c\left(s\right)=T_{\Gamma\left(s\right)}X$ and since the base curve $\Gamma=\pi \circ c$ is regular then the curve $c$ in the fibre bundle $\mathbb{P}\left(\mathcal{H}\right)$ is also regular.

The image of $\rho_X$ in $\mathbb{P}(\mathcal{H})$ will be denoted as $X^{\sim}= \left\{ T_{\gamma}X:\gamma\in X \right\}$.
\\

Next lemma shows that the union of images $X^{\sim}$ where $X$ lives in any open $U_0\subset \Sigma$ is also open in $\mathbb{P}\left(\mathcal{H}\right)$.

\begin{lemma}\label{open-tilde}
Let $V_0\subset M$ be an open set and $U_0=S\left(V_0\right)\subset \Sigma$. Then $U_0^{\sim}=\bigcup_{X\in U_0}X^{\sim}$ is open in $\mathbb{P}\left(\mathcal{H}\right)$.  
\end{lemma}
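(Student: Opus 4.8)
The plan is to exhibit $U_0^\sim$ as the image of an open set under an open map. To that end I would introduce the incidence manifold
\[
\mathcal{I} = \{(x,\gamma)\in M\times\mathcal{N} : x\in\gamma\},
\]
which is a $4$--dimensional manifold carrying two projections: $p(x,\gamma)=\gamma$ onto $\mathcal{N}$ (whose fibre over $\gamma$ is the image of the geodesic, a $1$--manifold) and $q(x,\gamma)=x$ onto $M$. On $\mathcal{I}$ define
\[
\Phi\colon \mathcal{I}\longrightarrow \mathbb{P}(\mathcal{H}) \, , \qquad \Phi(x,\gamma)=T_{\gamma}S(x) \, ,
\]
which is well defined because $T_{\gamma}S(x)\subset\mathcal{H}_{\gamma}$, and smooth by the same local-section argument already used to prove that $\rho_X$ is differentiable. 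Since $q^{-1}(V_0)$ is open in $\mathcal{I}$ and $\Phi(q^{-1}(V_0))=\bigcup_{x\in V_0}S(x)^\sim=U_0^\sim$, the lemma reduces to showing that $\Phi$ is an open map.

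As $\dim\mathcal{I}=\dim\mathbb{P}(\mathcal{H})=4$, it suffices to prove that $\Phi$ is a local diffeomorphism. Here I would exploit that $\Phi$ is a bundle morphism over $\mathcal{N}$: denoting by $p'\colon\mathbb{P}(\mathcal{H})\to\mathcal{N}$ the natural projection, one has $p'\circ\Phi=p$, and both $p$ and $p'$ are submersions. Differentiating this identity gives $dp'\circ d\Phi=dp$, so any $v\in\ker d\Phi$ satisfies $dp(v)=dp'(d\Phi(v))=0$ and is therefore $p$--vertical. Consequently it is enough to verify that the restriction of $\Phi$ to each fibre of $p$ is an immersion: this forces $\ker d\Phi=0$ and, by equality of dimensions, makes $d\Phi$ an isomorphism at every point, whence $\Phi$ is a local diffeomorphism and in particular open.

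The restriction of $\Phi$ to the fibre of $p$ over $\gamma$ is precisely the curve $s\mapsto\widetilde{\gamma}(s)=T_{\gamma}S(\gamma(s))\in\mathbb{P}(\mathcal{H}_{\gamma})$ appearing in \eqref{boundary-field}, so the whole matter reduces to the regularity of $\widetilde{\gamma}$, which is the geometric heart of the argument and the step I expect to be the main obstacle. I would settle it through Jacobi fields: in dimension $3$ the fibre $\mathcal{H}_{\gamma}$ is identified with the $2$--dimensional solution space of the scalar Jacobi equation $y''+Ky=0$ along $\gamma$ (the component transverse to $\gamma'$ on the $1$--dimensional screen), and by \eqref{tangent_sky} the line $\widetilde{\gamma}(s)$ corresponds to the solutions vanishing at $s$. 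Fixing a basis $y_1,y_2$ of solutions one has $\widetilde{\gamma}(s)=[\,y_2(s):-y_1(s)\,]$, whose velocity in $\mathbb{P}(\mathcal{H}_\gamma)$ is controlled by the determinant $y_1(s)y_2'(s)-y_1'(s)y_2(s)$, i.e. the Wronskian $W(y_1,y_2)$. Since the equation has no first-order term, $W$ is a nonzero constant, so $\widetilde{\gamma}'(s)\neq 0$ for all $s$ and $\widetilde{\gamma}$ is an immersion. This establishes the fibrewise immersion property, completes the proof that $\Phi$ is a local diffeomorphism, and hence shows that $U_0^\sim=\Phi(q^{-1}(V_0))$ is open in $\mathbb{P}(\mathcal{H})$.
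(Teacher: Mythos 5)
Your proof is correct, but it takes a genuinely different route from the paper's. The paper works one level up, in the total space of the bundle $\mathcal{H}$: for each sky $X\in U_0$ it invokes the regularity theorem of \cite[Thm. 1]{Ba14} to obtain a regular neighbourhood $U\subset U_0$ for which the cone of nonzero sky-tangent vectors $\widehat{U}=\bigcup_{X\in U}\widehat{T}X$ is a regular submanifold of $T\mathcal{U}$; since $\dim\widehat{U}=5=\dim\mathcal{H}$, this cone is \emph{open} in $\mathcal{H}$, and openness then descends to $U^{\sim}=\pi\left(\widehat{U}\right)$ because the projection $\pi\colon\widehat{\mathcal{H}}\rightarrow\mathbb{P}\left(\mathcal{H}\right)$ is a submersion, hence an open map. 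You instead work directly in $\mathbb{P}\left(\mathcal{H}\right)$, exhibiting $U_0^{\sim}$ as the image of an open subset of the incidence manifold $\mathcal{I}$ under the tautological map $\Phi$, and proving that $\Phi$ is a local diffeomorphism between equidimensional $4$--manifolds, the fibrewise step being settled by the constancy of the Wronskian of the scalar Jacobi equation. The paper's route buys brevity at the cost of outsourcing the substance to \cite{Ba14}; yours buys a more self-contained argument that in fact proves more: the Wronskian computation shows that each $\widetilde{\gamma}$ is a globally regular (immersed) curve, a fact the paper only establishes afterwards, and only in the weaker form of local injectivity, via the absence of conjugate points in convex normal neighbourhoods. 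Two points deserve to be made explicit in your write-up: first, the smooth structure on $\mathcal{I}$ and the submersion property of $p$ (most cleanly obtained by identifying $\mathcal{I}$ with the projectivized future null-cone bundle over $M$, with $p$ the quotient by the projectivized geodesic flow); second, the joint smoothness of $\Phi$ in $(x,\gamma)$ needs slightly more than the $\rho_X$-argument you cite, since that argument fixes $X$ --- you need local sections of sky-tangent lines depending smoothly on \emph{both} variables, i.e.\ smooth dependence of Jacobi fields on the geodesic and on the point where they vanish, which is the same kind of input that the paper's appeal to \cite{Ba14} encapsulates. Finally, note that both arguments are tied to $\dim M=3$: your equidimensionality $\dim\mathcal{I}=\dim\mathbb{P}\left(\mathcal{H}\right)=4$ fails in higher dimension exactly as the paper's count $\dim\widehat{U}=\dim\mathcal{H}=5$ does.
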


\begin{proof}
Given any $P\in U_{0}^{\sim}$ there exist $X\in U_0$ and $\gamma\in X$ such that $P=T_{\gamma}X$. 
Then for this $X\in U_0$, because of  \cite[Thm. 1]{Ba14}, there exists a regular open neighbourhood $U\subset U_0$ of $X$ in $\Sigma$.
This means that the set of vectors $\widehat{U}=\bigcup_{X\in U}\widehat{T}X$ is a regular submanifold in $T\mathcal{U}\subset T\mathcal{N}$ where $\mathcal{U}=\left\{ \gamma\in \mathcal{N}: \gamma \cap S^{-1}\left(U\right)\neq \varnothing \right\}$ (notice that $\gamma \in \mathcal{U}$ if $\gamma$ belongs to some sky $X$ in $U$, but then $X \subset \mathcal{U}$, thus $U$ is the open set corresponding to $\mathcal{U}$ in the reconstructive topology).
Also observe that, since $\mathcal{H}\left(\mathcal{U}\right)= \mathcal{H}\cap T\mathcal{U}$ is a regular submanifold of $T\mathcal{U}$, then $\widehat{U}$ is also a regular submanifold of $\mathcal{H}\left(\mathcal{U}\right)$. 

Because $\dim \widehat{U} = \dim  \mathcal{H}\left(\mathcal{U}\right) = 5$ and $\mathcal{H}\left(\mathcal{U}\right)$ is open in the total space of the bundle $\mathcal{H}$ over $\mathcal{N}$ which has dimension 5 too,  then $\widehat{U}$ is open in $\mathcal{H}\left(\mathcal{U}\right)$ as well as in $\mathcal{H}$.
Since the restriction of the projection $\pi:\mathcal{H}\left(\mathcal{U}\right) \rightarrow \mathbb{P}\left( \mathcal{H}\left(\mathcal{U}\right) \right)$ is a submersion then $\pi\left( \mathcal{H}\left(\mathcal{U}\right) \right)$ is open in $\mathbb{P}\left( \mathcal{H}\left(\mathcal{U}\right) \right)$.
Observe that for $\xi\in T_{\gamma}X$ we have 
\[
\pi\left(\xi\right)=T_{\gamma}X \Longrightarrow \pi\left(\widehat{T}X\right)=X^{\sim} \Longrightarrow \pi\left(\widehat{U}\right)=U^{\sim}
\]
and since $\widehat{U}\subset \mathcal{H}\left(\mathcal{U}\right)$ is open, then $U^{\sim}=\pi\left( \widehat{U} \right)\subset \mathbb{P}\left(\mathcal{H}\left(\mathcal{U}\right)\right)$ is also open, therefore $U^{\sim}$ is open in $\mathbb{P}\left(\mathcal{H}\right)$. 
This shows that $U_0^{\sim}$ is open in $\mathbb{P}\left(\mathcal{H}\right)$.
\end{proof}

The next step is to define the space
\[
\widetilde{\mathcal{N}}=\left\{T_{\gamma}X\in \mathbb{P}\left(\mathcal{H}\right):\gamma\in X\in \Sigma \right\}=\bigcup_{X\in \Sigma}X^{\sim}  \, .
\]
\begin{lemma}\label{Ntilde-open}
$\widetilde{\mathcal{N}}$ is open in $\mathbb{P}\left(\mathcal{H}\right)$.
\end{lemma}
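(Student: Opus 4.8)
The plan is to recognize that $\widetilde{\mathcal{N}}$ is precisely the set $U_0^{\sim}$ of Lemma \ref{open-tilde} in the special case where the open set $V_0$ is taken to be the whole space-time $M$, so that the statement follows at once as a corollary of what has just been proved. The only point to verify is that this choice is admissible and that it reproduces exactly the set $\widetilde{\mathcal{N}}$.

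First I would observe that $M$ is trivially open in itself, so Lemma \ref{open-tilde} applies with $V_0 = M$. Since $\Sigma$ is by definition the image of the sky map $S\colon M \rightarrow \Sigma$, we have $U_0 = S(M) = \Sigma$, and therefore
\[
U_0^{\sim} = \bigcup_{X \in U_0} X^{\sim} = \bigcup_{X \in \Sigma} X^{\sim} = \widetilde{\mathcal{N}} \, .
\]
Lemma \ref{open-tilde} then gives immediately that $\widetilde{\mathcal{N}} = U_0^{\sim}$ is open in $\mathbb{P}\left(\mathcal{H}\right)$.

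Alternatively, since openness is a local property, one could argue pointwise: given $P \in \widetilde{\mathcal{N}}$, write $P = T_{\gamma}X$ with $X = S(x)$ for some $x \in M$ and $\gamma \in X$, choose any open $V_0 \subset M$ containing $x$, and set $U_0 = S(V_0)$. Then $X \in U_0$, hence $P \in U_0^{\sim} \subset \widetilde{\mathcal{N}}$, and by Lemma \ref{open-tilde} the set $U_0^{\sim}$ is an open neighbourhood of $P$ contained in $\widetilde{\mathcal{N}}$; as $P$ was arbitrary, $\widetilde{\mathcal{N}}$ is open. In either formulation there is no genuine obstacle: all the substantive work has already been carried out in establishing Lemma \ref{open-tilde}, and the present statement is essentially its global reading obtained by specializing $V_0$ to $M$.
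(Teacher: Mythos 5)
Your proposal is correct and takes essentially the same route as the paper: the paper's own proof covers $\Sigma$ by open sets $U_{\alpha}$ and writes $\widetilde{\mathcal{N}}=\bigcup_{\alpha}U_{\alpha}^{\sim}$ as a union of sets that are open by Lemma \ref{open-tilde}, which is your argument with an arbitrary open covering in place of the single admissible choice $V_0=M$, $U_0=S(M)=\Sigma$. In both cases the statement is an immediate corollary of Lemma \ref{open-tilde}, so nothing of substance differs.
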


\begin{proof}
If $\left\{ U_{\alpha}\right\}_{\alpha\in \Omega}$ is a open covering of $\Sigma$, then 
\[
\widetilde{\mathcal{N}}=\bigcup_{X\in \Sigma}X^{\sim} = \bigcup_{X\in \bigcup_{\alpha\in \Omega}U_{\alpha} }X^{\sim} = \bigcup_{\alpha\in \Omega}\left(\bigcup_{X\in U_{\alpha}}X^{\sim} \right)
\]
and, by Lemma \ref{open-tilde}, $\widetilde{\mathcal{N}}$ is union of the open sets $U_{\alpha}^{\sim}=\bigcup_{X\in U_{\alpha} }X^{\sim}$, then $\widetilde{\mathcal{N}}$ is open in $\mathbb{P}\left(\mathcal{H}\right)$.
\end{proof}

In order to generalize the present construction to a higher dimensional $M$, it is necessary that $\widetilde{\mathcal{N}}$ be a regular submanifold of $\mathbb{P}\left(\mathcal{H}\right)$. 
This is trivially implied by Lemma \ref{Ntilde-open} in case of a $3$--dimensional $M$ (but not necessarily true in higher dimensions).

\begin{corollary} In a three-dimensional strongly causal and null pseudo-convex conformal space-time,
$\widetilde{\mathcal{N}}$ is a regular submanifold of $\mathbb{P}\left(\mathcal{H}\right)$ that will be called the extended space  of light rays of $M$.
\end{corollary}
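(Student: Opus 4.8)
The plan is to observe that this corollary is an immediate formality once Lemma~\ref{Ntilde-open} is available, so the proof I would give is very short and merely packages a standard fact. The single tool I would invoke is the elementary observation that any open subset of a smooth manifold is, tautologically, a regular (embedded) submanifold of the same dimension: restricting an atlas of the ambient manifold to the open set produces charts realizing the inclusion as (the restriction of) the identity, which is automatically a diffeomorphism onto its image and hence an embedding. Thus the only hypothesis to verify is openness, and that is exactly what Lemma~\ref{Ntilde-open} provides.

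Before concluding I would record the relevant dimensions in the three-dimensional case, both to make the statement precise and to pinpoint where three-dimensionality is essential. Since $\dim M = 3$ we have $\dim \mathcal{N} = 2m-3 = 3$; the contact distribution $\mathcal{H}$ is a hyperplane field, so each fibre $\mathcal{H}_\gamma$ is $2$-dimensional and the total space of $\mathcal{H}$ has dimension $5$. Projectivizing the two-dimensional fibres yields a bundle $\mathbb{P}\left(\mathcal{H}\right)$ over $\mathcal{N}$ with fibre $\mathbb{P}^1$, whence $\dim \mathbb{P}\left(\mathcal{H}\right) = 3 + 1 = 4$. Consequently $\widetilde{\mathcal{N}}$ will be a four-dimensional regular submanifold of $\mathbb{P}\left(\mathcal{H}\right)$, and combining Lemma~\ref{Ntilde-open} with the general fact above completes the argument.

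The point I would stress is that there is no genuine obstacle in this corollary at all: the substantive work has already been carried out in the two preceding lemmas, and here three-dimensionality enters only through them. Indeed, Lemma~\ref{open-tilde} shows each $U_0^{\sim}$ is open in $\mathbb{P}\left(\mathcal{H}\right)$, the decisive input being the dimension count $\dim \widehat{U} = \dim \mathcal{H}\left(\mathcal{U}\right) = 5$ that forces $\widehat{U}$ to be open rather than merely immersed; Lemma~\ref{Ntilde-open} then patches these pieces together. I therefore expect the entire burden of the reasoning to rest on that earlier dimension count, and I would flag explicitly that in higher dimensions the analogous count fails, so openness of $\widetilde{\mathcal{N}}$ need not hold and the regular-submanifold property would have to be established by independent means, precisely the caveat already anticipated in the text.
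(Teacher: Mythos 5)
Your proof is correct and follows exactly the paper's own route: the paper likewise derives the corollary immediately from Lemma~\ref{Ntilde-open}, noting that openness in $\mathbb{P}\left(\mathcal{H}\right)$ trivially yields the regular-submanifold property in dimension three (while cautioning, as you do, that this fails in higher dimensions). Your added dimension counts and the identification of the count $\dim \widehat{U} = \dim \mathcal{H}\left(\mathcal{U}\right) = 5$ as the decisive input are accurate and consistent with the paper's argument.
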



\subsubsection{Identifying $M$ inside $\widetilde{\mathcal{N}}$}
We will begin by  expressing the manifold $\widetilde{\mathcal{N}}$ in a different way. 
Let $\gamma:I\rightarrow M$ be an inextensible future-directed parametrized light ray, then we define the curve $\widetilde{\gamma}:I\rightarrow \mathbb{P}\left(\mathcal{H}_{\gamma}\right)$ given by:
\[
\widetilde{\gamma}\left(s\right)=T_{\gamma}S\left(\gamma\left(s\right)\right)\in \mathbb{P}\left(\mathcal{H}_{\gamma}\right) \, ,
\]
and we denote its image by
$\widetilde{\gamma}=\left\{T_{\gamma}S\left(\gamma\left(s\right)\right)\in \mathbb{P}\left(\mathcal{H}_{\gamma}\right):s\in I\right\}$.  
Applying the previous definition of the space $\widetilde{\mathcal{N}}$, it is clear that we can express it in two different ways:
\[
\widetilde{\mathcal{N}}=\bigcup_{X\in \Sigma}X^{\sim} = \bigcup_{\gamma\in \mathcal{N}}\widetilde{\gamma} \, .
\]

It is important to observe that the curve $\widetilde{\gamma}$ is locally injective. 
Indeed, for any $s\in I$ there exists a  globally hyperbolic, causally convex and  normal  convex neighbourhood $V\subset M$ of $\gamma\left(s\right)$. 
This implies that there are no conjugate points in $V$ along $\gamma$, but this also means that for any $t_1,t_2\in I$ such that $\gamma\left(t_i\right) \in V$, $i=1,2$, we have that 
\[
T_{\gamma}S\left(\gamma\left(t_1\right)\right) \cap T_{\gamma}S\left(\gamma\left(t_2\right)\right) = \left\{\mathbf{0}\right\}.
\] 
Therefore it is clear that $T_{\gamma}S\left(\gamma\left(t_1\right)\right)\neq T_{\gamma}S\left(\gamma\left(t_2\right)\right)$.

\begin{definition}
Given a conformal manifold $\left( M,\mathcal{C}\right) $, we will say that 
\begin{enumerate}
\item  $M$ is \emph{null non--conjugate} if for any $x,y\in M$ such that $\gamma\in S\left(x\right) \cap S\left(y\right) \subset \mathcal{N}$ then $T_{\gamma}S\left(x\right) \cap T_{\gamma}S\left(y\right) = \left\{ 0 \right\}$.  
\item $M$ has \emph{tangent skies} if there exist skies $X,Y\in \Sigma$, $X \neq Y$, and $\gamma\in X\cap Y\subset \mathcal{N}$ satisfying $T_{\gamma}X = T_{\gamma}Y$.
\end{enumerate}
\end{definition}

Notice that the notion of null non-conjugate is equivalent to the statement that there are no conjugate points along a null geodesic because if there were a non-zero tangent vector $[J] \in T_{\gamma}S\left(x\right) \cap T_{\gamma}S\left(y\right) $ then, because of (\ref{tangent_sky}), there would be a representative Jacobi field $J$ vanishing at $x$ and $y$ and the points $x$, $y$ would be conjugate.  It is obvious that the null non--conjugate condition automatically implies absence of tangent skies for $M$ of any dimension. 
In the 3--dimensional case, the converse is also true, as it is shown in the following lemma.

\begin{lemma}\label{nullnc-notts}
If $M$ is a $3$--dimensional space-time  without tangent skies then it is also null non--conjugate.  
\end{lemma}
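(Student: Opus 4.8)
The plan is to argue by contraposition: I will assume that $M$ is \emph{not} null non--conjugate and deduce that $M$ must have tangent skies. Concretely, suppose there exist points $x,y\in M$ and a light ray $\gamma\in S\left(x\right)\cap S\left(y\right)$ with $T_{\gamma}S\left(x\right)\cap T_{\gamma}S\left(y\right)\neq \left\{0\right\}$. As recalled immediately after the definition, the null non--conjugate condition is precisely the statement that no two \emph{distinct} points of a null geodesic are conjugate, so we may take $x$ and $y$ to be distinct points lying on $\gamma$; the task is then to upgrade the nontrivial intersection into an honest coincidence of the two tangent lines.

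The key structural fact I would exploit is the dimension count specific to $m=3$: because $\dim S\left(x\right)=\dim S\left(y\right)=m-2=1$, both $T_{\gamma}S\left(x\right)$ and $T_{\gamma}S\left(y\right)$ are one--dimensional subspaces of $\mathcal{H}_{\gamma}$. Two one--dimensional subspaces whose intersection contains a nonzero vector must coincide, so the hypothesis immediately yields $T_{\gamma}S\left(x\right)=T_{\gamma}S\left(y\right)$.

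To conclude that this configuration genuinely exhibits tangent skies, I would set $X=S\left(x\right)$ and $Y=S\left(y\right)$ and check that $X\neq Y$. This follows from the injectivity of the sky map $S\colon M\to \Sigma$, which is a diffeomorphism by \cite[Cor. 17]{Ba15}; since $x\neq y$ we obtain $X\neq Y$, while $\gamma\in X\cap Y$ and $T_{\gamma}X=T_{\gamma}Y$ by the previous step. Hence $M$ has tangent skies, which establishes the contrapositive and therefore the lemma.

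The step I would flag as the crux is the one--dimensionality argument, since it is exactly here that the hypothesis $\dim M=3$ enters essentially. In higher dimensions $T_{\gamma}S\left(x\right)$ and $T_{\gamma}S\left(y\right)$ would have dimension $m-2>1$, and a nontrivial intersection would no longer force them to be equal; this is precisely why the converse---that absence of tangent skies implies null non--conjugacy---is asserted only in the three--dimensional setting and cannot be expected to survive in general.
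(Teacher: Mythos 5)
Your proof is correct, and its engine is identical to the paper's: argue by contraposition and observe that in dimension three the spaces $T_{\gamma}S\left(x\right)$ and $T_{\gamma}S\left(y\right)$ are one-dimensional subspaces of $\mathcal{H}_{\gamma}$, so any nonzero common vector forces them to coincide.

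Where you genuinely differ from the paper is in how the distinctness $X\neq Y$ of the two skies is secured. The paper's proof begins ``Given $X\neq Y\in \Sigma$ with $\gamma\in X\cap Y$ verifying $\widehat{T}_{\gamma}X\cap \widehat{T}_{\gamma}Y\neq \varnothing$'', i.e., it reads the failure of null non--conjugacy as a statement about \emph{distinct skies}, so distinctness is part of the hypothesis and costs nothing. You instead read it as a statement about \emph{distinct points} $x\neq y$ on $\gamma$ (conjugate points, matching the paper's remark after the definition) and then convert $x\neq y$ into $S\left(x\right)\neq S\left(y\right)$ by invoking injectivity of the sky map via \cite[Cor. 17]{Ba15}. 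That conversion silently imports the sky-separating hypothesis, which is not among the hypotheses of the lemma: the standing assumptions of the section are strong causality and null pseudo-convexity, and sky-separation is introduced only later as an additional condition. The difference is not cosmetic. In a refocussing space-time such as the three-dimensional Einstein cylinder $\mathbb{R}\times \mathbb{S}^{2}$ (globally hyperbolic, hence within the standing assumptions), two points conjugate along a null geodesic are precisely antipodal refocussing points and have \emph{equal} skies; such a space-time has no tangent skies and yet has null-conjugate points. Thus under your ``distinct points'' reading the lemma is false without sky-separation, and your appeal to \cite[Cor. 17]{Ba15} is exactly what rescues the argument, whereas under the paper's ``distinct skies'' reading the statement holds with no extra hypothesis and your extra step is unnecessary. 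Your closing observation about why the converse is specific to $m=3$ is exactly right and matches the paper's remark.
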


\begin{proof}
Given $X \neq Y\in \Sigma$ with $\gamma\in X \cap Y$ verifying $\widehat{T}_{\gamma}X \cap \widehat{T}_{\gamma}Y \neq \varnothing$, since $\dim T_{\gamma}X = \dim T_{\gamma}Y =1$ then we have $T_{\gamma}X = T_{\gamma}Y$ and therefore $X$ and $Y$ are tangent skies at $M$.
\end{proof}

We have seen that in the 3--dimensional case, $\widetilde{\mathcal{N}}$ is a regular submanifold of $\mathbb{P}\left(\mathcal{H}\right)$. Then if $M$ does not have tangent skies, if $X^\sim \cap Y^\sim \neq \emptyset$ then $T_\gamma X = T_\gamma Y$ for some $\gamma$, then $X = Y$ and $X^\sim = Y^\sim$, hence $\widetilde{\mathcal{N}}$ is foliated by the leaves $X^{\sim}=\left\{ T_{\gamma}X:\gamma\in X \right\}$.   It was proved in \cite{Ba14} that provided that the space-time $M$ is strongly causal and sky-separating (i.e., that the sky map $S$ is injective), there is a basis for the reconstructive topology made of regular open sets, in particular, made of normal open sets where there are no tangent skies (\cite[Defs. 2,3, Thm. 1]{Ba14}).   In \cite{Ba15}, it was also proved that such conditions guarantee that the space of skies with its induced smooth structure is diffeomorphic to $M$, hence we may conclude these remarks by stating that if $M$ is strongly causal and their skies separate points, then the family of regular submanifolds $X^{\sim}$ provide a foliation of $\widetilde{\mathcal{N}}$.
Moreover, since each $X^{\sim}$ is compact, the foliation $\mathcal{D}^{\sim}$ whose leaves are the compact submanifolds $X^\sim$, is regular and the space of leaves:
\[
\Sigma^{\sim} = \widetilde{\mathcal{N}} / \mathcal{D}^{\sim} \, ,
\]
inherits a canonical structure of smooth manifold.

The next proposition gives us the geometric equivalence between $\Sigma^{\sim}$ and its corresponding conformal manifold $M$.  We present it in a general form valid for space-times of dimension higher that 3.

\begin{proposition}\label{conj20}
Let $(M, \mathcal{C})$ be a $m$-dimensional, $m\geq 3$, strongly causal, sky-separating space-time such that the extended space $\widetilde{\mathcal{N}}$ is a regular submanifold of the Grassmannian bundle $\mathrm{Gr}^{m-2}(\mathcal{H})$, then the map $S^{\sim} \colon M \rightarrow  \Sigma^{\sim}$ defined by $S^{\sim}\left(p\right)= S\left(p\right)^{\sim}$ is a diffeomorphism.
\end{proposition}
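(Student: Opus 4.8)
The plan is to factor $S^{\sim}$ through the diffeomorphism already at our disposal and then to produce an explicit smooth inverse. Since $S\colon M\to\Sigma$ is a diffeomorphism (Cor.~17 in \cite{Ba15}) and $S^{\sim}=\Phi\circ S$ with $\Phi\colon\Sigma\to\Sigma^{\sim}$, $X\mapsto[X^{\sim}]$, it is equivalent, and notationally lighter, to argue directly that $S^{\sim}$ is a smooth bijection with smooth inverse. First I would settle bijectivity. Surjectivity is immediate because every leaf of $\mathcal{D}^{\sim}$ is of the form $X^{\sim}$ with $X=S(x)$ for some $x\in M$, so $[X^{\sim}]=S^{\sim}(x)$. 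For injectivity, if $S^{\sim}(x)=S^{\sim}(y)$ then $X^{\sim}$ and $Y^{\sim}$ are the same leaf, where $X=S(x)$, $Y=S(y)$; in particular $X^{\sim}\cap Y^{\sim}\neq\varnothing$, so $T_{\gamma}X=T_{\gamma'}Y$ for some $\gamma,\gamma'$, and applying the restriction $\varpi$ to $\widetilde{\mathcal{N}}$ of the bundle projection $\mathrm{Gr}^{m-2}(\mathcal{H})\to\mathcal{N}$ forces $\gamma=\gamma'$ and hence $T_{\gamma}X=T_{\gamma}Y$. The absence of tangent skies, which the strong causality and sky-separating hypotheses guarantee through the foliation of $\widetilde{\mathcal{N}}$ by the leaves $X^{\sim}$ recalled before the statement, then gives $X=Y$, and injectivity of $S$ yields $x=y$.

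Second, I would prove that $S^{\sim}$ is smooth by exhibiting local lifts through the submersion $q\colon\widetilde{\mathcal{N}}\to\Sigma^{\sim}$. Around a point $p$ choose a smooth future null direction field, that is, a local section of $\mathbb{N}^{+}$, which assigns to each nearby $x$ a light ray $\gamma_{x}\in S(x)$ depending smoothly on $x$; setting $\tilde f(x)=T_{\gamma_{x}}S(x)$ defines a smooth map $\tilde f\colon U\to\widetilde{\mathcal{N}}$, by the smoothness of the maps $\rho_{X}$ in \eqref{map-tangentsky} together with the smooth dependence of $\widehat U=\bigcup_{X}\widehat TX$ on $X$ established in Lemma~\ref{open-tilde}. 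Since $q\circ\tilde f=S^{\sim}$ on $U$ and $q$ is a submersion, $S^{\sim}$ is smooth.

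Third, and this is where the real work lies, I would construct the inverse as the descent of the map $\Theta\colon\widetilde{\mathcal{N}}\to M$ sending $T_{\gamma}X$ to the unique point $x\in\gamma$ with $X=S(x)$. By \eqref{tangent_sky}, a nonzero class in $T_{\gamma}X$ is represented by a Jacobi field along $\gamma$ vanishing (mod $\gamma'$) exactly at the parameter $s_{0}$ with $\gamma(s_{0})=x$; since $M$ is null non--conjugate this zero is unique and transverse. Working over a chart of $\widetilde{\mathcal{N}}$ I would take a smooth frame of the tautological bundle $P\mapsto P\subset\mathcal{H}_{\gamma}$, realize its sections as Jacobi fields along the smoothly varying geodesic $\gamma=\varpi(P)$, and solve $J(s)\in\mathbb{R}\,\gamma'(s)$; transversality (no conjugate points) lets the implicit function theorem produce a smooth $s_{0}(P)$, whence $\Theta(P)=\gamma(s_{0}(P))$ is smooth. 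Because $\Theta$ is constant on the leaves $X^{\sim}=\Theta^{-1}(x)$, it factors as $\Theta=\overline\Theta\circ q$, and since $q$ is a surjective submersion the induced $\overline\Theta\colon\Sigma^{\sim}\to M$ is smooth; one checks directly that $\overline\Theta\circ S^{\sim}=\mathrm{id}_{M}$ and $S^{\sim}\circ\overline\Theta=\mathrm{id}_{\Sigma^{\sim}}$, so $\overline\Theta=(S^{\sim})^{-1}$.

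The main obstacle is precisely this last step: recovering the base point $x$ \emph{smoothly} from the Grassmann datum $T_{\gamma}X$, whereas bijectivity and the smoothness of $S^{\sim}$ are comparatively formal. All the remaining ingredients—the submersion property of $q$, the smoothness of the $\rho_{X}$, and the no--conjugate--points transversality—are already available, and once $\Theta$ is smooth the identity $\Theta\circ\tilde f=\mathrm{id}$ also exhibits $\Theta$ as a submersion, after which the diffeomorphism is immediate. I would expect the general $m\geq3$ argument to proceed verbatim whenever $\widetilde{\mathcal{N}}$ is a regular submanifold, with $(m-2)$-planes replacing lines throughout.
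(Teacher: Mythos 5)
Your proposal is correct for the three-dimensional case, but it follows a genuinely different route from the paper's. The paper never constructs the inverse by hand: it works locally over a globally hyperbolic, causally convex, convex normal neighbourhood $V$, invokes \cite[Thms.~1, 2]{Ba14} to obtain that $\widehat{U}=\bigcup_{X\in S(V)}\widehat{T}X$ is embedded in $T\mathcal{N}$, carries the regular distribution $\widehat{\mathcal{D}}$ with leaves $\widehat{T}X$, and that the leaf space $\widehat{U}/\widehat{\mathcal{D}}$ is \emph{already known} to be diffeomorphic to $V$ via the sky map; it then descends the submersion $\pi\colon\widehat{U}\to U^{\sim}$ to a bijection $\widehat{\pi}\colon\widehat{U}/\widehat{\mathcal{D}}\to U^{\sim}/\mathcal{D}^{\sim}$ of leaf spaces, which is a diffeomorphism because a map defined on a quotient by a regular distribution is smooth exactly when its lift through the quotient submersion is (\cite[Prop.~6.1.2]{BC70}, applied in both directions). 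This makes $S^{\sim}$ a local diffeomorphism, and your same bijectivity argument (absence of tangent skies plus sky-separation) then upgrades it to a global one. What the paper's route buys is dimension-independence: nothing in the two-quotient diagram cares whether the leaves are circles or $(m-2)$-spheres. What your route buys is explicitness and self-containedness: you exhibit $(S^{\sim})^{-1}$ concretely as the descent of the base-point map $\Theta$, and you do not need the identification $\widehat{U}/\widehat{\mathcal{D}}\simeq V$ as a black box.

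However, as a proof of Proposition~\ref{conj20} \emph{as stated} (arbitrary $m\geq 3$), your last step has a gap, and the closing claim that the argument extends verbatim is too quick. First, the hypotheses yield only the absence of tangent skies, i.e.\ uniqueness of the point $x$ with $T_{\gamma}S(x)=P$; they yield null non-conjugacy only in dimension three (Lemma~\ref{nullnc-notts}). For $m>3$ the two notions differ: an individual Jacobi field of your frame may vanish (mod $\gamma'$) at several parameters even when no two skies are tangent, so ``this zero is unique'' fails fieldwise, and only the \emph{common} zero of the whole $(m-2)$-plane is unique. (Transversality, by contrast, is free in every dimension and needs no curvature hypothesis: a class with $J(s_0)\equiv 0$ and $J'(s_0)\equiv 0$ (mod $\gamma'$) is the zero class, by uniqueness of Jacobi fields.) Second, for $m>3$ the condition $J_i(s)\in\mathbb{R}\gamma'(s)$ for all members of a frame of $P$ is an overdetermined system, $(m-2)$ equations valued in the $(m-2)$-dimensional space $\gamma'^{\perp}/\mathbb{R}\gamma'$, in the single unknown $s$: the implicit function theorem applied to one scalar component produces a branch $s(P)$ solving that component only, and for nearby $P$ nothing guarantees the remaining equations hold along this branch; identifying it with the true common zero would require the continuity of $\Theta$, which is part of what is being proved. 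In dimension three this difficulty evaporates: there is one scalar equation in one unknown, and any solution $s$ of it satisfies $P\subset T_{\gamma}S(\gamma(s))$, hence $P=T_{\gamma}S(\gamma(s))$ by dimension count and then $\gamma(s)=\Theta(P)$ by absence of tangent skies, so the IFT branch \emph{is} the inverse and your argument closes. In short: your proposal is a valid alternative proof for $m=3$, but for the general statement you must either repair the last step (e.g.\ establish continuity of $\Theta$ first, or treat the full system via the rank theorem) or fall back on the paper's quotient argument.
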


\begin{proof}
Given a   globally hyperbolic, causally convex and convex normal  open set $V\subset M$, we consider the set of skies $U=S\left(V\right)\subset \Sigma$, the set of vectors $\widehat{U}=\bigcup_{X\in U}\widehat{T}X$ and the set $U^{\sim}=\bigcup_{X\in U}X^{\sim}$. 
By \cite[Thm. 1]{Ba14}, the inclusion $\widehat{U}\hookrightarrow T\mathcal{N}$ is an embedding, and consider the submersion on its range $\pi \colon \mathcal{H}\rightarrow \mathrm{Gr}^{m-2}\left(\mathcal{H}\right)$. 
For $\xi\in T_{\gamma}X\subset \widehat{U}$ then we have that $\pi\left(\xi\right)=T_{\gamma}X$, and then 
\begin{equation}\label{im-tilde}
\pi\left(\widehat{T}X\right)=X^{\sim}
\end{equation}
hence 
\begin{equation}\label{im-tilde-2}
\pi\left(\widehat{U}\right)=U^{\sim}
\end{equation}
So, since $\widehat{U}$ and $U^{\sim}$ are open sets in $\mathcal{H}$ and $\widetilde{\mathcal{N}}$ respectively, it is clear that the restriction $\pi:\widehat{U}\rightarrow U^{\sim}$ is submersion. 
We also know \cite[Thm. 2]{Ba14}, that there exists a regular distribution $\widehat{\mathcal{D}}$ in $\widehat{U}$ whose leaves are $\widehat{T}X=\bigcup_{\gamma\in X}T_{\gamma}X$ with $X\in U$.

Equation (\ref{im-tilde}) implies that there exist a bijection 
\[
\begin{tabular}{rcl}
$\widehat{\pi}: \widehat{U}/ \widehat{\mathcal{D}}$ & $\rightarrow$ & $U^{\sim}/\mathcal{D}^{\sim}$ \\
 $\widehat{T}X$ & $\mapsto$ & $X^{\sim}$
 \end{tabular}
\]
and we obtain the following diagram 
$$
\begin{tabular}{ccc}
$\widehat{U}$ & $\overset{\pi}{\longrightarrow }$ & $U^{\sim}$ \\
$p_1 \downarrow $ &  & $\downarrow p_2$ \\
$\widehat{U}/\widehat{\mathcal{D}}$ & $\underset{\widehat{\pi}}{\rightarrow }$ & $%
U^{\sim}/\mathcal{D}^{\sim}$ \\
\end{tabular}%
$$%
where $p_1$ and $p_2$ are the corresponding quotient maps.
Since $\widehat{\mathcal{D}}$ and $\mathcal{D}^{\sim}$ are regular distributions there exists differentiable structures in $\widehat{U}/ \widehat{\mathcal{D}}$ and $U^{\sim}/\mathcal{D}^{\sim}$ such that $p_1$ and $p_2$ are submersions. 
In this case, $p_2 \circ \pi$ is another submersion, then since both $p_1$ and $p_2 \circ \pi$ are open and continuous, it is clear that the bijection $\widehat{\pi}$ is a homeomorphism.

On the other hand, since $p_1$ is a submersion and $p_2 \circ \pi$ is differentiable, by \cite[Prop. 6.1.2]{BC70}, we have that $\widehat{\pi}$ is differentiable. 
Analogously, since $p_2 \circ \pi$ is a submersion and $p_1$ is differentiable, then $\widehat{\pi}^{-1}$ is differentiable, therefore $\widehat{\pi}$ is a diffeomorphism. 

It is known \cite[Thm. 2]{Ba14} that the quotient $\widehat{U}/ \widehat{\mathcal{D}}$ is diffeomorphic to $V\subset M$ by means of the sky map $S$.
So, we have shown that
$$
\begin{tabular}{rcl}
$S^{\sim} \colon V$ & $\rightarrow$ & $U^{\sim}/\mathcal{D}^{\sim}$ \\
  $p$ & $\mapsto$ & $S^{\sim}\left(p\right)=S\left(p\right)^{\sim}$ 
\end{tabular}
$$
is a diffeomorphism. 

Under the hypothesis of absence of tangent skies, then given $x\neq y\in M$ and $X=S\left(x\right)$, $Y=S\left(y\right)$, we have that $T_{\gamma}X \neq T_{\gamma}Y$, hence $X^{\sim}=S^{\sim}\left(x\right)\neq S^{\sim}\left(y\right)=Y^{\sim}$ implying the injectiveness of the map $S^{\sim} : M \rightarrow  \Sigma^{\sim}$.  
The surjectiveness of  $S^{\sim}$ is obtained by definition, hence it is also a bijection. 
Finally, since $S^{\sim}$ is a bijection and a local difeomorphism at every point, then it is a global diffeomorphism.
\end{proof}


\subsubsection{$\widetilde{\mathcal{N}}$ is a smooth manifold with boundary}
For a parametrized inextensible light ray $\gamma:\left(a,b\right)\rightarrow M$ we define
\begin{equation}\label{Low-field}
\begin{array}{l}
\ominus_{\gamma} = \lim_{s\mapsto a^{+}}\widetilde{\gamma}\left(s\right) \\
\\
\oplus_{\gamma} = \lim_{s\mapsto b^{-}}\widetilde{\gamma}\left(s\right)
\end{array}
\end{equation}
when the limits exist. 

It is clear that if $M$ is 3-dimensional without tangent skies (recall that  in dimension 3 this is equivalent to be non null-conjugate and is automatically satisfied by strongly causal sky separating space-times) then $\tilde{\gamma}$ is injective and its range $\tilde{\gamma}(I) \subset \mathbb{P}\left(\mathcal{H}_{\gamma}\right)\simeq \mathbb{S}^1$, $I= (a,b)$, is an arc-interval in the circle (see Fig. \ref{diapositiva7}), hence there exist the limits in (\ref{Low-field}).   (Notice that in dimension higher than 3, the absence of tangent skies will imply the injectivity of $\tilde{\gamma}$; the compactness of $\mathrm{Gr}^{m-2}(\mathcal{H}_\gamma)$ will guarantee the existence of accumulation points for the set $\tilde{\gamma}(I)$, however this will not suffice to prove the existence of the limits (\ref{Low-field})).
Then under the conditions above it is possible to define the maps
$$
\begin{tabular}{rcl}
$\ominus \colon \mathcal{N}$ & $\rightarrow$ & $\mathbb{P}\left(\mathcal{H}\right)$ \\
 $\gamma$ & $\mapsto$ & $\ominus\left(\gamma\right)=\ominus_{\gamma}$ 
\end{tabular}
\hspace{7mm} \mathrm{and} \hspace{7mm}
\begin{tabular}{rrcl}
$\oplus:$ & $\mathcal{N}$ & $\rightarrow$ & $\mathbb{P}\left(\mathcal{H}\right)$ \\
 & $\gamma$ & $\mapsto$ & $\oplus\left(\gamma\right)=\oplus_{\gamma}$ 
\end{tabular}
$$
and the set 
\[
\overline{\widetilde{\mathcal{N}}}= \bigcup_{\gamma\in \mathcal{N}}\left( \widetilde{\gamma} \cup \left\{ \ominus_{\gamma},\oplus_{\gamma} \right\} \right).
\]

We will analyze now the structure of $\overline{\widetilde{\mathcal{N}}}$ proving that, under natural conditions, it is a smooth manifold with boundary.

First, we will construct local coordinates in $\mathcal{H}$ and $\mathbb{P}\left(\mathcal{H}\right)$ using the ones in $T\mathcal{N}$ defined by the initial values of Jacobi fields at a local Cauchy surface \cite{Ba14}.

Indeed, given a set $V\subset M$ we define $U=S\left(V\right)\subset \Sigma$ and $\mathcal{U}=\bigcup_{X\in U}X\subset \mathcal{N}$.
Let us assume that $V$ is a  globally hyperbolic, causally convex and convex normal  open set in such a way that $\left(V,\varphi =\left( t,x,y \right)\right) $ is a coordinate chart such that the local hypersurface $C\subset V$ defined by $t=0$ is a spacelike (local) Cauchy surface.
Let $\left\{E_{1},E_{2} ,E_{3}\right\} $ be an orthonormal frame in $V$ such that $E_{1} $ is a future oriented timelike vector field in $V$. 
Normalizing the timelike component along $E_1$, writing the tangent vectors of null geodesics at $C$ as $\gamma'\left(0\right)=E_1+u^2 E_2 + u^3 E_3$ and since $\gamma$ is light-like, then $(u^2)^2+(u^3)^2=1$. 
So, we can parametrize all the light rays passing through $\gamma\left(0\right)$ by $u^2=\cos \theta$ and $u^3=\sin \theta$. 
This permits us to define local coordinates in $\mathcal{U}$ by
\[
\psi:\mathcal{U}\rightarrow \mathbb{R}^{3};\hspace{1cm}\psi=\left(x,y,\theta\right)
\]

Moreover, in this case we have that $U\subset \Sigma$ is a regular set in the sense of \cite[Def. 13]{Ba15}, hence $\widehat{U}=\bigcup_{X\in U}\widehat{T}X$ is a regular submanifold of $T\mathcal{U}\subset T\mathcal{N}$ and the inclusion $\widehat{U}\hookrightarrow T\mathcal{N}$ is an embedding.

Consider $\gamma\in \mathcal{U}$ and $J\in T_{\gamma}\mathcal{U}$, since $J$ can be identified with a Jacobi field along the stated parametrization of $\gamma$, we can write $J\left(0\right)=w^1 E_1 + w^2 E_2 +w^3 E_3$ and $J'\left(0\right)=v^1 E_1 + v^2 E_2 +v^3 E_3$. 
Since $\mathbf{g}\left(\gamma', J'\right)=0$ and considering the equivalence $\mathrm{mod}\gamma'$, then denoting $\overline{w}^k=w^k-w^1 u^k$ and $\overline{v}^k=v^k-v^1 u^k$ we have that $\overline{v}^2 u^2 + \overline{v}^3 u^3=0$. 
Supposing without lack of generality that $u^2 \neq 0$ since $\left(u^2, u^3\right)\neq \left(0,0\right)$, we can have $v = \overline{v}^3$, $\overline{w}^2$ and $\overline{w}^3$ as coordinates in $T\mathcal{U}$.
So, we obtain the chart
\[
\overline{\psi}:T\mathcal{U}\rightarrow \mathbb{R}^{6};\hspace{1cm}\overline{\psi}=\left(x,y,\theta,\overline{w}^2,\overline{w}^3,v\right)
\]

Let us define $\mathcal{H}\left(\mathcal{U}\right)=\mathcal{H}\cap T\mathcal{U} = \bigcup_{\gamma\in \mathcal{U}}\mathcal{H}_{\gamma}$.  Now we can construct coordinates in $\mathcal{H}\left(\mathcal{U}\right)\subset T\mathcal{U}$ from $\overline{\psi}$. 
If $J\in \mathcal{H}_{\gamma}$ then $\mathbf{g}\left(\gamma', J\right)=0$ and therefore 
\[
\overline{w}^2 u^2 +  \overline{w}^3 u^3 =0
\]
Again, since $u^2 \neq 0$, we have $\overline{w}^2=-\frac{1}{u^2} \overline{w}^3 u^3$ and we can consider $w=\overline{w}^3$ as a coordinate for $\mathcal{H}\left(\mathcal{U}\right)$, then 
\[
\varphi:\mathcal{H}\left(\mathcal{U}\right)\rightarrow \mathbb{R}^{5} \, ;\qquad \varphi=\left(x,y,\theta,w,v\right)
\]
is a coordinate chart. 

The projection $\pi=\left.\pi^{T\mathcal{N}}_{\mathbb{P}\left(T\mathcal{N} \right)}\right|_{\widehat{\mathcal{H}}}:\widehat{\mathcal{H}}\rightarrow \mathbb{P}\left(\mathcal{H}\right)$ allows us to define coordinates in $\mathbb{P}\left(\mathcal{H}\right)$ as follows. 
From the coordinates $\varphi=\left(x,y,\theta,w,v\right)$, if we consider $J\in \mathcal{H}_{\gamma}$ and $\overline{J}=\lambda J$ for some $\lambda\in \mathbb{R}$, then 
\[
\left\{ 
\begin{array}{l}
\overline{J}\left(0\right) = \lambda J\left(0\right) = \lambda w^1 E_1 + \cdots + \lambda w^m E_m  \\
\overline{J}'\left(0\right) = \lambda J'\left(0\right) = \lambda v^1 E_1 + \cdots + \lambda v^m E_m
\end{array}
\right.
\]
thus the coordinates $w$ and $v$ verify 
\[
\left\{ 
\begin{array}{l}
w\left(\overline{J}\right) = \lambda w\left(J\right)  \\
v\left(\overline{J}\right) = \lambda v\left(J\right) 
\end{array}
\right.
\]
then the homogeneous coordinate $\phi=\left[w:v\right]$ verifies 
\[
\phi\left(\overline{J}\right)=\left[w\left(\overline{J}\right):v\left(\overline{J}\right)\right]=
\left[w\left(J\right):v\left(J\right)\right]=\phi\left(J\right)
\]
and defines the element $\mathrm{span}\left\{J\right\}\in \mathbb{P}\left(\mathcal{H}_{\gamma}\right)$.
Therefore, we obtain that 
\begin{equation}\label{coordinatePH}
\widetilde{\varphi}:\mathbb{P}\left(\mathcal{H}\left(\mathcal{U}\right)\right)\rightarrow \mathbb{R}^{4};\hspace{1cm}\widetilde{\varphi}=\left(x,y,\theta,\phi\right)
\end{equation}
is a coordinate chart in $\mathbb{P}\left(\mathcal{H}\right)$.
Observe that, equivalently, we can also consider $\phi$ as the polar coordinate $\phi=\arctan (w/v)$.

Then we will use local coordinate charts $\left(\mathbb{P}\left(\mathcal{H}\left(\mathcal{U}\right)\right),\widetilde{\varphi}=\left(x,y,\theta,\phi\right)\right)$ as in (\ref{coordinatePH}), where   $\mathcal{U}=\left\{ \gamma \in \mathcal{N}:\gamma\cap V \neq \varnothing \right\}$ is open in $\mathcal{N}$, to describe $\overline{\widetilde{\mathcal{N}}}$ as a manifold with boundary.
In these charts, the coordinate $\phi$ describes the entire $\widetilde{\gamma}$ as well as its limit points. 
Also observe that a light ray $\gamma$ is defined by a fixed $\left(x,y,\theta\right)=\left(x_0,y_0,\theta_0\right)$.

Every fibre $\mathbb{P}\left(\mathcal{H}_{\gamma}\right)$ can be represented by a circumference as shown in Figure \ref{diapositiva7}, where $\widetilde{\gamma}$ is a connected segment of it with endpoints $\ominus_{\gamma}$ and $\oplus_{\gamma}$.

\begin{figure}[h]
  \centering
    \includegraphics[scale=0.25]{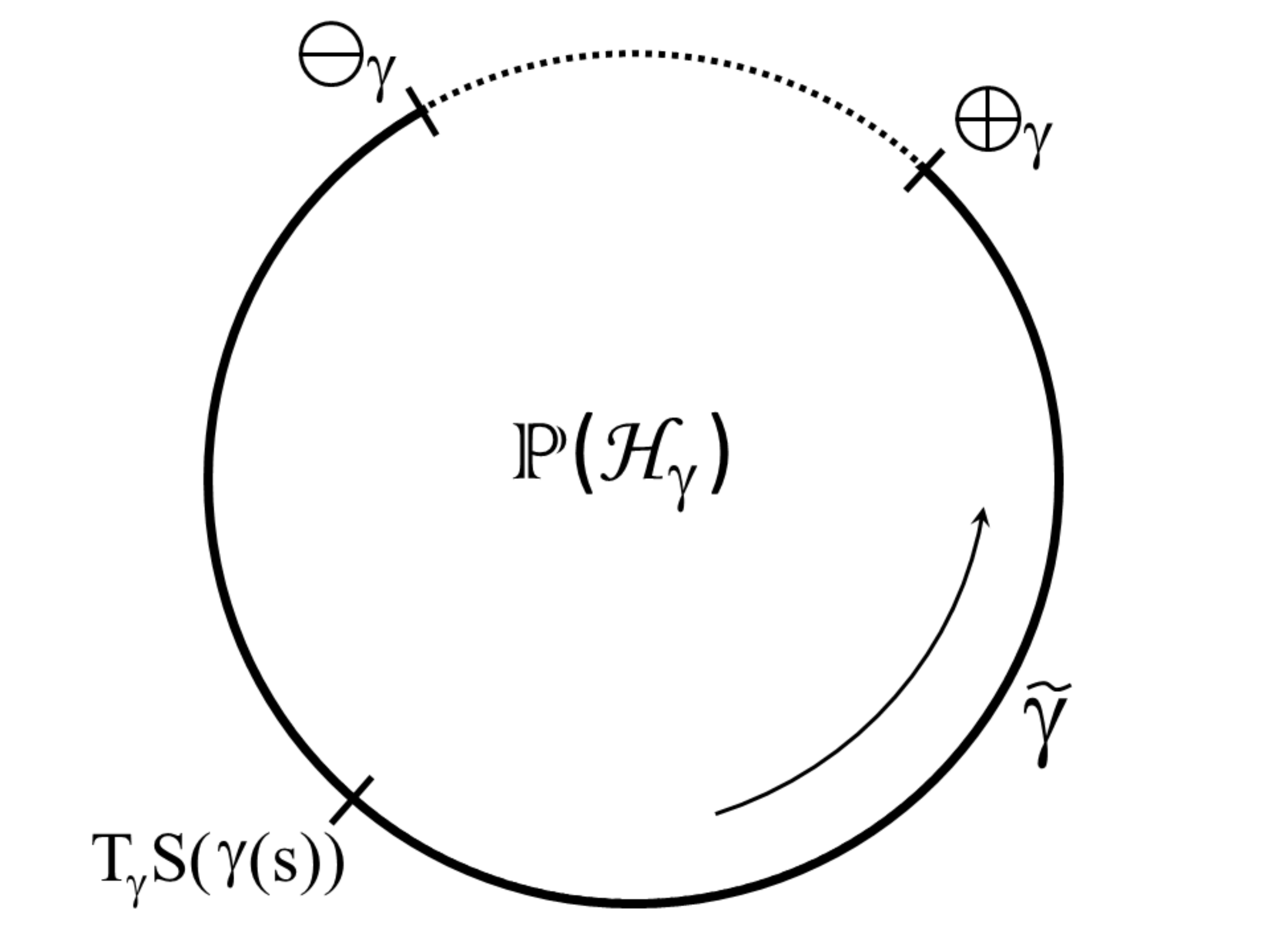}
  \caption{Representation of $\mathbb{P}\left(\mathcal{H}_{\gamma}\right)$.}
  \label{diapositiva7}
\end{figure} 

\begin{proposition}\label{prop-Low-boundary}
Let $M$ be a 3--dimensional null non--conjugate space-time.  Assume that $\ominus$ and $\oplus$ are differentiable distributions.
If $\mathcal{Q}=\left\{ \ominus_{\gamma},\oplus_{\gamma}\in \mathbb{P}\left(\mathcal{H}\right): \ominus_{\gamma}\neq \oplus_{\gamma} \right\}$, then $\overline{\widetilde{\mathcal{N}}}$ is a manifold with boundary the closure $\overline{\mathcal{Q}}$.
\end{proposition}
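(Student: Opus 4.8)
The plan is to produce an atlas for $\overline{\widetilde{\mathcal{N}}}$ consisting of two kinds of charts — ordinary interior charts on $\widetilde{\mathcal{N}}$ and half-space charts along the endpoint set — and then to identify the resulting boundary with $\overline{\mathcal{Q}}$. Throughout I would work in a coordinate chart $\left(\mathbb{P}\left(\mathcal{H}\left(\mathcal{U}\right)\right),\widetilde{\varphi}=\left(x,y,\theta,\phi\right)\right)$ of the form (\ref{coordinatePH}), in which a light ray $\gamma$ is the set $\left(x,y,\theta\right)=\mathrm{const}$ and the curve $\widetilde{\gamma}$ is exactly the $\phi$-arc inside the fibre circle $\mathbb{P}\left(\mathcal{H}_{\gamma}\right)\simeq\mathbb{S}^1$, with endpoints $\ominus_{\gamma}$ and $\oplus_{\gamma}$ (cf. Fig. \ref{diapositiva7}). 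The interior points are immediate: by Lemma \ref{Ntilde-open} the set $\widetilde{\mathcal{N}}$ is open in the smooth $4$-manifold $\mathbb{P}\left(\mathcal{H}\right)$, so each $P\in\widetilde{\mathcal{N}}$ already carries a chart onto an open subset of $\mathbb{R}^4$ and lies in the interior of $\overline{\widetilde{\mathcal{N}}}$.

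The core of the argument is the construction of half-space charts at points of $\mathcal{Q}$. Fixing $P\in\mathcal{Q}$, say $P=\ominus_{\gamma}$ with $\ominus_{\gamma}\neq\oplus_{\gamma}$, I would use the hypothesis that $\ominus$ is a differentiable distribution: in the chart above it is the graph of a smooth function, that is, $\ominus\left(\mathcal{N}\right)$ is locally $\left\{\phi=\phi_{\ominus}\left(x,y,\theta\right)\right\}$ for a smooth $\phi_{\ominus}$. Then $\Psi=\left(x,y,\theta,\phi-\phi_{\ominus}\left(x,y,\theta\right)\right)$ is a diffeomorphism of the chart onto its image carrying $\ominus\left(\mathcal{N}\right)$ to the hyperplane $\left\{\bar{\phi}=0\right\}$, where $\bar{\phi}$ denotes the last coordinate. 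Because $\ominus_{\gamma}\neq\oplus_{\gamma}$, continuity of both $\ominus$ and $\oplus$ yields a neighbourhood of $\gamma$ in $\mathcal{N}$ on which the two endpoints stay distinct and, after shrinking, on which every arc $\widetilde{\gamma'}$ leaves its endpoint $\ominus_{\gamma'}$ on one fixed side; in these coordinates $\overline{\widetilde{\mathcal{N}}}$ is then described by $\left\{\bar{\phi}\geq 0\right\}$ (or $\left\{\bar{\phi}\leq 0\right\}$). This is the required half-space chart, and the same argument with the appropriate orientation handles $P=\oplus_{\gamma}$. Since the transition maps among these charts and the interior charts are compositions of the smooth straightening diffeomorphisms, they are smooth, so the charts assemble into a $C^{\infty}$ atlas with boundary.

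Finally I would identify the boundary set. The construction shows that every point of $\mathcal{Q}$ is a boundary point while every point of $\widetilde{\mathcal{N}}$ is interior; since the boundary of a manifold with boundary is always closed, it must contain $\overline{\mathcal{Q}}$, and conversely one checks that the points of $\overline{\widetilde{\mathcal{N}}}$ lying outside $\overline{\mathcal{Q}}$ are interior, giving boundary exactly $\overline{\mathcal{Q}}$. The delicate point — and the step I expect to be the main obstacle — is the behaviour on the coincidence locus $\left\{\gamma:\ominus_{\gamma}=\oplus_{\gamma}\right\}$, which is precisely where $\mathcal{Q}$ may fail to be closed. There the fibre arc closes up into the whole circle $\mathbb{P}\left(\mathcal{H}_{\gamma}\right)$, the one-sided picture used above degenerates, and one must show that such limit points of $\mathcal{Q}$ still admit a compatible half-space chart, so that the two endpoint sheets $\ominus\left(\mathcal{N}\right)$ and $\oplus\left(\mathcal{N}\right)$ glue along $\overline{\mathcal{Q}}$ into a single boundary hypersurface. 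Controlling this requires using null non--conjugacy (via Lemma \ref{nullnc-notts}) to ensure that $\widetilde{\gamma}$ is an embedded monotone arc that does not wind around the fibre except in this limiting regime, together with the differentiability of $\ominus$ and $\oplus$ to match the smooth structures approaching from each side.
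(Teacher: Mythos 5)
Your proposal follows essentially the same route as the paper's own proof: working in the chart $\widetilde{\varphi}=\left(x,y,\theta,\phi\right)$ of (\ref{coordinatePH}), the paper likewise treats points of $\widetilde{\mathcal{N}}$ as interior via its openness (Lemma \ref{Ntilde-open}), describes $\overline{\widetilde{\mathcal{U}}}$ as the slab $\{\left(x,y,\theta,\phi\right):\phi_{\ominus}\leq\phi\leq\phi_{\oplus}\}$ on open sets where $\ominus_{\gamma}\neq\oplus_{\gamma}$ (your straightened half-space charts), shows that points where $\ominus=\oplus$ holds on a whole open set are interior because the fibres close up into full circles, and then passes from $\mathcal{Q}$ to $\overline{\mathcal{Q}}$ by closedness of the boundary. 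The one step you single out as the main obstacle --- producing compatible boundary charts at points of $\overline{\mathcal{Q}}\setminus\mathcal{Q}$, where the gap between the sheets $\ominus\left(\mathcal{N}\right)$ and $\oplus\left(\mathcal{N}\right)$ pinches off --- is precisely the step the paper's proof also leaves untreated, so your attempt matches the published argument and is, in flagging that point explicitly, rather more candid about its limits.
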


\begin{proof}
Since $\ominus_{\gamma}$ and $\oplus_{\gamma}$ are defined by the limit of $\widetilde{\gamma}\left(s\right)$ at the endpoints, $\widetilde{\gamma}$ is locally injective and, by Lemma \ref{nullnc-notts}, there are no tangent skies in $M$, then $\widetilde{\gamma}$ must be a connected open set in $\mathbb{P}\left(\mathcal{H}_{\gamma}\right)\simeq \mathbb{S}^1$ with boundary $\left\{\ominus_{\gamma},\oplus_{\gamma}\right\}$.
Now, consider $P\in \mathbb{P}\left(\mathcal{H}\right)$ such that there exist $\gamma\in \mathcal{N}$ verifying $\ominus_{\gamma} = P$ and a coordinate chart $\widetilde{\varphi}=\left(x,y,\theta,\phi\right)$ at $P$ as in (\ref{coordinatePH}). 
Since $\ominus$ is a distribution, for any $\gamma\in \mathcal{N}$ there exists a point $\ominus_{\gamma}\in \mathbb{P}\left(\mathcal{H}_{\gamma}\right)\subset \mathbb{P}\left(\mathcal{H}\right)$ which smoothly depends on the light ray $\gamma$. 
In this case, the coordinates $\left(x,y,\theta\right)$ define the light rays in $\mathcal{N}$, and hence the function $\phi\circ\ominus:\mathcal{N}\rightarrow \left[0,2\pi\right)\simeq \mathbb{S}^1$ depends differentiably on the coordinates $\left(x,y,\theta\right)$.
Analogously, the same rules for $\oplus$.
Let us denote by $\phi_{\ominus}=\phi_{\ominus}\left(x,y,\theta\right)$ and $\phi_{\oplus}=\phi_{\oplus}\left(x,y,\theta\right)$ the coordinate representation of the functions $\phi\circ\ominus$ and $\phi\circ\oplus$ respectively.

Notice that $\partial\overline{\widetilde{\mathcal{N}}}\subset \left\{ \ominus_{\gamma},\oplus_{\gamma}:\gamma\in \mathcal{N}\right\}$. 
 Consider now an open set $\mathcal{U}\subset \mathcal{N}$. 
If $\ominus_{\gamma}\neq\oplus_{\gamma}$ for any $\gamma\in \mathcal{U}$, by locality of $\mathcal{U}$, we can choose, without any lack of generality, a diffeomorphism $\left[0,2\pi\right)\simeq \mathbb{S}^1$ such that 
\[
0<\phi_{\ominus}\left(x,y,\theta\right) < \phi_{\oplus}\left(x,y,\theta\right) < 2\pi
\]
for all $\left(x,y,\theta\right)$ (restricting the domain of $\phi_{\ominus}$ and $\phi_{\oplus}$ if needed). 
Then, for all $\gamma\in \mathcal{U}$, the points in $\overline{\widetilde{\mathcal{U}}}$ can be written as
\[
\overline{\widetilde{\mathcal{U}}}\simeq \left\{ \left(x,y,\theta, \phi\right):\phi_{\ominus}\left(x,y,\theta\right)\leq \phi \leq \phi_{\oplus}\left(x,y,\theta\right) \right\}
\] 
describing a manifold with boundary. 
Then 
\[
\left\{ \ominus_{\gamma},\oplus_{\gamma}: \gamma\in \mathcal{U} \right\} \subset \partial \overline{\widetilde{\mathcal{N}}}
\] 
and, since $\ominus$ and $\oplus$ are regular distributions, the condition $\ominus_{\gamma}\neq \oplus_{\gamma}$ is open in $\mathcal{N}$, therefore we have that 
\[
\mathcal{Q} \subset \partial \overline{\widetilde{\mathcal{N}}}.
\] 

On the other hand, if $\ominus_{\gamma}=\oplus_{\gamma}$ for any $\gamma\in \mathcal{U}$, then we have that $\widetilde{\gamma} \cup \lbrace \ominus_{\gamma} \rbrace = \mathbb{P}\left(\mathcal{H_{\gamma}}\right)$. Again, by the locality of $\mathcal{U}$ then
\[
\mathcal{U}\times\mathbb{S}^1 \simeq \mathbb{P}\left(\mathcal{H}\left(\mathcal{U}\right)\right)=\overline{\widetilde{\mathcal{U}}}
\]
and all the points $\left\{ \ominus_{\gamma}:\gamma\in \mathcal{U}\right\}$ are in the interior of $\overline{\widetilde{\mathcal{U}}}$ and hence, also in the interior of $\overline{\widetilde{\mathcal{N}}}$.

Thus we conclude that $\overline{\mathcal{Q}} \subset \partial \overline{\widetilde{\mathcal{N}}}$.
\end{proof} 

A consequence of the previous proposition is that if $\ominus=\oplus$ then $\overline{\widetilde{\mathcal{N}}}$ is a manifold without boundary.

Notice that the previous result holds if $\ominus$ and $\oplus$ were just continuous distributions. 
In such case, the functions $\phi_{\ominus}$ and $\phi_{\oplus}$ will depend continuously on the coordinates $\left(x,y,\theta\right)$ and the proof would be still valid.


\subsubsection{Constructing the $l$-boundary}
Now, we will see how the $l$-boundary can be assigned to $M$.
Let us now assume for the moment that $\oplus$ and $\ominus$ are regular distributions.
We will split the boundary $\partial \widetilde{\mathcal{N}}$ into the past boundary $\partial^{-} \widetilde{\mathcal{N}}=\left\{ \ominus_{\gamma}:\gamma\in \mathcal{N} \right\}$ and the future boundary $\partial^{+} \widetilde{\mathcal{N}}=\left\{ \oplus_{\gamma}:\gamma\in \mathcal{N} \right\}$.

Let us define the sets of orbits of $\ominus$ and $\oplus$ as
\begin{equation}\label{deltaSigma}
\partial^{-}\Sigma = \mathcal{N}/\ominus \hspace{7mm} \partial^{+}\Sigma = \mathcal{N}/\oplus
\end{equation}
Since $\ominus$ and $\oplus$ are $1$--dimensional distributions, their orbits are $1$--dimensional differentiable submanifolds of $\mathcal{N}$. 
So, for an orbit $X^{+}\in \partial^{+}\Sigma$ and for any $\gamma\in X^{+}$ we have that $T_{\gamma}X^{+}=\oplus_{\gamma}\in \mathbb{P}\left(\mathcal{H}\right)$, and analogously $T_{\gamma}X^{-}=\ominus_{\gamma}\in \mathbb{P}\left(\mathcal{H}\right)$.
This fact implies that the maps
\begin{equation}\label{maps-border-01}
\begin{tabular}{rcl}
 $X^{-}$ & $\rightarrow$ & $\partial^{-}\widetilde{\mathcal{N}}$ \\
 $\gamma$ & $\mapsto$ & $T_{\gamma}X^{-}$ 
\end{tabular}
\hspace{7mm} \mathrm{and} \hspace{7mm}
\begin{tabular}{rcl}
 $X^{+}$ & $\rightarrow$ & $\partial^{+}\widetilde{\mathcal{N}}$ \\
 $\gamma$ & $\mapsto$ & $T_{\gamma}X^{+}$ 
\end{tabular}
\end{equation}
are differentiable because they coincide with the restriction $\left.\ominus\right|_{X^{-}}$ and $\left.\oplus\right|_{X^{+}}$ respectively.

Analogously, we can denote by 
\begin{equation}\label{X+delta}
\left(X^{-}\right)^{\sim}=\left\{ T_{\gamma}X^{-}:\gamma\in X^{-} \right\} \, \quad 
\left(X^{+}\right)^{\sim}=\left\{ T_{\gamma}X^{+}:\gamma\in X^{+} \right\} \, ,
\end{equation}
the corresponding images of the previous maps in (\ref{maps-border-01}).

If $\left(X^{-}\right)^{\sim} \cap \left(Y^{-}\right)^{\sim}\neq \varnothing$ then there exists $\gamma\in X^{-}\cap Y^{-}$ but since  both $X^{-}$ and $Y^{-}$ are orbits of the field of directions $\ominus$ then we have that $X^{-}= Y^{-}$. Analogously for orbits of $\oplus$.
So, we have that the images in $\mathbb{P}\left(\mathcal{H}\right)$ of the orbits of $\ominus$ and $\oplus$ are separate, this means
\[
\left(X^{-}\right)^{\sim} \cap \left(Y^{-}\right)^{\sim}\neq \varnothing \hspace{4mm} \Longrightarrow \hspace{4mm} X^{-}= Y^{-}
\]
\[
\left(X^{+}\right)^{\sim} \cap \left(Y^{+}\right)^{\sim}\neq \varnothing \hspace{4mm} \Longrightarrow \hspace{4mm} X^{+}= Y^{+} \, .
\]
This separation property permits us to define:
\[
\left(\partial^{-}\Sigma\right)^{\sim} =\left\{ \left(X^{-}\right)^{\sim}: X^{-} \in \partial^{-}\Sigma  \right\}
\]
\[
\left(\partial^{+}\Sigma\right)^{\sim} =\left\{ \left(X^{+}\right)^{\sim}: X^{+} \in \partial^{+}\Sigma  \right\} \, ,
\]
and also 
\[
\left(\overline{\Sigma}\right)^{\sim} = \Sigma^{\sim} \cup \left(\partial^{-}\Sigma\right)^{\sim} \cup \left(\partial^{+}\Sigma\right)^{\sim} \, .
\]

Now, observe that the sky map $S^{\sim}:M\rightarrow \Sigma^{\sim}$ in Prop. \ref{conj20}, can be naturally extended to:
\[
\overline{S^{\sim}}:\overline{M} \rightarrow \left(\overline{\Sigma}\right)^{\sim}
\]
by $\overline{S^{\sim}}\left( X^{\pm} \right) = \left( X^{\pm} \right)^{\sim}$,  where $\overline{M} = M \cup \partial^{-}\Sigma \cup \partial^{+}\Sigma$.

\begin{lemma}\label{difeos01}
Under the assumptions stated in this section, the maps: 
$$
\begin{tabular}{rcl}
 $\mathcal{N}$ & $\rightarrow$ & $\partial^{-}\widetilde{\mathcal{N}}$ \\
 $\gamma$ & $\mapsto$ & $\ominus_{\gamma}$ 
\end{tabular}
\hspace{7mm} \mathrm{and} \hspace{7mm}
\begin{tabular}{rcl}
 $\mathcal{N}$ & $\rightarrow$ & $\partial^{+}\widetilde{\mathcal{N}}$ \\
 $\gamma$ & $\mapsto$ & $\oplus_{\gamma}$ 
\end{tabular}
$$
are diffeomorphisms.
\end{lemma}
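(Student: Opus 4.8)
The plan is to recognize the two maps as nothing more than the global sections $\ominus$ and $\oplus$, corestricted to their images, and then to exploit the elementary fact that a smooth section of a fibre bundle is automatically a diffeomorphism onto its image. Let $\varpi\colon \mathbb{P}\left(\mathcal{H}\right) \to \mathcal{N}$ denote the bundle projection of the projectivised contact distribution; in the chart (\ref{coordinatePH}) it is simply $(x,y,\theta,\phi) \mapsto (x,y,\theta)$. Since by construction $\ominus_{\gamma}$ and $\oplus_{\gamma}$ lie in the fibre $\mathbb{P}\left(\mathcal{H}_{\gamma}\right) = \varpi^{-1}(\gamma)$, we have $\varpi \circ \ominus = \mathrm{id}_{\mathcal{N}}$ and $\varpi \circ \oplus = \mathrm{id}_{\mathcal{N}}$; that is, $\ominus$ and $\oplus$ are global sections of $\varpi$.

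From this the injectivity is immediate: if $\ominus_{\gamma_1} = \ominus_{\gamma_2}$, applying $\varpi$ gives $\gamma_1 = \gamma_2$, and likewise for $\oplus$. Hence each map is a bijection onto its image, which by the very definitions of $\partial^{-}\widetilde{\mathcal{N}}$ and $\partial^{+}\widetilde{\mathcal{N}}$ are exactly these two sets. To upgrade the bijection to a diffeomorphism I would read the maps in the chart (\ref{coordinatePH}): by the hypothesis that $\ominus$ and $\oplus$ are regular (differentiable) distributions, they take the form $(x,y,\theta) \mapsto (x,y,\theta,\phi_{\ominus}(x,y,\theta))$ and $(x,y,\theta) \mapsto (x,y,\theta,\phi_{\oplus}(x,y,\theta))$, i.e. they are graph maps over $\mathcal{N}$ built from the smooth functions $\phi_{\ominus}, \phi_{\oplus}$ already produced in the proof of Proposition \ref{prop-Low-boundary}. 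A graph map is an embedding — it is an immersion, its coordinate Jacobian containing the identity block in the $(x,y,\theta)$ variables, and a homeomorphism onto its image — so $\varpi$ restricted to each image furnishes a smooth inverse, and both maps are diffeomorphisms.

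The main point requiring care, rather than a genuine obstacle, is to make sure that the smooth structure placed on $\partial^{-}\widetilde{\mathcal{N}}$ and $\partial^{+}\widetilde{\mathcal{N}}$ is precisely the submanifold structure they inherit from $\overline{\widetilde{\mathcal{N}}}$. This is where Proposition \ref{prop-Low-boundary} must be invoked: it already exhibits these boundary pieces as the graphs $\phi = \phi_{\ominus}(x,y,\theta)$ and $\phi = \phi_{\oplus}(x,y,\theta)$, hence as regular submanifolds transverse to the fibres $\mathbb{P}\left(\mathcal{H}_{\gamma}\right)$, so that $\varpi$ genuinely restricts to a smooth map there and the section viewpoint is consistent. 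With this identification in place no further regularity check is needed, and the argument is essentially formal.
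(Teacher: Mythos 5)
Your proposal is correct and follows essentially the same route as the paper: the paper's proof also reads the map in the chart (\ref{coordinatePH}) as the graph map $\left(x,y,\theta\right)\mapsto \left(x,y,\theta,\phi_{\ominus}\left(x,y,\theta\right)\right)$, uses the differentiability of $\phi_{\ominus}$ to get a local diffeomorphism onto the image, and combines this with bijectivity to conclude a global diffeomorphism. Your explicit framing via the bundle projection $\varpi$ and the section property is just a cleaner packaging of the same argument (it makes the injectivity and the smooth inverse explicit), not a different method.
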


\begin{proof}
We can see trivially that the map $\mathcal{N}\rightarrow \partial^{-}\widetilde{\mathcal{N}}$ is bijective.
Observe that the image of the map $\ominus:\mathcal{N}\rightarrow \mathbb{P}\left(\mathcal{H}\right)$ is $\partial^{-}\widetilde{\mathcal{N}}$.
Since its expression in coordinates is
\[
\left(x,y,\theta\right)\mapsto \left(x,y,\theta, \phi_{\ominus}\left(x,y,\theta \right)\right)
\]
and $\phi_{\ominus}$ is differentiable, it is clear that $\mathcal{N}$ is locally diffeomorphic to the graph of $\phi_{\ominus}$ and moreover this graph is locally diffeomorphic to the image of $\ominus$, that is $\partial^{-}\widetilde{\mathcal{N}}$.  
So, the map $\mathcal{N}\rightarrow \partial^{-}\widetilde{\mathcal{N}}$ is a bijection and a local diffeomorphism, therefore it is a global diffeomorphism.
The proof for $\mathcal{N}\rightarrow \partial^{+}\widetilde{\mathcal{N}}$ can be done in the same way.
\end{proof}

If $\ominus$ and $\oplus$ define regular distributions in $\mathcal{N}$, we can propagate them to $\partial^{-}\widetilde{\mathcal{N}}$ and $\partial^{+}\widetilde{\mathcal{N}}$ respectively using the difeomorphisms of Lemma \ref{difeos01}.
Then we obtain the regular distributions $\left(\mathcal{D}^{-}\right)^{\sim}$ and $\left(\mathcal{D}^{+}\right)^{\sim}$ on $\partial^{-}\widetilde{\mathcal{N}}$ and $\partial^{+}\widetilde{\mathcal{N}}$
whose leaves are the elements of $\left(\partial^{-}\Sigma\right)^{\sim}$ and $\left(\partial^{+}\Sigma\right)^{\sim}$ respectively.
We will assume in what follows that these distributions, together with the distribution $\mathcal{D}^{\sim}$, give rise to a new distribution $\overline{\mathcal{D}^{\sim}}$  in $\overline{\widetilde{\mathcal{N}}}$.   In other words, it will be assumed that the map assigning to each point $\xi$ in $\overline{\widetilde{\mathcal{N}}}$ the corresponding subspace $\mathcal{D}^{\sim}_\xi$ if $\xi \in \widetilde{\mathcal{N}}$, or $\left(\mathcal{D}^{\pm}\right)^{\sim}_\xi$ if $\xi \in \partial^{\pm}\widetilde{\mathcal{N}}$, is smooth.     

The leaves of $\overline{\mathcal{D}^{\sim}}$ are disjoint in $\overline{\widetilde{\mathcal{N}}}$ and they can be seen as elements of $\left(\overline{\Sigma}\right)^{\sim}$.
Since all the distributions $\mathcal{D}^{\sim}$, $\left(\mathcal{D}^{-}\right)^{\sim}$ and $\left(\mathcal{D}^{+}\right)^{\sim}$ are regular, then $\overline{\mathcal{D}^{\sim}}$ is also a regular distribution.
Therefore we can consider the quotient 
\begin{equation}\label{boundary-chain}
\overline{\widetilde{\mathcal{N}}} / \overline{\mathcal{D}^{\sim}} = 
\widetilde{\mathcal{N}} / \mathcal{D}^{\sim} \cup \partial^{-} \widetilde{\mathcal{N}} / \left(\mathcal{D}^{-}\right)^{\sim} \cup \partial^{+} \widetilde{\mathcal{N}} / \left(\mathcal{D}^{+}\right)^{\sim}
\end{equation}
as a differentiable manifold that, in virtue of Lemma \ref{difeos01}, [\ref{deltaSigma}] and [\ref{X+delta}],  can be identified with: 
\[
\left(\overline{\Sigma}\right)^{\sim} = \Sigma^{\sim} \cup \left(\partial^{-}\Sigma\right)^{\sim} \cup  \left(\partial^{+}\Sigma\right)^{\sim} \simeq \overline{\widetilde{\mathcal{N}}} / \overline{\mathcal{D}^{\sim}}
\]
whose boundary is: $\partial\left(\overline{\Sigma}\right)^{\sim} = \left(\partial^{-}\Sigma\right)^{\sim} \cup \left(\partial^{+}\Sigma\right)^{\sim}$.

Then we can identify $\left(\overline{\Sigma}\right)^{\sim}$ with $\overline{M}$ via the map $\overline{S^{\sim}}:\overline{M} \rightarrow \left(\overline{\Sigma}\right)^{\sim}$, obtaining that $\overline{M}$ is the causal completion we were looking for. 
We state that the $l$-boundary of $M$ is 
\[
\partial_l M = \overline{M}-M = \partial^{-}\Sigma \cup \partial^{+}\Sigma
\] 

In case of $\ominus = \oplus$ then $\partial^{+}\widetilde{\mathcal{N}}=\partial^{-}\widetilde{\mathcal{N}}$
and $\left(\partial^{+}\Sigma \right)^{\sim}=\left(\partial^{-}\Sigma \right)^{\sim}$. 
Hence $\left(\mathcal{D}^{+}\right)^{\sim}=\left(\mathcal{D}^{-}\right)^{\sim}$ and $\partial^{-}\Sigma = \partial^{+}\Sigma$ and therefore, the $l$-boundary of $M$ is 
\[
\partial_l M = \overline{M}-M = \partial\Sigma 
\]
where $\partial\Sigma =\partial^{-}\Sigma = \partial^{+}\Sigma$.  Notice that in such situation $\overline{M}$ is a manifold without boundary.

Collecting the results described in the previous sections we may state the following proposition:

\begin{proposition}  Let $M$ be a strongly causal, sky-separating, 3-dimensional space-time and $\widetilde{\mathcal{N}}$
its extended space of light rays.  Assuming that the limiting distributions $\oplus$, $\ominus$ are regular and extend smoothly the canonical distribution $\mathcal{D}^\sim$ to the boundary of the manifold $\widetilde{\mathcal{N}}$, defining in this way a regular distribution $\overline{\mathcal{D}^\sim}$ of $\overline{\widetilde{\mathcal{N}}}$, then the $l$-boundary $\partial_lM$ of $M$ is well defined, and $\overline{M} = M \cup \partial_lM$ is a smooth manifold with boundary that can be identified naturally with the leaves of the distribution $\overline{\mathcal{D}^\sim}$.
\end{proposition}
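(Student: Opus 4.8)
The plan is to assemble the structural results already established in the preceding subsections, since the proposition is essentially a synthesis statement and each ingredient has been produced separately. First I would note that the standing hypotheses---$M$ strongly causal and sky-separating of dimension $3$---force $M$ to be without tangent skies, hence null non--conjugate by Lemma~\ref{nullnc-notts}. Consequently each curve $\widetilde{\gamma}$ is injective with image an arc-interval in the circle $\mathbb{P}\left(\mathcal{H}_{\gamma}\right)\simeq \mathbb{S}^1$, so the one-sided limits $\ominus_{\gamma}$ and $\oplus_{\gamma}$ of \eqref{Low-field} exist for every $\gamma\in\mathcal{N}$, and the maps $\ominus,\oplus\colon\mathcal{N}\to\mathbb{P}\left(\mathcal{H}\right)$ are globally defined.

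Next I would invoke Proposition~\ref{prop-Low-boundary}: since $\ominus$ and $\oplus$ are assumed regular, the set $\overline{\widetilde{\mathcal{N}}}$ carries the structure of a smooth manifold with boundary $\overline{\mathcal{Q}}=\partial^{-}\widetilde{\mathcal{N}}\cup\partial^{+}\widetilde{\mathcal{N}}$, described in the charts $\left(x,y,\theta,\phi\right)$ of \eqref{coordinatePH} as the region bounded by the two graphs $\phi=\phi_{\ominus}\left(x,y,\theta\right)$ and $\phi=\phi_{\oplus}\left(x,y,\theta\right)$. I would then form the quotient by the distribution $\overline{\mathcal{D}^{\sim}}$, which by hypothesis is regular. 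The key observation is that $\overline{\mathcal{D}^{\sim}}$ restricts to $\mathcal{D}^{\sim}$ on the interior $\widetilde{\mathcal{N}}$ and to $\left(\mathcal{D}^{\pm}\right)^{\sim}$ on the boundary pieces $\partial^{\pm}\widetilde{\mathcal{N}}$, and by Lemma~\ref{difeos01} the leaves through boundary points are exactly the orbit images $\left(X^{\pm}\right)^{\sim}$, which lie entirely inside $\partial^{\pm}\widetilde{\mathcal{N}}$. Thus $\overline{\mathcal{D}^{\sim}}$ is tangent to $\partial\overline{\widetilde{\mathcal{N}}}$ and the boundary is saturated by leaves, whereas the compact interior leaves $X^{\sim}$ stay in the interior. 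For a regular distribution of this type on a manifold with boundary the leaf space is again a smooth manifold with boundary, and together with the decomposition \eqref{boundary-chain} this identifies $\overline{\widetilde{\mathcal{N}}}/\overline{\mathcal{D}^{\sim}}$ with $\left(\overline{\Sigma}\right)^{\sim}=\Sigma^{\sim}\cup\left(\partial^{-}\Sigma\right)^{\sim}\cup\left(\partial^{+}\Sigma\right)^{\sim}$, whose boundary is $\left(\partial^{-}\Sigma\right)^{\sim}\cup\left(\partial^{+}\Sigma\right)^{\sim}$.

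Finally I would transport this structure to $\overline{M}$ through $\overline{S^{\sim}}$. By Proposition~\ref{conj20} the restriction $S^{\sim}\colon M\to\Sigma^{\sim}$ is already a diffeomorphism onto the interior, while on the new pieces $\overline{S^{\sim}}$ sends $X^{\pm}\mapsto\left(X^{\pm}\right)^{\sim}$ bijectively; declaring $\overline{S^{\sim}}$ a diffeomorphism then endows $\overline{M}=M\cup\partial^{-}\Sigma\cup\partial^{+}\Sigma$ with a smooth manifold-with-boundary structure in which $M$ is the interior and $\partial_{l}M=\overline{M}-M$ is the boundary, naturally identified with the boundary leaves of $\overline{\mathcal{D}^{\sim}}$ (in the degenerate case $\ominus=\oplus$ the boundary is empty and $\overline{M}$ is a manifold without boundary, as already observed).

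The genuinely delicate point is the compatibility of the interior and boundary smooth structures across their gluing: one must ensure that the chart $\left(x,y,\theta,\phi\right)$ descends consistently through the quotient so that interior charts, where $\phi$ lies strictly between the two graphs, overlap smoothly with the boundary charts near $\phi=\phi_{\ominus}$ or $\phi=\phi_{\oplus}$. This is precisely the content encoded in the hypothesis that $\oplus$ and $\ominus$ extend $\mathcal{D}^{\sim}$ to a \emph{regular} distribution $\overline{\mathcal{D}^{\sim}}$ on $\overline{\widetilde{\mathcal{N}}}$; granted that assumption, the gluing is automatic and no further computation is required, so the proof reduces to the bookkeeping of invoking Lemma~\ref{nullnc-notts}, Proposition~\ref{prop-Low-boundary}, Lemma~\ref{difeos01} and Proposition~\ref{conj20} in sequence.
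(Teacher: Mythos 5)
Your proposal is correct and follows essentially the same route as the paper: the paper states this proposition explicitly as a synthesis (``collecting the results described in the previous sections''), and your chain of reasoning---Lemma~\ref{nullnc-notts} giving null non-conjugacy, existence of the limits \eqref{Low-field} from injectivity of $\widetilde{\gamma}$ in $\mathbb{P}\left(\mathcal{H}_{\gamma}\right)\simeq\mathbb{S}^1$, Proposition~\ref{prop-Low-boundary} for the manifold-with-boundary structure of $\overline{\widetilde{\mathcal{N}}}$, the quotient by the regular distribution $\overline{\mathcal{D}^{\sim}}$ via \eqref{boundary-chain} and Lemma~\ref{difeos01}, and finally transport to $\overline{M}$ through $\overline{S^{\sim}}$ using Proposition~\ref{conj20}---is precisely the assembly the authors intend, including the observation that the smooth-extension hypothesis on $\overline{\mathcal{D}^{\sim}}$ is what guarantees the compatibility of the interior and boundary structures.
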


Notice that the strong causality and sky-separating conditions stated in the proposition imply that the space $M$ has no tangent skies, hence there are no null-conjugate points, then the boundary of the extended space of light rays is well defined and is smooth.  Moreover if $M$ is null pseudo-convex then the space of light rays is Hausdorff as well as its closure and, because of the assumption on the regularity of the distributions, the quotient will be Hausdorff too.



\section{Comparison with the causal $c$-boundary}\label{Low-c-boundary}

\setcounter{nseccion}{3}

The classical definition of $c$-boundary has been redefined along the years to avoid the problems arising in the study of its topology.  For our purposes, we will recall and deal with its classical definition, but the reader may consult \cite{FHS11}, \cite{Sa09} and references therein,  to get a wider understanding on the subject.

\begin{definition}
A set $W\subset M$ is said to be an \emph{indecomposable past set}, or an \emph{IP}, if it verifies the following conditions:
\begin{enumerate}
\item \label{IP-1} $W$ is open and non--empty.
\item \label{IP-2} $W$ is a \emph{past set}, that is $I^{-}\left( W\right) = W$.
\item $W$ cannot be expressed as the union of two proper subsets satisfying conditions \ref{IP-1} and \ref{IP-2}.
\end{enumerate}
We will say that an IP $W$ is a \emph{proper IP}, or \emph{PIP}, if there is $p\in M$ such that $W= I^{-}\left(p\right)$. In other case, $W$ will be called a \emph{terminal IP} or \emph{TIP}. 
In an analogous manner, considering the chronological future, we can define \emph{indecomposable future sets} or \emph{IF}, then we obtain \emph{proper IFs} and \emph{terminal IFs}, that is, \emph{PIFs} and \emph{TIFs}. 
\end{definition}

In Figure \ref{diapositiva4}, as shown in \cite[Fig. 6.4]{BE96}, a trivial example of the identification of IPs and IFs with boundary points of $M$ is offered. 
We consider $M$ a cropped rectangle of the $2$--dimensional Minkowski space-time  equipped with the metric $\mathbf{g}%
=-dy\otimes dy+dx\otimes dx$. Points at the boundary of $M$ such as $p$ are related to TIPs like $A$, those such as $q$ corresponds to TIFs like $B$ and those such as $r$ can be related to TIPs like $C$ as well as TIFs like $D$.

\begin{figure}[h]
  \centering
    \includegraphics[scale=0.25]{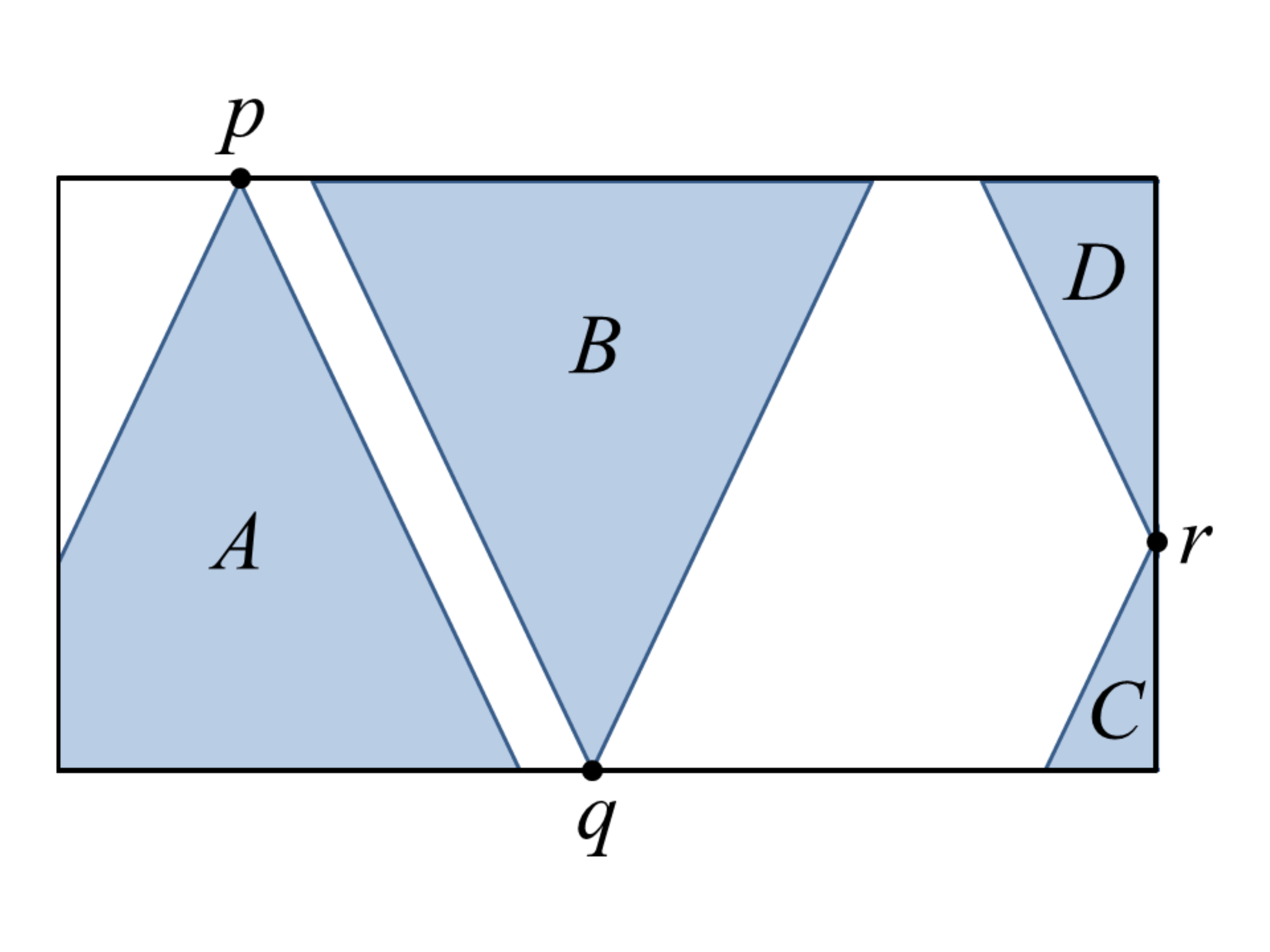}
  \caption{TIPs and TIFs.}
  \label{diapositiva4}
\end{figure}

The following proposition provide us a characterization of all TIPs in a strongly causal space-time.

\begin{proposition}
For any strongly causal space-time  $M$, $A\subset M$ is a TIP if and only if there exists an inextensible to the future timelike curve $\mu$ such that $A=I^{-}\left(\mu\right)$.
\end{proposition}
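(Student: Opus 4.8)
The plan is to establish both implications of this classical characterization (due to Geroch, Kronheimer and Penrose \cite{GKP68}; see also \cite{BE96}), the main tool being the standard fact that a past set $W$ is indecomposable if and only if it is \emph{past-directed}, meaning that for every pair $x,y\in W$ there is $z\in W$ with $x,y\in I^{-}(z)$. I would invoke this characterization directly: indecomposability fails exactly when $W$ splits into two proper open past subsets, and such a splitting can be produced from, and only from, a pair of points of $W$ admitting no common future point inside $W$.

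For the implication ``$A=I^{-}(\mu)\Rightarrow A$ is a TIP'', take $\mu\colon[0,\omega)\to M$ future-directed timelike and future-inextensible. Then $A=\bigcup_{t}I^{-}(\mu(t))$ is open, nonempty and satisfies $I^{-}(A)=A$. Past-directedness is immediate: given $x\ll\mu(s)$ and $y\ll\mu(t)$ with $s\le t$, the point $r=\mu(t)$ lies in $A$ (inextensibility yields a later $\mu(t')$ with $\mu(t)\ll\mu(t')$) and satisfies $x\ll\mu(s)\ll r$ and $y\ll r$; hence $A$ is an IP. To see that $A$ is terminal, suppose $A=I^{-}(p)$; I would show this forces $p$ to be a future endpoint of $\mu$, contradicting inextensibility. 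Every $\mu(t)\ll p$, and for a causally convex neighbourhood $V$ of $p$ (supplied by strong causality) each $q\ll p$ with $q\in V$ yields, via $q\in A$, a parameter with $q\ll\mu(t)\ll p$ and hence $\mu(t)\in V$ by causal convexity. Letting $q$ run toward $p$ forces $\mu$ into every neighbourhood of $p$, and strong causality (absence of partial imprisonment) then promotes $p$ to a genuine future endpoint of the inextensible curve $\mu$, the desired contradiction.

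For the converse, let $A$ be a TIP. Using past-directedness together with second countability I would extract a cofinal exhaustion of $A$: choose a countable dense subset $\{x_{n}\}\subset A$, and, noting that each $x\in A$ satisfies $x\ll a$ for some $a\in A$ (since $A=I^{-}(A)$), build inductively points $p_{n}\in A$ with $p_{n-1},x_{n}\ll p_{n}$. Then $p_{1}\ll p_{2}\ll\cdots$ and $\bigcup_{n}I^{-}(p_{n})=A$, because any $x\in A$ admits $x\ll x_{k}\ll p_{k}$ for suitable $k$ by density inside $I^{+}(x)\cap I^{-}(a)\subset A$. Joining consecutive $p_{n}$ by future-directed timelike segments produces a future-directed timelike curve $\mu$ with $I^{-}(\mu)=\bigcup_{n}I^{-}(p_{n})=A$. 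Finally $\mu$ is future-inextensible: were $p$ a future endpoint, then $\mu(t)\to p$ gives $A\subset I^{-}(p)$, while for any $x\ll p$ the openness of $I^{+}(x)$ and the convergence $\mu(t)\to p$ give $x\ll\mu(t)$ for large $t$, so $x\in A$; thus $A=I^{-}(p)$ would be a PIP, contradicting that $A$ is terminal.

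I expect the genuine obstacle to be the terminal/inextensibility interface in the first implication, where strong causality is indispensable: the argument must convert the purely chronological identity $I^{-}(\mu)=I^{-}(p)$ into the geometric statement that $\mu$ actually accumulates at, and then ends at, $p$. The causal convexity of small neighbourhoods of $p$ is precisely what prevents the curve from filling up the past of $p$ from afar (or being partially imprisoned), forcing it into every neighbourhood of $p$ cofinally; without strong causality the claim can fail. The remaining steps, namely the openness and directedness of $I^{-}(\mu)$ and the exhaustion construction in the converse, are routine once the directedness characterization of IPs and second countability are in hand.
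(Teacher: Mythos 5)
Your proposal is correct, but there is nothing in the paper to compare it against at the level of detail you supply: the paper's entire ``proof'' of this proposition is the citation \cite[Prop. 6.8.1]{HE73}. What you have written is, in substance, a reconstruction of that classical Geroch--Kronheimer--Penrose/Hawking--Ellis argument: the directedness characterization of IPs, the countable-chain exhaustion of a TIP (using separability of $A$ as an open subset of a manifold) for the direction ``TIP $\Rightarrow A=I^{-}(\mu)$'', and the causally convex neighbourhood argument for the terminal part of the other direction, which you correctly single out as the only place where strong causality is genuinely used. So the comparison is: the paper delegates to the literature, while you prove the statement; your route is the standard one, and it is sound.

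Two places where your sketch can be tightened. First, in the terminal step you do not need any appeal to absence of partial imprisonment: once you have $q\ll\mu(t)\ll p$ with $q,p\in V$ and $V$ causally convex, the same reasoning applies to every later parameter $t'\geq t$, because $q\ll\mu(t)\preceq\mu(t')\ll p$; hence the entire tail of $\mu$ beyond $t$ is trapped in $V$, and letting $V$ run through a basis of causally convex neighbourhoods of $p$ (supplied by strong causality) gives $\mu(t)\to p$ directly, so $p$ is a future endpoint and the contradiction is immediate. Second, in the converse, the inclusion $A\subset I^{-}(p)$ is not a consequence of openness or convergence alone: for $x\in A$ you have $x\ll\mu(s)$ for some $s$, and since $p$ is the future endpoint of $\mu$ you only get $\mu(s)\preceq p$, so you need the push-up property (if $x\ll y$ and $y\preceq z$ then $x\ll z$) to conclude $x\ll p$; a purely topological argument yields only $p\in\overline{I^{+}(x)}$. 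With these two refinements your argument is complete and coincides with the proof the paper outsources to \cite{HE73}.
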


\begin{proof}
See \cite[Prop. 6.8.1]{HE73}.
\end{proof}

Light rays also define terminal ideal points as next proposition shows.

\begin{proposition}
Let $\gamma$ be a future--directed inextensible causal curve in a strongly causal space-time  $M$, then $I^{-}\left(\gamma\right)$ is a TIP.
\end{proposition}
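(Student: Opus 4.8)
The plan is to reduce the statement to the preceding characterization of TIPs: since that result asserts that $A\subset M$ is a TIP precisely when $A=I^{-}(\mu)$ for some future-inextensible \emph{timelike} curve $\mu$, it suffices to manufacture a future-directed, future-inextensible timelike curve $\mu$ with $I^{-}(\mu)=I^{-}(\gamma)$. The whole difficulty is therefore to replace the causal (possibly null) curve $\gamma$ by a timelike curve with the same chronological past, while retaining inextensibility.

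First I would record the elementary structural facts. Writing $\gamma\colon[0,b)\to M$ and choosing parameters $s_{n}\nearrow b$, future-directedness gives $\gamma(s_{n})\in J^{-}(\gamma(s_{n+1}))$, so the chronological pasts are nested, $I^{-}(\gamma(s_{n}))\subseteq I^{-}(\gamma(s_{n+1}))$, and $I^{-}(\gamma)=\bigcup_{n}I^{-}(\gamma(s_{n}))$; in particular $I^{-}(\gamma)$ is an open past set. I would also use repeatedly that $\gamma(s)\in\overline{I^{-}(\gamma(s))}$, so that points lying just to the timelike past of $\gamma(s)$ can always be found inside a prescribed open neighbourhood of $\gamma(s)$. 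Notice that one cannot hope for $\gamma(s_{n})\in I^{-}(\gamma(s_{n+1}))$ when $\gamma$ is an achronal null geodesic, which is exactly why the cofinality below must be achieved approximately rather than by a single timelike step.

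Next I would build the timelike chain inductively. Fixing a countable dense subset $\{x_{k}\}$ of $I^{-}(\gamma)$, at the $n$-th step I would choose $r_{n}$ inside the open set $I^{+}(r_{n-1})\cap I^{+}(x_{n})\cap I^{-}(\gamma(s_{j_{n}}))$ for a suitably large index $j_{n}\nearrow\infty$; by the nesting this set is non-empty, because $\gamma(s_{j_{n}})$ lies in the open set $I^{+}(r_{n-1})\cap I^{+}(x_{n})$ while belonging to $\overline{I^{-}(\gamma(s_{j_{n}}))}$, so a point just to its past meets all three conditions. Insisting in addition that $r_{n}$ lie within distance $1/n$ of $\gamma(s_{j_{n}})$ in an auxiliary Riemannian metric, and joining consecutive pairs $r_{n},r_{n+1}$ (with $r_{n+1}\in I^{+}(r_{n})$) by timelike segments, produces a future-directed timelike curve $\mu$. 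Since every $r_{n}\in I^{-}(\gamma)$ and $I^{-}(\gamma)$ is a past set, we get $I^{-}(\mu)\subseteq I^{-}(\gamma)$; conversely each dense $x_{k}\in I^{-}(\gamma)$ satisfies $x_{k}\in I^{-}(r_{n})$ for some $n$, and an arbitrary $x\in I^{-}(\gamma)$ admits a dense point $x_{k}\in I^{+}(x)\cap I^{-}(\gamma(s))$, whence $x\in I^{-}(x_{k})\subseteq I^{-}(\mu)$. Thus $I^{-}(\mu)=I^{-}(\gamma)$.

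The hard part will be verifying that $\mu$ is future-inextensible, and this is the only place where strong causality is essential. If $\mu$ had a future endpoint $p$, then $r_{n}\to p$, and the control $d(r_{n},\gamma(s_{j_{n}}))<1/n$ would force $\gamma(s_{j_{n}})\to p$ with $s_{j_{n}}\to b$. Taking an arbitrarily small causally convex neighbourhood $U$ of $p$ furnished by strong causality, the causal curve $\gamma$ would enter $U$ at two such parameters and, by causal convexity, stay in $U$ between them; letting $U$ shrink would exhibit $p$ as a genuine future endpoint of $\gamma$, contradicting its inextensibility. Hence $\mu$ is future-inextensible, and applying the preceding characterization of TIPs to $\mu$ shows that $I^{-}(\gamma)=I^{-}(\mu)$ is a TIP.
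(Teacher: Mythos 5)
Your proof is correct, but it is worth noting that the paper does not actually prove this statement at all: its ``proof'' is a citation to Prop.~3.32 of Flores--Herrera--S\'anchez \cite{FHS11}, just as the preceding proposition (the characterization of TIPs as pasts of future-inextensible \emph{timelike} curves) is a citation to Hawking--Ellis. What you have done is genuinely different in kind: you give a self-contained argument that reduces the causal-curve statement to the timelike characterization, by constructing a future-inextensible timelike curve $\mu$ with $I^{-}(\mu)=I^{-}(\gamma)$. Your construction is sound at every step that matters: the push-up property justifies both the nesting $I^{-}(\gamma(s_{n}))\subseteq I^{-}(\gamma(s_{n+1}))$ and the non-emptiness of the open sets $I^{+}(r_{n-1})\cap I^{+}(x_{n})\cap I^{-}\bigl(\gamma(s_{j_{n}})\bigr)$ from which the chain $\{r_{n}\}$ is drawn; the two inclusions $I^{-}(\mu)\subseteq I^{-}(\gamma)$ (past-set property plus $r_{n}\ll\gamma(s_{j_{n}})$) and $I^{-}(\gamma)\subseteq I^{-}(\mu)$ (density of $\{x_{k}\}$ together with $x_{n}\ll r_{n}$ by construction) are both complete; and the inextensibility argument correctly isolates where strong causality enters, since convergence of $\gamma(s_{j_{n}})$ along a subsequence is upgraded to convergence of the whole tail of $\gamma$ precisely by taking arbitrarily small causally convex neighbourhoods of the putative endpoint $p$. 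Two points you use implicitly and could flag: the existence of a countable dense subset of $I^{-}(\gamma)$ rests on second countability of $M$ (a standing assumption for space-times), and the concatenated curve $\mu$ is only piecewise smooth, which is harmless since the timelike-curve characterization of TIPs holds for piecewise timelike curves. What your approach buys is independence from \cite{FHS11}: the proposition becomes a corollary of the Hawking--Ellis characterization alone, which is the only external input the paper's surrounding discussion genuinely needs.
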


\begin{proof}
See \cite[Prop. 3.32]{FHS11}.
\end{proof}

Now, we are ready for the classical definition of GKP $c$-boundary.

\begin{definition}
We define the \emph{future (past) causal boundary}, or \emph{future (past) $c$-boundary} of $M$, as the set of all TIPs (TIFs).
\end{definition}

Observe that any point $p\in M$ can be identified with the PIP $I^{-}\left(p\right)$ as well as the PIF $I^{+}\left(p\right)$, moreover it is possible that there exist a TIP and TIF identified with the same point at the boundary (as TIP $C$ and TIF $D$ in Figure \ref{diapositiva4}).
Then, in order to define the causal completion of $M$, a suitable identification between sets of IPs and IFs is needed. 
This is beyond the scope of this work, but \cite{FHS11} and its references can be consulted for further information.

The question arising now is if all TIPs in the future $c$-boundary can be defined by the chronological past of a light ray. 
Unfortunately, this is not always true because there may be TIPs that can only be defined by time-like curves as the following example shows and which implies that the $c$-boundary and $l$-boundary are different in general.
We will denote by $I^{\pm}\left(\cdot , V\right)$ the chronological relations $I^{\pm}\left(\cdot\right)$ restricted to $V$. 
It is clear that $I^{\pm}\left(\cdot , V\right) \subset I^{\pm}\left(\cdot\right) \cap V$, but  equality does not always hold. 

\begin{example}\label{example-Low-not-GKP}  A simple example comparing the $c$-boundary and the $l$-boundary.

Let $\mathbb{M}^{3}$ be the $3$--dimensional Minkowski space-time  and $\mathcal{N}$ its space of light rays. 
Let us choose any point $\omega\in \mathbb{M}^{3}$ and consider the space-time  $M$ as the restriction of $\mathbb{M}^{3}$ to any open half $K\subset \mathbb{M}^{3}$ of a solid cone with vertex in $\omega$ such that $K\subset I^{-}\left(\omega\right)$, as figure \ref{diapositiva5} shows. 
Notice that $M=I^{-}\left(\omega\right)$ can also be considered.
Observe that there exists a light ray $\gamma$ arriving at points like $p^{*}$, so a point $X^{+}_{\gamma}\in \partial^{+}\Sigma_M$ can be defined by $\gamma$, and notice that $p^{*}$ can be identified with the TIP $I^{-}\left(\gamma,M \right)$.
But also observe that the point $\omega$ is not accessible by any light ray in $M=K$ so there is no point in the future  $l$-boundary corresponding to the TIP $M=I^{-}\left(\mu, M\right)$ defined by the future--inextensible timelike curve $\mu$ ending at $\omega$ shown in the picture.

\begin{figure}[h]
  \centering
    \includegraphics[scale=0.25]{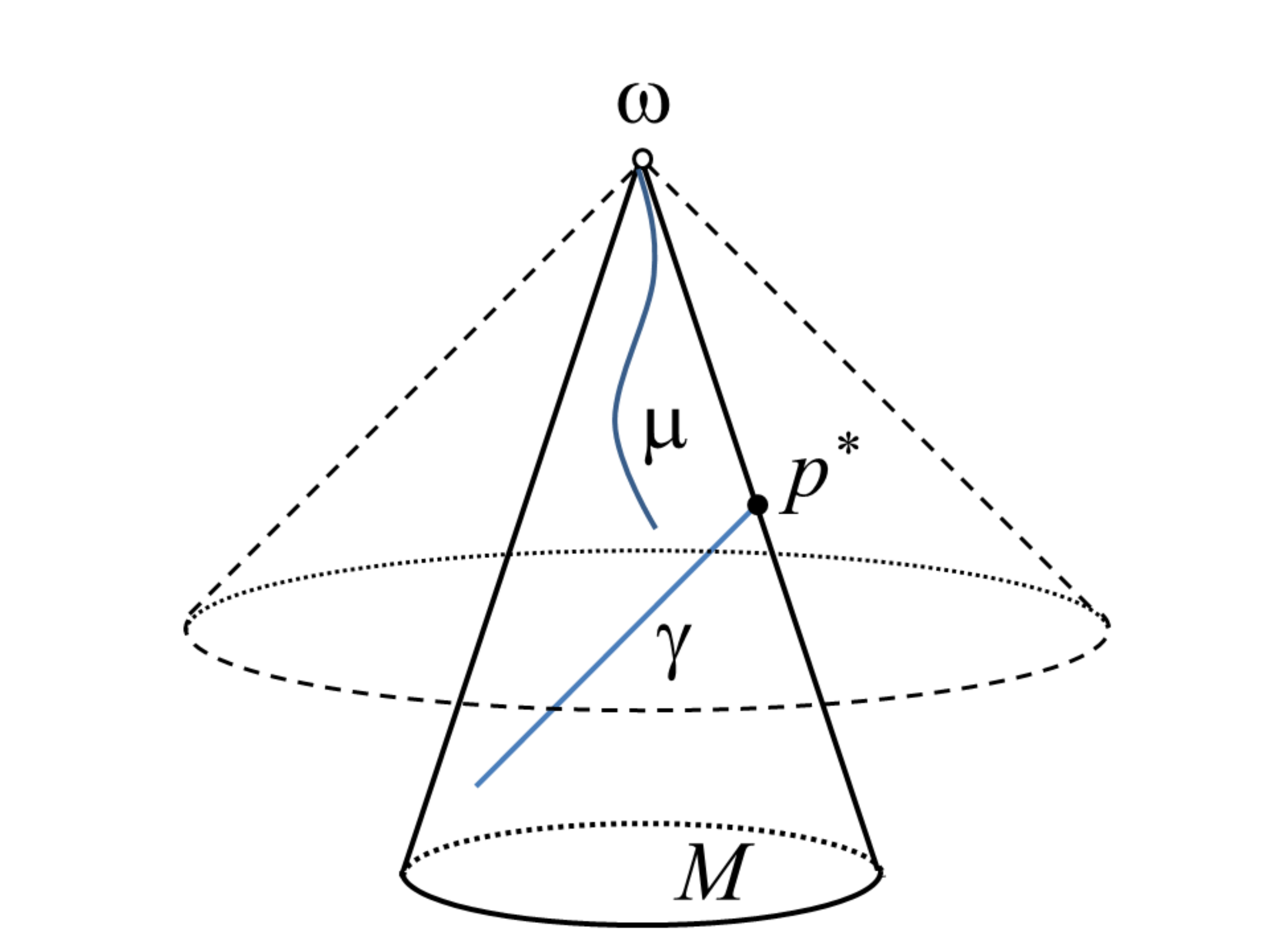}
  \caption{The $l$-boundary is not GKP.}
  \label{diapositiva5}
\end{figure}

\end{example}

However in spite of the previous example, we can see that the $l$-boundary is closely related to the GKP $c$-boundary when we include some topological constraints to the space-time.   The considerations to follow apply in any dimension provided that the limiting distributions $\oplus$, $\ominus$ exist (similarly as was remarked previously in Sect. \ref{sec:construction-Low} in various occasions) and unless stated explicitly we will not be restricted to the 3-dimensional setting.

As a first step, it is possible to study the $l$-boundary corresponding to the restriction of a space-time  $M$ to a suitable open set $V\subset M$. 
The aim of it is to know how to identify $\partial \Sigma$ under na\"{i}ve conditions. 
The study of the future $l$-boundary $\partial^{+} \Sigma$ is enough for this purpose because the past one is analogous.  

Consider $V\subset M$ a relatively compact,  globally hyperbolic, causally convex and convex normal  open set and $\mathcal{U}=\left\{\gamma\in \mathcal{N}:\gamma\cap V\neq \varnothing\right\}$. 
We denote by $\oplus^{V}$ the field of limiting subspaces tangent to the skies of points in a future-directed light ray when they tend to the future boundary of $V$ that, as indicated before, will be assumed to exist  (later on we will discuss a situation where the existence of the limit will be guaranteed). 
So, given $\gamma\in \mathcal{U}\subset \mathcal{N}$ we can give a  future--directed parameterization of  the segment of $\gamma$ in $V$ by $\gamma:\left(a,b\right)\rightarrow V$. Then: 
\[
\oplus^{V}_{\gamma}=\oplus^{V}\left(\gamma\right)=\lim_{s\mapsto b^{-}}T_{\gamma}S\left(\gamma\left(s\right)\right)
\]

Observe that a curve $c:I\rightarrow \mathcal{U}$ is the integral curve of $\oplus^{V}$ passing through $\gamma$ at $\tau=0$ if 
\[
\left\{
\begin{array}{l}
 c'\left(\tau\right)\in \oplus^{V}\left(c\left(\tau\right)\right) \\
c\left(0\right)=\gamma 
\end{array}
\right.
\] 

Now, consider $x\in \partial V\subset M$ such that $\lim_{s\mapsto b^{-}}\gamma\left(s\right) = x$ and let $\Gamma:I\rightarrow X\cap\mathcal{U}$ be a curve travelling along the light rays of the sky $X=S\left(x\right)$ in $\mathcal{U}$ such that $\Gamma\left(\tau\right)=\gamma_{\tau}$ with $\gamma_0=\gamma$ and $\gamma_{\tau} \cap \overline{V}$ has a future endpoint at $x$ for all $\tau\in I$.
Then it is possible to construct a variation of light rays $\mathbf{f}:I\times \left[0,1\right]\rightarrow \overline{V}\subset M$ such that $\mathbf{f}\left(\tau,\cdot\right)\subset \gamma_{\tau}\in X\cap\mathcal{U}$ and $\mathbf{f}\left(\tau,1\right)=x$ for all $\tau\in I$. 
It is clear that for all $\tau\in I$ we have
\[
\Gamma'\left(\tau\right)\in T_{\gamma_{\tau}}X
\]
and using the definition of $\oplus^{V}$, then 
\[
\oplus^{V}_{\Gamma\left(\tau\right)}=\oplus^{V}_{\gamma_{\tau}}= \lim_{s\mapsto 1^{-}}T_{\gamma_{\tau}}S\left(\gamma_{\tau}\left(s\right)\right) = T_{\gamma_{\tau}}S\left(\gamma_{\tau}\left(1\right)\right) = T_{\gamma_{\tau}}S\left(\mathbf{f}\left(\tau,1\right)\right) = T_{\gamma_{\tau}}X
\]
and therefore, for all $\tau\in I$ 
\[
\Gamma'\left(\tau\right)\in \oplus^{V}_{\Gamma\left(\tau\right)} \, .
\]
This implies that the orbit $X^{+}\in \partial^{+}\Sigma_{V}$ of $\oplus^{V}$ going across $\gamma$ is just the set of light rays of the sky $X$ coming out of $V$. 
So, for any of such extendible space-time  $V$,  the $l$-boundary is made up of skies of points at the boundary of $V$. 

Let us denote by $\gamma_V = \gamma \cap V$ the segment of the light ray $\gamma$ contained in $V$.
Consider any $\gamma, \mu \in X^{+}\in \partial^{+}\Sigma_V$ and any $q\in I^{-}\left(\gamma_V,V\right)$.
Since $x\in I^{+}\left(q\right)$ then $\mu_V \cap I^{+}\left(q\right) \neq \varnothing$ and hence there is a timelike curve $\lambda:\left[0,1\right]\rightarrow M$ such that $\lambda\left(0\right)=q\in V$ and $\lambda\left(1\right)\in \mu_V \subset V$. 
But this implies that $\lambda\subset V$ because its endpoints are in a causally convex open set, therefore $q\in I^{-}\left(\mu_V,V\right)$. 
This shows that $I^{-}\left(\gamma_V,V\right) = I^{-}\left(\mu_V,V\right)$ for any $\gamma,\mu\in X^{+}$ and therefore there is a well defined map between the future GKP $c$-boundary  and the future $l$-boundary of $V$ given by:
\[
X^{+} \mapsto I^{-}\left(\gamma_V,V\right)
\]
because it is independent of the chosen light ray $\gamma \in X^{+}$

Since there are no imprisoned causal curves in $V$, every light ray $\gamma_V \subset V$ has endpoints in the boundary $\partial V \subset M$, it follows that
\[
\widetilde{\mathcal{U}}\subset \widetilde{\mathcal{N}} \subset \mathbb{P}\left(\mathcal{H}\right)
\]
is an open manifold with boundary and therefore
\[
\partial^{+}\widetilde{\mathcal{U}}\hookrightarrow \widetilde{\mathcal{N}}.
\]
is a homeomorphism onto its image.

We have proven above that any orbit $X^{+}$ of $\oplus^{V}$ is contained in the sky $X=S\left(x\right)$ where $x\in \partial V$, then the set of leaves in the foliation $\left(\mathcal{D}_{V}^{+}\right)^{\sim}$ of tangent spaces to the orbits coincide with the set of leaves in the foliation $\left(\mathcal{D}\right)^{\sim}$ of tangent spaces to the skies of points of $M$ restricted to $\partial^{+}\widetilde{\mathcal{U}}$. Thus using equation (\ref{boundary-chain}) we get:
\[
\left(\partial^{+}\Sigma_V\right)^{\sim} \simeq \partial^{+}\widetilde{\mathcal{U}} / \left(\mathcal{D}_{V}^{+}\right)^{\sim} = \partial^{+}\widetilde{\mathcal{U}} / \mathcal{D}^{\sim} \subset \widetilde{\mathcal{N}} / \mathcal{D}^{\sim} = \Sigma^{\sim} \, .
\]
Using now the inverse of the diffeomorphism $S^{\sim}: M \rightarrow \Sigma^{\sim}$ of Lemma \ref{conj20}, we obtain that  $\left(S^{\sim}\right)^{-1}\left(\partial^{+}\widetilde{\mathcal{U}} / \mathcal{D}^{\sim}\right)$ is contained in $\partial V$, then the topology of $\left(\partial^{+}\Sigma_V\right)^{\sim}\simeq\left(S^{\sim}\right)^{-1}\left(\partial^{+}\widetilde{\mathcal{U}} / \mathcal{D}^{\sim}\right)$, and therefore also of $\partial^{+}\Sigma_V$, is induced by the ambient manifold $M$.
Moreover, observe that $\left(S^{\sim}\right)^{-1}\left(\partial^{+}\widetilde{\mathcal{U}} / \mathcal{D}^{\sim}\right)$ is formed by all points in $\partial V$ accessible by a light ray. 

We consider now the case where no open segment of any light ray passing through $V$ is contained in $\partial V$, that is, we have the following definition:

\begin{definition} We will say that $p \in \partial V \subset M$ is light-transverse if any segment of light ray $\gamma:\left[a,b\right]\rightarrow M$ with $p \in \gamma$ and such that  $\gamma\left(a\right)\in V$ and $\gamma\left(b\right)\notin V$ satisfies that $\gamma \cap \partial V = \{ p \}$.   We will say that $V$ is light-transverse if every $p\in \partial V$ is light-transverse.
\end{definition}
This is clearly satisfied for $V=I^{+}\left(x\right)\cap I^{-}\left(y\right)$ such that $J^{+}\left(x\right)\cap J^{-}\left(y\right)$ is closed.  Notice that  if $M$ is a causally simple space-time then $J^{\pm}\left(x\right)$ is closed, then the previous set $V$ will be light-transverse.  
Then, it is easy to show that for any $\overline{p}\in\partial V$ accessible by light rays in $V$ there is a neighbourhood $W\subset \partial V$ such that any $\overline{q}\in W$ is accessible by light rays in $V$.

So, let us assume that there is a light ray $\gamma$ passing through a given $\overline{p}\in\partial V$. 
We can take a relatively compact, differentiable, space-like local hypersurface $C$ such that $\overline{p}\in C-\partial C$. 
If $\gamma$ is parametrized as the future--directed null geodesic verifying $\gamma\left(0\right)=\overline{p}$, then we can construct a non--zero differentiable null vector field $\widetilde{Z}\in \mathfrak{X}_C$ on $C$ such that $\widetilde{Z}_{\overline{p}} =  \gamma'\left(0\right)$. 
Under these conditions, we will apply the following result.

\begin{lemma}\label{extend-Z}
Let $\widetilde{C}$ be a  differentiable, local space-like hypersurface and $\widetilde{Z}\in \mathfrak{X}(\widetilde{C})$ a non-zero differentiable vector field defined on $\widetilde{C}$ and transverse to $\widetilde{C}$, then for any differentiable spacelike surface $C \subset \widetilde{C}$ such that $C$ is relatively compact in $\widetilde{C}$,    there exists $\epsilon>0$ such that  
\[
\begin{tabular}{ccl}
$F\colon C\times\left(-\epsilon , \epsilon \right)$ & $\rightarrow$ & $M$ \\
 $\left(p,s\right)$ & $\mapsto$ & $F\left(p,s\right)=\mathrm{exp}_p \left(s \widetilde{Z}_p\right)$
\end{tabular}
\]
is a diffeomorphism onto its image.
\end{lemma}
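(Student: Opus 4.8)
The plan is to recognise Lemma \ref{extend-Z} as the standard normal-exponential (tubular neighbourhood) construction associated with a vector field transverse to a hypersurface, and to run the classical three-part argument: first check that $F$ is smooth with everywhere-invertible differential along the zero slice $C\times\{0\}$; then invoke the inverse function theorem for \emph{local} injectivity; and finally upgrade to \emph{global} injectivity on a uniform slab $C\times(-\epsilon,\epsilon)$ using relative compactness. Throughout I would work with the closure $\overline{C}$ of $C$ inside $\widetilde{C}$, which is compact by hypothesis and, since $\widetilde{C}$ is a boundaryless local hypersurface, consists entirely of interior points of $\widetilde{C}$.

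First I would verify that $F$ is well defined and smooth for $s$ small. Because $M$ need not be geodesically complete, $\exp_p\!\left(s\widetilde{Z}_p\right)$ is only guaranteed for small $s$; but the domain of the geodesic exponential map is an open subset of $TM$ containing the zero section, and $\{\widetilde{Z}_{\bar p}:\bar p\in\overline{C}\}$ is a compact subset of $TM$, so there is a single $\epsilon_0>0$ with $s\widetilde{Z}_{\bar p}$ in the domain of $\exp$ for all $\bar p\in\overline{C}$ and $|s|<\epsilon_0$. Smoothness of $F$ on $\overline{C}\times(-\epsilon_0,\epsilon_0)$ then follows from smoothness of $\exp$ and of $\widetilde{Z}$.

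Next I would compute the differential of $F$ along the slice $s=0$. Since $F(\bar p,0)=\exp_{\bar p}(0)=\bar p$, the restriction of $dF_{(\bar p,0)}$ to $T_{\bar p}\widetilde{C}$ is the inclusion $T_{\bar p}\widetilde{C}\hookrightarrow T_{\bar p}M$, while $\partial_s F(\bar p,0)=\widetilde{Z}_{\bar p}$ is the initial velocity of the geodesic $s\mapsto\exp_{\bar p}(s\widetilde{Z}_{\bar p})$. Because $\widetilde{Z}$ is transverse to the hypersurface $\widetilde{C}$, the vector $\widetilde{Z}_{\bar p}$ is not tangent to $\widetilde{C}$, and since $\dim\widetilde{C}=\dim M-1$ this yields the direct sum decomposition $T_{\bar p}M=T_{\bar p}\widetilde{C}\oplus\mathbb{R}\,\widetilde{Z}_{\bar p}$. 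Hence $dF_{(\bar p,0)}$ is a linear isomorphism for every $\bar p\in\overline{C}$, and by the inverse function theorem $F$ is a local diffeomorphism on a neighbourhood of each $(\bar p,0)$.

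The hard part will be promoting this pointwise local injectivity to injectivity on a uniform slab, and it is precisely here that relative compactness is indispensable. I would argue by contradiction: if no common $\epsilon$ worked, there would exist sequences $(p_n,s_n)\neq(q_n,t_n)$ in $\overline{C}\times(-\epsilon_0,\epsilon_0)$ with $F(p_n,s_n)=F(q_n,t_n)$ and $s_n,t_n\to 0$. By compactness of $\overline{C}$ I may pass to subsequences with $p_n\to p$ and $q_n\to q$ in $\overline{C}$; continuity of $F$ together with $s_n,t_n\to 0$ forces $F(p,0)=F(q,0)$, whence $p=q$ since $F(\cdot,0)$ is the injective inclusion of $\overline{C}$. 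But then both sequences accumulate at the single point $(p,0)$, around which $F$ is a local diffeomorphism and in particular injective, contradicting $(p_n,s_n)\neq(q_n,t_n)$ for large $n$. Taking $\epsilon$ to be the resulting uniform radius, $F$ is an injective local diffeomorphism on $\overline{C}\times(-\epsilon,\epsilon)$; being an open, continuous, injective map with smooth local inverses, it is a diffeomorphism onto its (open) image. Restricting to $C\times(-\epsilon,\epsilon)$ gives the statement. The only routine caveat is that the contradiction argument is cleanest on the compact set $\overline{C}$ and is then restricted to $C$; since $\overline{C}$ lies inside the boundaryless hypersurface $\widetilde{C}$, every limit point encountered is an interior point where $F$ is genuinely a local diffeomorphism, so no boundary subtleties arise.
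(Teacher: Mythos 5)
Your proof is correct, and it follows the same overall tubular-neighbourhood strategy as the paper: a uniform parameter range obtained from relative compactness, computation of $dF_{(p,0)}$ along the zero slice, the inverse function theorem, and a final upgrade to global injectivity. Two steps are executed differently, one of them substantively. First, where you obtain the isomorphism $dF_{(p,0)}$ directly from transversality of $\widetilde{Z}$ and the splitting $T_{p}M = T_{p}\widetilde{C}\oplus\mathbb{R}\,\widetilde{Z}_{p}$, the paper instead propagates an orthonormal frame of $C$ by parallel transport along the geodesics and checks that $dF_{(p,0)}$ carries a basis to a basis; these are interchangeable. Second, and more importantly, your global injectivity step is genuinely different: the paper builds the absence of conjugate points into its initial choice of neighbourhoods $U^{p}$ and radii $\delta_{p}$, extracts a finite subcover to get a uniform $\epsilon$, and then asserts that injectivity of $F$ on $C\times(-\epsilon,\epsilon)$ follows ``since there are not conjugated points in the null geodesics $\gamma_{q}$.'' Your sequence-compactness contradiction argument --- non-injectivity at every scale produces two distinct sequences with equal images converging to a common point $(p,0)$, contradicting the local injectivity furnished by the inverse function theorem there --- is the standard and more complete way to close this step: absence of conjugate points controls degeneracies along each individual geodesic, but it does not by itself exclude geodesics issuing from two distinct points of $C$ meeting at small parameter values, which is exactly what the uniform-slab argument rules out. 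As a bonus, your route dispenses with conjugate points entirely, so it buys both rigor at the key step and a slightly leaner set of ingredients.
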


\begin{proof}
For every  $p\in \widetilde{C}$ there are a neighbourhood $U^{p}\subset \widetilde{C}$ and $\delta_p >0$ such that for all $x\in U^{p}$ the geodesic $\gamma_x\left(s\right)\equiv\mathrm{exp}_x\left(s\widetilde{Z}_x\right)$ is defined for all $s < \left|\delta_p\right|$ without conjugate points. 
Since $C$ is relatively compact in $\widetilde{C}$, there exists a finite subcovering $\left\{U^{p_i}\right\}$ of $C$.

Fixing $\delta=\mathrm{min}\left\{\delta_{p_{i}}\right\}$ then for all $p\in C$ the null geodesic $\gamma_p\left(s\right)$ is defined for $s< \left|\delta\right|$.
Then we can define
\[
\begin{tabular}{ccl}
$F\colon C\times\left(-\delta , \delta \right)$ & $\rightarrow$ & $M$ \\
$\left(p,s\right)$ & $\mapsto$ & $F\left(p,s\right)=\mathrm{exp}_p (s \widetilde{Z}_p ) \, ,$
\end{tabular}
\]
and if $q=F\left(p,s\right)=\gamma_p\left(s\right)$ then $Z_q\equiv\gamma'_p\left(s\right)$ is an extension of $\widetilde{Z}$ to the open neighbourhood of $C$ given by $\overline{W}=F\left(C\times \left(-\delta, \delta\right)\right)\subset M$.
By the locality of $C$, we can choose an orthonormal frame $\left\{\widetilde{E}_j\right\}$ on $C$ and propagate it to the whole $\overline{W}$ by parallel transport along every $\gamma_p$ for all $p\in C$.
For every $\left(p,0\right)\in C\times \left(-\delta , \delta\right)$ we have
\[
\begin{tabular}{l}
$dF_{\left(p,0\right)}\left(\left(\mathbf{0}_p,\left.\frac{\partial}{\partial s}\right|_{0}\right)\right) = \widetilde{Z}_p\in T_pM$ \\
\\
$dF_{\left(p,0\right)}\left(((\widetilde{E}_j)_p,\mathbf{0}_0)\right) = (\widetilde{E}_j )_p \in T_p M$
\end{tabular}
\]
where $\frac{\partial}{\partial s}$ is the tangent vector field of the curves $\alpha_q\left(s\right)=\left(q,s\right)\in C\times \left(-\delta,\delta\right)$. 
Since $dF_{\left(p,0\right)}$ maps a basis of $T_{\left(p,0\right)}\left(C\times \mathbb{R}\right)\approx T_pC \times T_0 \mathbb{R}$ into a basis of $T_pM$, then it is an isomorphism and hence $F$ is a local diffeomorphism.
So, there exists a neighbourhood $H^p \times \left(-\epsilon_p , \epsilon_p\right)$ of $\left(p,0\right)\in C\times \left(-\delta,\delta\right)$ with $0<\epsilon_p <\delta$ such that the restriction of $F$ is a diffeomorphism. 
Again, since $C$ is relatively compact, then from the covering $\left\{H^p\right\}$ we can extract a finite subcovering $\left\{H^k\right\}$ of $C$, then taking $\epsilon = \mathrm{min}\left\{\epsilon_k\right\}$ we have 
\[
C\times \left(-\epsilon,\epsilon\right)=\bigcup_k H^k \times \left(-\epsilon,\epsilon\right)
\]
Calling $W=F\left(C\times \left(-\epsilon,\epsilon\right)\right)$ then for any $\left(p,s\right)\in C\times \left(-\epsilon,\epsilon\right)$, the map $F:C\times \left(-\epsilon,\epsilon\right)\rightarrow W$ is a local diffeomorphism.
By construction, this restriction of $F$ is surjective, and since there are not conjugated points in the null geodesics $\gamma_q$, then we get the injectivity. 
Therefore we conclude that $F:C\times \left(-\epsilon,\epsilon\right)\rightarrow W$ is a global diffeomorphism.
\end{proof}
\vspace{3mm}

If we apply now Lemma \ref{extend-Z} to the proposed hypersurface $C$, then the image of the map $F$ is an open neighbourhood of $\overline{p}\in M$. 
We can take a nested sequence $\left\{C_n\right\}\subset C$ of neighbourhoods of $\overline{p}$ in $C$ converging to $\left\{\overline{p}\right\}$ and restrict $F$ to $C_n\times\left(-\epsilon , \epsilon \right)$. 
Let us assume that for every $C_n$ there exists a null geodesic segment $\gamma_n=F\left(q_n,\left(0 , \epsilon \right)\right)$  fully contained in $V$, then for any $0<s< \epsilon $ the sequence $F\left(q_n,s\right)\mapsto \gamma\left(s\right)$ as $n$ increases.
Hence $\gamma\left(\left(0, \epsilon\right)\right)\subset \partial V$ since $\gamma\left(\left(0,\epsilon\right)\right)\cap V=\varnothing$, therefore $\left.\gamma\right|_{\left(0,\epsilon\right)}$ is contained in $\partial V$ contradicting that there is no segment of a light ray contained in $\partial V$.

On the other hand, if for every $C_n$ there is a null geodesic segment $\gamma_n=F\left(q_n,\left(-\epsilon , 0\right)\right)$ without points in $V$, then as done before, we have that $\gamma\left(\left(-\epsilon,0\right)\right)\subset \partial V$ but this contradicts that $\gamma\left(\left(-\epsilon,0\right)\right)\subset V$. 

Therefore, there exist $C_k\subset C$ such that for all $q\in C_k$ the null geodesic segment $\gamma_q = F\left(q,\cdot\right)$ has endpoints $\gamma_q\left(s_1\right)\in V$ and $\gamma_q\left(s_2\right)\in M-V$ with $-\epsilon < s_1 < s_2 < \epsilon$.
Since $\partial V$ is a topological hypersurface then $B=F\left(C_k,\left(-\epsilon , \epsilon \right)\right)\cap \partial V$ is an open set of $\partial V$ such that all points in $B$ are accessible by future--directed null geodesic.  Hence we conclude that the set of light-transverse points in $\partial V$ is an open set relative to $\partial V$ with the induced topology from $M$.

Then we may consider the open subset $\partial V_r$ of the future $l$-boundary $\partial^{+}\Sigma_V$ consisting of light-transverse accesible by null geodesic points in $\partial V$.
It is also known that the future $c$-boundary of $V$ is also topologically equivalent to $\partial V\subset M$, so the future $l$-boundary is equivalent to the future $c$-boundary in the set $\partial V_r$.
Thus we have proved:

\begin{proposition}  Let $V\subset M$ be a light-transverse, globally hyperbolic, causally convex, convex normal neighbourhood of $M$.  Then the $l$-boundary, $c$-boundary and topological boundary $\partial V$ of $V$ coincide in the set of light-transverse points in $\partial V$ which are accessible by null geodesics in $V$.
\end{proposition}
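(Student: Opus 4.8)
The plan is to collect the three descriptions of the boundary produced in the preceding discussion and to verify that, on the distinguished open set of light-transverse accessible points, the natural maps relating them are mutually compatible homeomorphisms onto the same subset of $\partial V$. First I would fix the future direction, the past one being entirely analogous, and recall the maps into $\partial V$ already manufactured above. On the $l$-boundary side we showed that every orbit $X^{+}$ of the limiting distribution $\oplus^{V}$ is exactly the set of light rays of a sky $S(x)$, with $x\in\partial V$, leaving $V$ through the common future endpoint $x$; this furnishes a map $\partial^{+}\Sigma_{V}\to\partial V$, $X^{+}\mapsto x$, whose image is precisely the set of points of $\partial V$ accessible by a null geodesic of $V$. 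Composing with the diffeomorphism $S^{\sim}\colon M\to\Sigma^{\sim}$ of Proposition~\ref{conj20} and using the identification $(\partial^{+}\Sigma_{V})^{\sim}\simeq\partial^{+}\widetilde{\mathcal{U}}/\mathcal{D}^{\sim}\subset\Sigma^{\sim}$ attached to (\ref{boundary-chain}), this map is realized as the inclusion of an accessible subset of $\partial V$ carrying the subspace topology inherited from $M$.

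Second, on the $c$-boundary side, I would invoke the map $X^{+}\mapsto I^{-}(\gamma_{V},V)$, which is well defined because $I^{-}(\gamma_{V},V)=I^{-}(\mu_{V},V)$ for any two light rays $\gamma,\mu\in X^{+}$, a consequence of the causal convexity of $V$. Since $V$ is relatively compact, globally hyperbolic and causally convex, the future $c$-boundary of $V$, namely its set of TIPs, is topologically identifiable with $\partial V$, the boundary point $x$ being represented by the terminal past set $I^{-}(\gamma_{V},V)$ it determines in $V$. Under this identification the $l$-boundary point $X^{+}$ and the TIP $I^{-}(\gamma_{V},V)$ both name the same point $x\in\partial V$, so the two maps into $\partial V$ agree wherever they are defined.

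Third, I would pin down the common domain of definition. Here the light-transverse hypothesis is essential: by the argument preceding the statement, which feeds the vector-field extension of Lemma~\ref{extend-Z} through a nested sequence $\{C_{n}\}$ shrinking to an accessible $\overline{p}\in\partial V$, one rules out the degenerate possibility that an open null segment is trapped inside $\partial V$ and concludes that the set $\partial V_{r}$ of light-transverse points accessible by null geodesics in $V$ is open relative to $\partial V$. On $\partial V_{r}$ each point is the clean transverse exit point of a genuine light ray of $V$, so the three assignments --- topological boundary point, sky-orbit of $\oplus^{V}$, and TIP $I^{-}(\gamma_{V},V)$ --- select the same point of $\partial V_{r}$, and the resulting maps from the $l$-boundary and the $c$-boundary onto $\partial V_{r}$ are homeomorphisms for the topology induced from $M$.

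The main obstacle is not any single calculation but the role of the light-transverse condition in guaranteeing both that the correspondence is locally one-to-one and that $\partial V_{r}$ is open: without transversality a whole null segment could lie in $\partial V$, collapsing distinct boundary notions onto a single sky and destroying the matching of topologies. Thus the crux is Lemma~\ref{extend-Z} together with the openness argument, which promote the pointwise coincidence of the three boundaries to a genuine identification on an open piece of $\partial V$.
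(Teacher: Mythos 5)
Your proposal is correct and takes essentially the same approach as the paper: the paper's ``proof'' of this proposition is precisely the discussion preceding it, and you reproduce its three ingredients---the identification of each orbit $X^{+}$ of $\oplus^{V}$ with the sky $S(x)$ of an accessible point $x\in\partial V$, the causal-convexity argument making $X^{+}\mapsto I^{-}\left(\gamma_{V},V\right)$ well defined together with the topological identification of the future $c$-boundary of $V$ with $\partial V$, and the openness of the set $\partial V_{r}$ of light-transverse accessible points obtained from Lemma~\ref{extend-Z} and the nested-sequence argument. The only loose ends in your write-up (the unproved assertion that the $c$-boundary of $V$ is topologically $\partial V$, and the assumed existence of the limit field $\oplus^{V}$) are equally present in the paper's own treatment.
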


The previous procedure can be carried out for more general space-times $V$.  The only condition needed is light-transversality at points in the boundary, meaning by that that any null geodesic $\gamma_q$ defined by the diffeomorphism $F$ intersects $\partial V$ ``transversally'' even if $\partial V$ is not smooth (that is, crossing $\partial V$ and not remaining in $\partial V$ for any interval of the parameter of $\gamma_q$).   Clearly, if $\partial V$ is a smooth submanifold this notion becomes just ordinary  transversality.

Now, how can we deal with a general case in order to calculate points in the $l$-boundary when there is not any larger space-time  containing $M$? 
We can use the previous calculations. 
Consider any light ray $\gamma\in \mathcal{N}$, then we can parametrize an inextensible future--directed segment of it by $\gamma:\left[0,b\right)\rightarrow M$. 
We can cover this segment by means of a countable collection $\left\{V_n\right\}$ formed by relatively compact globally hyperbolic, causally convex and convex normal neighbourhoods $V_n$.
Without any lack of generality, we can assume that $V_n \cap V_k \neq \varnothing$ if and only if $n = k\pm 1$ and $n$ increases when $\gamma\left(s\right)$ moves to the future.
If we denote by $x_n\in \partial V_n$ the future endpoint of $\gamma\cap V_n$, then the orbit of $\oplus^{V_n}$ passing through $\gamma$ is $X_n\cap \mathcal{U}_n \subset \mathcal{N}$, or in other words, it is defined by $X_n \in \Sigma$. 
In this way, the orbit $X^{+}\in \partial ^{+} \Sigma$ of $\oplus:\mathcal{N}\rightarrow \mathbb{P}\left(\mathcal{H}\right)$ can be constructed by the limit in $\mathcal{N}$ of the sequence $\left\{X_n\right\}$ if such limit exists, something that automatically happens in dimension three as we saw in Section~\ref{sec:Low-boundary}.

We may summarize the previous discussion in the following Proposition.

\begin{proposition}  Let $(M,\mathcal{C})$ be a strongly causal sky-separating conformal space-time such that the future limit distribution $\oplus$ exists and such that there is an extension of the conformal structure to the future $l$-boundary $\partial^+ \Sigma$ of $M$ (similarly for the past $l$-boundary $\partial^-\Sigma$).  The future $l$-boundary is equivalent to the future $c$-boundary in the set of light-transverse points in $\partial M = \overline{M} \backslash M$ accesible by future-directed null geodesics  in $\overline{M}$.
\end{proposition}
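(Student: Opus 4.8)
The plan is to globalize the local equivalence of the previous proposition by means of the covering argument sketched in the paragraph immediately above the statement. The essential observation is that every light-transverse boundary point accessible by a future-directed null geodesic is reached along a geodesic segment that can be covered by relatively compact neighbourhoods of the required type, so that the local result applies on each piece, and the point is then to show that the resulting identifications patch consistently. First I would fix a light-transverse point $\overline{p}\in \partial M = \overline{M}\setminus M$ accessible by a future-directed null geodesic $\gamma$ in $\overline{M}$, and parametrize an inextensible future-directed segment $\gamma\colon [0,b)\to M$ with $\gamma(s)\to \overline{p}$. I would then cover this segment by a countable family $\{V_n\}$ of relatively compact, globally hyperbolic, causally convex, convex normal neighbourhoods with $V_n\cap V_k\neq \varnothing$ exactly when $n=k\pm 1$ and $n$ increasing towards the future, precisely as in the preceding discussion. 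Writing $x_n\in \partial V_n$ for the future endpoint of $\gamma\cap V_n$ and $X_n=S(x_n)$, the previous proposition applies to each $V_n$, since each $x_n$ is light-transverse and accessible: it gives that the local $l$-, $c$- and topological boundaries coincide near $x_n$, and that the orbit of $\oplus^{V_n}$ through $\gamma$ is exactly $X_n\cap \mathcal{U}_n$, i.e. is determined by the sky $X_n\in \Sigma$.

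Next I would assemble these local data into a global boundary point. Because $\dim M = 3$ (or, in the general setting, because $\oplus$ is assumed to exist), the sequence of skies $\{X_n\}$ admits a limit $X^{+}\in \partial^{+}\Sigma$ in $\mathcal{N}$, and this limit is precisely the orbit of the global distribution $\oplus$ through $\gamma$. On the causal side, the earlier proposition guarantees that $I^{-}(\gamma)$ is a TIP of $M$, so I would define the candidate comparison map on the light-transverse null-accessible locus by $X^{+}\mapsto I^{-}(\gamma)$. To see this is well defined I would repeat the causal-convexity argument used for a single $V$: for $\gamma,\mu\in X^{+}$ sharing the future endpoint $\overline{p}$ and any $q\in I^{-}(\gamma)$, one has $\overline{p}\in I^{+}(q)$, so $\mu$ meets $I^{+}(q)$; causal convexity of the $V_n$ forces the connecting timelike curve to remain in $M$, whence $I^{-}(\gamma)=I^{-}(\mu)$. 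Injectivity follows because distinct accessible boundary points determine distinct TIPs, and surjectivity onto the relevant subset of the future $c$-boundary is exactly the accessibility hypothesis.

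Finally I would upgrade this bijection to an equivalence of the two boundaries endowed with their induced topologies. Locally, the previous proposition already identifies $(\partial^{+}\Sigma_{V_n})^{\sim}$ with the set of null-accessible points of $\partial V_n$ carrying the topology induced from $M$, and shows the future $c$-boundary of $V_n$ to be topologically $\partial V_n$; the assumed extension of the conformal structure to $\partial^{+}\Sigma$ is what renders these local identifications mutually compatible on the overlaps $V_n\cap V_{n+1}$ and produces a well-defined smooth structure near the limit point. The hard part, and the step deserving the most care, is precisely this compatibility: one must check that the locally restricted pasts $I^{-}(\gamma_{V_n},V_n)$ assemble coherently into the global TIP $I^{-}(\gamma)$ (recall that $I^{-}(\cdot,V_n)$ is genuinely smaller than $I^{-}(\cdot)\cap V_n$ in general), and that the limit sky $X^{+}$ is independent of the chosen covering $\{V_n\}$, so that the patched map is unambiguous and bicontinuous on the whole light-transverse accessible locus. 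Here light-transversality is the crucial hypothesis: it prevents any null geodesic from lingering in $\partial V_n$, so each $x_n$ is an honest transverse exit point, and this is exactly what makes the local-to-global passage, and hence the coincidence of the $l$-boundary and $c$-boundary on this locus, go through.
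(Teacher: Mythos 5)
Your proposal is correct and follows essentially the same route as the paper: the paper's ``proof'' of this proposition is precisely the preceding discussion that you reproduce --- covering the future-directed segment of $\gamma$ by relatively compact, globally hyperbolic, causally convex, convex normal neighbourhoods $V_n$, applying the local proposition (coincidence of the $l$-, $c$- and topological boundaries of each $V_n$ on light-transverse accessible points, with the orbit of $\oplus^{V_n}$ through $\gamma$ given by the sky $X_n = S(x_n)$), constructing the orbit $X^{+}\in\partial^{+}\Sigma$ as the limit of the sequence $\{X_n\}$, and using the causal-convexity argument together with the assumed conformal extension to $\partial^{+}\Sigma$ to get a well-defined identification $X^{+}\mapsto I^{-}(\gamma)$ with TIPs. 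If anything, your write-up is more explicit than the paper itself, which states the proposition as a summary of this discussion and leaves the well-definedness, injectivity and patching-compatibility points that you flag implicit.
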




\section[Examples]{Some examples}\label{sec:examples}

\setcounter{nseccion}{4}

In the present section, we offer some examples in which the previously studied structures will be discussed explicitly.
Although we will focus on $3$--dimensional space-times, we will also deal with $4$--dimensional Minkowski space-time  that will turn out to be useful in the study of two embedded $3$--dimensional examples: Minkowski and de Sitter space-times. 
In these two examples, we will proceed restricting them from the $4$--dimensional Minkowski example as section \ref{sec:embedded} suggests.


\subsection{Embedded spaces of light rays}\label{sec:embedded}

Now, we will deal with some particular cases of embedded space-times.
Let $\overline{M}$ be a $\left(m+1\right)$--dimensional, strongly causal and null pseudo--convex space-time  with metric $\overline{\mathbf{g}}$ where $m\geq 3$. 
We will denote overlined its structures $\overline{\mathcal{N}}$, $\overline{\mathcal{H}}$, etc.
Consider $M \subset \overline{M}$ an embedded $m$--dimensional, strongly causal and null pseudo--convex space-time  equipped with the metric $\mathbf{g}=\left.\overline{\mathbf{g}}\right|_M$ such that any maximal null geodesic in $M$ is a maximal null geodesic in $\overline{M}$. 
Since $M$ is embedded in $\overline{M}$, then trivially $TM$ is embedded in $T\overline{M}$. 

Given a  globally hyperbolic, causally convex and convex normal  open set $\overline{V}\subset \overline{M}$ such that $\overline{C}\subset \overline{V}$ is a smooth space-like Cauchy surface, then clearly $V=\overline{V}\cap M$ is causally convex and contained in a convex normal neighbourhood. 
Moreover, if $\lambda\subset V$ is an inextensible time-like curve, since $\lambda\subset \overline{V}$ then $\lambda$ intersects exactly once to $\overline{C}$, hence the intersection point must be in $C=\overline{C}\cap M$ and therefore $C\subset V$ is a smooth space-like Cauchy surface in $V$. 
This implies that $V$ is also a globally hyperbolic  open set in $M$.

Since the inclusion $TV\hookrightarrow T\overline{V}$ is an embedding, its restriction  $\mathbb{N}\left(C\right)\hookrightarrow\mathbb{N}\left(\overline{C}\right)$ is also an embedding.
Given a fixed timelike vector field $Z\in \mathfrak{X}\left(V\right)$, since $\overline{V}$ is an arbitrary   globally hyperbolic, causally convex and convex normal  open set, without any lack of generality, we can choose any time-like extension $\overline{Z}\in \mathfrak{X}\left(\overline{V}\right)$ of $Z$, that is $Z=\left.\overline{Z}\right|_{V}$. 
For all $v\in \mathbb{N}\left(C\right)\subset \mathbb{N}\left(\overline{C}\right)$ we have 
\[
\mathbf{g}\left(v,Z\right)=\mathbf{g}\left(v,\overline{Z}\right)
\]
Then the map, 
\[
\Omega^{Z}\left(C\right)=\left\{v\in \mathbb{N}\left(C\right): \mathbf{g}\left(v,Z\right)=-1 \right\} \hookrightarrow \Omega^{\overline{Z}}\left(\overline{C}\right)=\left\{v\in \mathbb{N}\left(\overline{C}\right): \mathbf{g}\left(v,\overline{Z}\right)=-1 \right\}
\]
is an embedding.
Again, since  $\mathcal{U}\simeq \Omega^{Z}\left(C\right)$ and $\overline{\mathcal{U}}\simeq\Omega^{\overline{Z}}\left(\overline{C}\right)$, then we have that the inclusion 
\[
\mathcal{N}\supset\mathcal{U} \hookrightarrow \overline{\mathcal{U}}\subset\overline{\mathcal{N}} \, ,
\]
is an embedding. 
Since $\mathcal{N} \hookrightarrow \overline{\mathcal{N}}$ is an inclusion, then it is injective and thus a global embedding.
Therefore also 
\[
T\mathcal{N} \hookrightarrow T\overline{\mathcal{N}}
\]
is another global embedding.

Given a point $x\in M\subset \overline{M}$, its sky $X\in \Sigma$ is the set of all light rays contained in $\mathcal{N}$ passing through $x$, but since every light ray in $\mathcal{N}$ is a light ray in $\overline{\mathcal{N}}$, then calling $\overline{X}\in \overline{\Sigma}$ the sky of $x$ relative to $\overline{\mathcal{N}}$ we have
\[
X = \overline{X}\cap \mathcal{N} \, .
\]

Since the metric in $M$ is just the restriction to $TM$ of the metric in $\overline{M}$, then the contact structure $\mathcal{H}$ of $\mathcal{N}$ is the restriction of the contact structure $\overline{\mathcal{H}}$ of $\overline{\mathcal{N}}$ to the tangent bundle $T\mathcal{N}$, that is
\[
\mathcal{H}_{\gamma} = \overline{\mathcal{H}}_{\gamma} \cap T_{\gamma}\mathcal{N}
\]
for all $\gamma\in \mathcal{N}$.
So, for any $\gamma\in X \subset\mathcal{N}$, it is now clear that 
\[
T_{\gamma}X = T_{\gamma}\overline{X}\cap T_{\gamma}\mathcal{N}=T_{\gamma}\overline{X}\cap \mathcal{H}_{\gamma}
\]
due to $T_{\gamma}X\subset \mathcal{H}_{\gamma}$. 
For a regular parametrization $\gamma:\left(a,b\right)\rightarrow M$, we can write 
\[
T_{\gamma}S\left(\gamma\left(s\right)\right) = T_{\gamma}\overline{S\left(\gamma\left(s\right)\right)}\cap  \mathcal{H}_{\gamma}
\]
and hence, the future limit distribution $\oplus$ is given as:
\[
\oplus_{\gamma}=\lim_{s\mapsto b^{-}} T_{\gamma}S\left(\gamma\left(s\right)\right) = \lim_{s\mapsto b^{-}} T_{\gamma}\overline{S\left(\gamma\left(s\right)\right)}\cap  \mathcal{H}_{\gamma} = \overline{\oplus}_{\gamma} \cap  \mathcal{H}_{\gamma} \, .
\]

If the distribution defined by $\overline{\oplus}$ in $\overline{\mathcal{N}}$ is integrable, then the orbits of $\oplus$ become the orbits of $\overline{\oplus}$ restricted to $\mathcal{N}$, that is
\[
X^{+} = \overline{X}^{+} \cap \mathcal{N} \, .
\] 

After the previous considerations, we can use the contents of the current section to study $3$--dimensional Minkowski and de Sitter space-times as embedded in a $4$--dimensional Minkowski space-time.


\subsection{$4$--dimensional Minkowski space-time}\label{sec:Mink-4}
Consider the 4-dimensional Minkowski space-time  given by $\mathbb{M}^{4}=\left(\mathbb{R}^{4}, \mathbf{g}\right)$ where the metric is given by $\mathbf{g}=-dt\otimes dt + dx\otimes dx + dy\otimes dy + dz\otimes dz$ in the standard coordinate system $\varphi=\left(t,x,y,z\right)$.
We will use the notation $\overline{\mathcal{N}}$, $\overline{\mathcal{H}}$, etc., for the structures related to $\mathbb{M}^{4}$. 

It is known that the hypersurface $\overline{C}\equiv\left\{t=0\right\}$ is a global Cauchy surface then $\overline{\mathcal{N}}$ is diffeomorphic to $\overline{C}\times \mathbb{S}^{2}$ \cite[Sect. 4]{Ch10}. 
We can describe points at the sphere $\mathbb{S}^{2}$ using spherical coordinates $\theta$, $\phi$. Then, we can use $\psi=\left(x,y,z,\theta,\phi\right)$ as a system of coordinates in $\overline{\mathcal{N}}$, where $\psi^{-1}\left(x_0,y_0,z_0,\theta_0,\phi_0\right)=\gamma\in \overline{\mathcal{N}}$ corresponds to the light ray given by
\[
\gamma\left(s\right)=\left(s \hspace{1mm},\hspace{2mm} x_0 + s\cdot \cos \theta_0 \sin \phi_0 \hspace{1mm}, \hspace{2mm} y_0 + s\cdot \sin \theta_0 \sin \phi_0 \hspace{1mm}, \hspace{2mm} z_0 + s\cdot \cos \phi_0  \right)
\]
with $s\in \mathbb{R}$.  

In general, it is possible to calculate the contact hyperplane at $\gamma\in \overline{\mathcal{N}}$ as the vector subspace in $T_{\gamma}\overline{\mathcal{N}}$ generated by tangent spaces to the skies at two different non--conjugate points in $\gamma$, or in other words, if $\gamma\left(s_1\right)$ and $\gamma\left(s_2\right)$ are not conjugate along $\gamma$ then $T_{\gamma}S\left(\gamma\left(s_1\right)\right)\cap T_{\gamma}S\left(\gamma\left(s_2\right)\right)=\left\{ \mathbf{0} \right\}$ and by  dimension counting we see that 
\[
\overline{\mathcal{H}}_{\gamma} = T_{\gamma}S\left(\gamma\left(s_1\right)\right)\oplus T_{\gamma}S\left(\gamma\left(s_2\right)\right) \, .
\]
In case of Minkowski space-time there are no conjugate points along any geodesics, so we will use for this purpose the points $\gamma\left(0\right)$ and any $\gamma\left(s\right)$.
Thus fixed $s$, for any $\left(\theta, \phi\right)$, the curve 
\[
\mu_{\left(\theta, \phi\right)}\left(\tau\right) = \gamma\left(s\right) + \tau \left(1\hspace{1mm},\hspace{2mm} \cos \theta \sin \phi \hspace{1mm}, \hspace{2mm} \sin \theta \sin \phi \hspace{1mm}, \hspace{2mm} \cos \phi  \right) \, ,
\]
describes a null geodesic passing by $\gamma\left(s\right)$ that cut $\overline{C}$ at $\tau=-s$. 
So, the sky of $\gamma\left(s\right)$ can be written in coordinates by
\[
\psi\left(S\left(\gamma\left(s\right)\right)\right) \equiv \left\{
\begin{array}{l}
x\left(\theta, \phi\right) =  x_0 +s\left( \cos \theta_0 \sin \phi_0 - \cos \theta \sin \phi  \right) \, , \\
y\left(\theta, \phi\right) =  y_0 +s\left( \sin \theta_0 \sin \phi_0 - \sin \theta \sin \phi  \right) \, , \\
z\left(\theta, \phi\right) =  z_0 +s\left( \cos \phi_0  - \cos \phi  \right) \, , \\
\theta\left(\theta, \phi\right) =  \theta \, , \\
\phi\left(\theta, \phi\right) =  \phi \, ,
\end{array}
\right. 
\]
and the derivatives of these expressions with respect to $\theta$ and $\phi$ at $\left(\theta, \phi\right)=\left(\theta_0, \phi_0\right)$ give us the generators of the tangent space of the sky $S\left(\gamma\left(s\right)\right)$ at $\gamma$, so
\begin{align*}
T_{\gamma}S\left(\gamma\left(s\right)\right) &= \mathrm{span}\left\{ \textstyle{   s\left( \sin\theta_0 \sin \phi_0 \left( \frac{\partial}{\partial x} \right)_{\gamma} - \cos\theta_0 \sin \phi_0 \left( \frac{\partial}{\partial y} \right)_{\gamma} \right) + \left( \frac{\partial}{\partial \theta} \right)_{\gamma}   } , \right. \\
& \left. \textstyle{   s\left( -\cos\theta_0 \cos \phi_0 \left( \frac{\partial}{\partial x} \right)_{\gamma} - \sin\theta_0 \cos \phi_0 \left( \frac{\partial}{\partial y} \right)_{\gamma}  +\sin \phi_0 \left( \frac{\partial}{\partial z} \right)_{\gamma} \right) + \left( \frac{\partial}{\partial \phi} \right)_{\gamma}    }   \right\}
\end{align*}
and trivially
\[
T_{\gamma}S\left(\gamma\left(0\right)\right) = \mathrm{span}\left\{  \textstyle{ \left( \frac{\partial}{\partial \theta} \right)_{\gamma} ,  \left( \frac{\partial}{\partial \phi} \right)_{\gamma}   }    \right\} \, .
\]
Therefore the contact hyperplane at $\gamma$ is
\begin{align*}
\overline{\mathcal{H}}_{\gamma} &= \mathrm{span}\left\{ \textstyle{   \left( \frac{\partial}{\partial \theta} \right)_{\gamma}, \left( \frac{\partial}{\partial \phi} \right)_{\gamma}, \sin\theta_0  \left( \frac{\partial}{\partial x} \right)_{\gamma} - \cos\theta_0  \left( \frac{\partial}{\partial y} \right)_{\gamma}   }   , \right. \\
& \left. \textstyle{   \cos\theta_0 \cos \phi_0 \left( \frac{\partial}{\partial x} \right)_{\gamma} + \sin\theta_0 \cos \phi_0 \left( \frac{\partial}{\partial y} \right)_{\gamma}  - \sin \phi_0 \left( \frac{\partial}{\partial z} \right)_{\gamma}  } \right\}
\end{align*}
and a contact form is given by:
\[
\overline{\alpha} = \cos \theta \sin \phi \cdot dx + \sin \theta \sin \phi \cdot dy + \cos \phi \cdot dz \, .
\]

For this space-time it is easy to calculate the limit distributions $\overline\oplus$ and $\overline\ominus$.
We will proceed only for $\overline\oplus$ because the case of $\overline\ominus$ is analogous.
Using the definition (\ref{boundary-field}), we have 
\begin{align*}
\overline{\oplus}_{\gamma} &= \lim_{s\mapsto +\infty}T_{\gamma}S\left(\gamma\left(s\right)\right)= \\
&= \mathrm{span}\left\{ \textstyle{ \sin\theta_0 \sin \phi_0 \left( \frac{\partial}{\partial x} \right)_{\gamma} - \cos\theta_0 \sin \phi_0 \left( \frac{\partial}{\partial y} \right)_{\gamma}   }   \right. , \\
& \left. \textstyle{ -\cos\theta_0 \cos \phi_0 \left( \frac{\partial}{\partial x} \right)_{\gamma} - \sin\theta_0 \cos \phi_0 \left( \frac{\partial}{\partial y} \right)_{\gamma}  +\sin \phi_0 \left( \frac{\partial}{\partial z} \right)_{\gamma} } \right\} \, ,
\end{align*}
and therefore $\overline{\oplus}$ defines a integrable distribution whose partial differential equations are: 
\[
\left\{
\begin{array}{l}
\displaystyle{\frac{\partial x }{\partial \alpha}}\left(\alpha, \beta\right) = \sin \theta \sin \phi \vspace{1mm} \\
\displaystyle{\frac{\partial y }{\partial \alpha}}\left(\alpha, \beta\right) = -\cos \theta \sin \phi  \vspace{1mm}\\
\displaystyle{\frac{\partial z }{\partial \alpha}}\left(\alpha, \beta\right) = 0  \vspace{1mm}\\
\displaystyle{\frac{\partial \theta }{\partial \alpha}}\left(\alpha, \beta\right) = 0  \vspace{1mm}\\
\displaystyle{\frac{\partial \phi }{\partial \alpha}}\left(\alpha, \beta\right) = 0 
\end{array}
\right. 
\hspace{5mm}
\left\{
\begin{array}{l}
\displaystyle{\frac{\partial x }{\partial \beta}}\left(\alpha, \beta\right) = -\cos \theta \cos \phi \vspace{1mm} \\
\displaystyle{\frac{\partial y }{\partial \beta}}\left(\alpha, \beta\right) = -\sin \theta \cos \phi \vspace{1mm} \\
\displaystyle{\frac{\partial z }{\partial \beta}}\left(\alpha, \beta\right) = \sin \phi \vspace{1mm} \\
\displaystyle{\frac{\partial \theta }{\partial \beta}}\left(\alpha, \beta\right) = 0 \vspace{1mm} \\
\displaystyle{\frac{\partial \phi }{\partial \beta}}\left(\alpha, \beta\right) = 0 
\end{array}
\right. 
\]
and its solution with initial values $\left(x_0,y_0,z_0, \theta_0 , \phi_0\right)$, is given by:
\begin{equation}\label{solution-orbit-boundary}
\left\{
\begin{array}{l}
x\left(\alpha, \beta\right) =  x_0 +\alpha \sin \theta_0 \sin \phi_0 - \beta \cos \theta_0 \cos \phi_0 \\
y\left(\alpha, \beta\right) =  y_0 -\alpha \cos \theta_0 \sin \phi_0 - \beta \sin \theta_0 \cos \phi_0 \\
z\left(\alpha, \beta\right) =  z_0 +\beta \sin \phi_0 \\
\theta\left(\alpha, \beta\right) =  \theta_0 \\
\phi\left(\alpha, \beta\right) =  \phi_0
\end{array}
\right. 
\end{equation}

This solution corresponds to the $2$--plane 
\begin{equation}\label{plane-orbit-boundary}
\cos \theta_0 \sin \phi_0 \cdot \left(x-x_0\right) + \sin \theta_0 \sin \phi_0 \cdot \left(y-y_0\right) + \cos \phi_0 \cdot \left(z-z_0\right) = 0 \, ,
\end{equation}
in the Cauchy surface $\overline{C}$ and it defines the orbit $\overline{X}_{\gamma}^{+}$ of $\overline\oplus$ passing through $\gamma$. 
The image in $\mathbb{M}^4$ of all the light rays in $\overline{X}_{\gamma}^{+}$ is precisely the $3$--plane in $\mathbb{M}^{4}$ given by 
\[
\cos \theta_0 \sin \phi_0 \cdot \left(x-x_0\right) + \sin \theta_0 \sin \phi_0 \cdot \left(y-y_0\right) + \cos \phi_0 \cdot \left(z-z_0\right) - t = 0
\]
and it is easy to show, using straightforward calculations, that any light ray $\mu\in \overline{X}_{\gamma}^{+}$ in the same orbit of $\overline\oplus$ than $\gamma$ determines the TIP
\[
I^{-}\left(\mu\right) = I^{-}\left(\gamma\right) = \left\{ t < \cos \theta_0 \sin \phi_0 \cdot \left(x-x_0\right) + \sin \theta_0 \sin \phi_0 \cdot \left(y-y_0\right) + \cos \phi_0 \cdot \left(z-z_0\right)    \right\} \, ,
\]
so the future $l$-boundary coincides with $c$-boundary except for the TIP $I^{-}\left(\lambda\right)=\mathbb{M}^4$ defined by any time-like geodesic $\lambda$, because it can not be defined by light rays. 

Moreover \cite[Thm. 4.16]{FHS11} ensures that, for this space-time, the $c$--boundary is the same as the conformal boundary.  
The $l$-boundary corresponds to the set of all orbits of $\overline\oplus$, that is, all $2$-planes (\ref{plane-orbit-boundary}). 
Observe that the map 
\begin{equation}\label{map-boundary}
\begin{tabular}{rcl}
$\mathbb{R}^{3}\times\mathbb{S}^{2} \simeq \overline{\mathcal{N}}$ & $\rightarrow$ & $\partial^{+}\overline{\Sigma} \simeq \mathbb{R}^{1}\times\mathbb{S}^{2}$\\
$\gamma$ & $\mapsto$ & $\overline{X}_{\gamma}^{+}$
\end{tabular}
\end{equation}
such that every light ray $\gamma\in \mathcal{N}$ is mapped to the point of the $l$-boundary corresponding to the orbit of $\overline\oplus$ passing through $\gamma$ can be written in coordinates by
\[
\left(x,y,z, \theta , \phi\right) \mapsto \left(\cos \theta \sin \phi \cdot x + \sin \theta \sin \phi \cdot y + \cos \phi \cdot z, \theta , \phi\right) \, ,
\]
therefore the future  $l$-boundary is $\partial^{+}\overline{\Sigma} \simeq \mathbb{R}^{1}\times\mathbb{S}^{2}$.


\subsection{$3$--dimensional Minkowski space-time}\label{sec:Mink-3}

Let us proceed now with $3$--dimensional Minkowski space-time  given by $\mathbb{M}^{3}=\left(\mathbb{R}^{3}, \mathbf{g}\right)$ with metric $\mathbf{g}=-dt\otimes dt + dx\otimes dx + dy\otimes dy$ in coordinates $\varphi=\left(t,x,y\right)$. 
We will use the notation $\mathcal{N}$, $\mathcal{H}$, etc., for the structures related to $\mathbb{M}^{3}$. 

It is possible to see $\mathbb{M}^{3}$ as the restriction of $\mathbb{M}^{4}$ to its hyperplane $z=0$. 
So, in order to obtain the description of the space of light rays of $\mathbb{M}^3$, we can restrict the results obtained in section~\ref{sec:Mink-4} to $z=0$ and therefore, with $\phi = \pi / 2$. 

Then, $C\equiv\left\{t=0\right\}$ is still a Cauchy surface and $\mathcal{N}\simeq C\times \mathbb{S}^{1}$ and we can use $\psi=\left(x,y,\theta\right)$ as a system of coordinates in $\mathcal{N}$, where $\psi^{-1}\left(x_0,y_0,\theta_0\right)=\gamma\in \mathcal{N}$ describes the light ray given by
\[
\gamma\left(s\right)=\left(s \hspace{1mm},\hspace{2mm} x_0 + s\cdot \cos \theta_0  \hspace{1mm}, \hspace{2mm} y_0 + s\cdot \sin \theta_0  \right)
\]
with $s\in \mathbb{R}$.

So, the tangent space of the skies $S\left(\gamma\left(s\right)\right)$ and $S\left(\gamma\left(0\right)\right)$ at $\gamma$ can be written as
\begin{equation}\label{tangent-Mink-sky}
T_{\gamma}S\left(\gamma\left(s\right)\right) = \mathrm{span}\left\{   \textstyle{   s\left( \sin\theta_0 \left( \frac{\partial}{\partial x} \right)_{\gamma} - \cos\theta_0 \left( \frac{\partial}{\partial y} \right)_{\gamma} \right) + \left( \frac{\partial}{\partial \theta} \right)_{\gamma}   }    \right\}
\end{equation}
and 
\[
T_{\gamma}S\left(\gamma\left(0\right)\right) = \mathrm{span}\left\{  \textstyle{  \left( \frac{\partial}{\partial \theta} \right)_{\gamma}    }    \right\} \, .
\]
Therefore the contact hyperplane at $\gamma$ is
\[
\mathcal{H}_{\gamma} = \mathrm{span}\left\{ \textstyle{    \sin\theta_0 \left( \frac{\partial}{\partial x} \right)_{\gamma} - \cos\theta_0 \left( \frac{\partial}{\partial y} \right)_{\gamma} , \left( \frac{\partial}{\partial \theta} \right)_{\gamma}    }     \right\}
\]
and any contact form will be proportional to 
\[
\alpha = \cos \theta \cdot dx + \sin \theta \cdot dy  \, .
\]

Using (\ref{tangent-Mink-sky}) it is possible to calculate easily the point in the $l$-boundary passing by $\gamma$, then
\[
\oplus_{\gamma}=\lim_{s\mapsto +\infty}T_{\gamma}S\left(\gamma\left(s\right)\right) = \mathrm{span}\left\{    \textstyle{    \sin\theta_0 \left( \frac{\partial}{\partial x} \right)_{\gamma} - \cos\theta_0 \left( \frac{\partial}{\partial y} \right)_{\gamma}    }    \right\}
\]
and therefore we can obtain the integral curve $c\left(\tau\right)=\left(x\left(\tau\right),y\left(\tau\right),\theta\left(\tau\right)\right)$ defining the orbit $X^{+}_{\gamma}\subset \mathcal{N}$ of $\oplus$ containing $\gamma$ solving the initial value problem
\[
\left\{
\begin{array}{l}
x'\left(\tau\right) =  \sin \theta \\
y'\left(\tau\right) =  -\cos \theta \\
\theta '\left(\tau\right) =  0 \\
c\left(0\right) = \left(x_0,y_0,\theta_0 \right)
\end{array}
\right. 
\]
Its solution is $c\left(\tau\right)=\left(x_0 + \tau \sin \theta_0 \, , \, y_0 - \tau \cos \theta_0 \, , \, \theta_0\right)$ and corresponds to the family of null geodesics with tangent vector $v=\left(1, \cos \theta_0, \sin \theta_0 \right)$ and initial value in the straight line contained in $C$ given by 
\[
\left\{
\begin{array}{l}
\cos \theta_0 \left(x-x_0\right) + \sin \theta_0 \left(y-y_0\right) =0 \\
t= 0
\end{array}
\right. .
\] 
 
Again, by straightforward calculations, it is possible to show that given $\mu_1 , \mu_2 \in X^{+}_{\gamma}$ then $I^{-}\left(\mu_1\right) = I^{-}\left(\mu_2\right)$, therefore any light ray in $X^{+}_{\gamma}$ defines the same TIP
\[
I^{-}\left(\gamma\right) = \left\{ \left(t,x,y\right)\in \mathbb{M}^3: t< \cos \theta_0 \left(x-x_0\right) + \sin \theta_0 \left(y-y_0\right)  \right\} .
\]
then, again the future $l$-boundary coincides with the future part of the $c$-boundary accessible by light rays.

In an analogous way, the orbit $X^{-}_{\gamma}$ of $\ominus$ verifies $X^{-}_{\gamma} = X^{+}_{\gamma}$ and thus it corresponds to the TIF $I^{+}\left(\gamma\right)$.

The restriction of the map (\ref{map-boundary}) to $\mathcal{N}\simeq\mathbb{R}^{2}\times\mathbb{S}^{1}$ results
\[
\begin{tabular}{rcl}
$\mathbb{R}^{2}\times\mathbb{S}^{1}\simeq \mathcal{N}$ & $\rightarrow$ & $\partial^{+}\Sigma \simeq \mathbb{R}^{1}\times\mathbb{S}^{1}$\\
$\gamma$ & $\mapsto$ & $X^{+}_{\gamma}$
\end{tabular}
\]
that, in coordinates, can be written by
\[
\left(x,y,\theta \right) \mapsto \left(\cos \theta \cdot x + \sin \theta \cdot y , \theta \right)
\]
therefore, $\partial^{+}\Sigma \simeq \mathbb{R}^{1}\times\mathbb{S}^{1}$.

We can use the previous calculations to describe a globally hyperbolic block embedded in $\mathbb{M}^3$. 
Let us call $M_{*}=\left\{ \left(t,x,y\right)\in \mathbb{M}^3:t>-1 \right\}$ with the same metric $\mathbf{g}$ restricted to $M_{*}$, and denote by $\mathcal{N}_{*}$,  $\mathcal{H}_{*}$, etc., the corresponding structures for $M_{*}$. 
Since $M_{*}\subset \mathbb{M}^3$ is open and they share the same Cauchy surface $C\equiv\left\{t=0\right\}$, then trivially $\mathcal{N}_{*} \simeq \mathcal{N}$ and $\mathcal{H}_{*} \simeq \mathcal{H}$. 
To calculate $\ominus_{*}$, we can consider the limit of the expression (\ref{tangent-Mink-sky}) when $s$ tends to $-1$, then 
\[
\left(\ominus_{*}\right)_{\gamma}=\lim_{s\mapsto -1}T_{\gamma}S\left(\gamma\left(s\right)\right) = \mathrm{span}\left\{   \textstyle{   -\sin\theta_0 \left( \frac{\partial}{\partial x} \right)_{\gamma} + \cos\theta_0 \left( \frac{\partial}{\partial y} \right)_{\gamma} + \left( \frac{\partial}{\partial \theta} \right)_{\gamma}   }   \right\}
\]
Thus, the orbit $X^{-}_{\gamma}\subset \mathcal{N}_{*}$ of $\ominus_{*}$ passing by $\gamma$ is the solution $c\left(\tau\right)=\left( x\left(\tau\right), y\left(\tau\right), \theta\left(\tau\right) \right)$ of 
\[
\left\{
\begin{array}{l}
x'\left(\tau\right) =  -\sin \theta  \\
y'\left(\tau\right) =  \cos \theta  \\
\theta '\left(\tau\right) =  1  \\
c\left(0\right) = \left(x_0,y_0,\theta_0 \right)
\end{array}
\right. 
\]
and it is given by $c\left(\tau\right)=\left(x_0 + \cos \left(\tau+ \theta_0\right) \, , \, y_0 +\sin \theta_0\left(\tau+ \theta_0\right) \, , \, \tau + \theta_0\right)$. 
The light ray in $X^{-}_{\gamma}$ defined by $c\left(\tau\right)$ can be parametrized (as a null geodesic) by
\[
\gamma_{\tau}\left(s\right)=\left( s \, , \, x\left(\tau\right)+s \cos \theta\left(\tau\right) \, , \, y\left(\tau\right)+s \sin \theta\left(\tau\right)  \right) =
\]
\[
=\left( s \, , \, x_0 + \left(s+1\right)\cos \left(\tau+ \theta_0\right) \, , \, y_0 + \left(s+1\right)\sin \left(\tau+ \theta_0\right) \right) \, ,
\]
verifying $\lim_{s\mapsto -1}\gamma_{\tau}\left(s\right)=\left(-1,x_0,y_0\right)$ for all $\tau$. 
This clearly shows that $X^{-}_{\gamma}\subset \mathcal{N}_{*}$ can be identified with $S\left(\left(-1,x_0,y_0\right)\right)\subset \mathcal{N}$ and therefore the past $l$-boundary completed space $M_{*}\cup \partial^{-}\Sigma_{*}$ can be identified diffeomorphically with $\left\{ \left(t,x,y\right)\in \mathbb{M}^3:t\geq -1 \right\}$.  


\subsection{$3$--dimensional de Sitter space-time }
Using the notation of section \ref{sec:Mink-4}, we can define the \emph{de Sitter space-time } $S_{1}^{3}$ as the set in $\mathbb{M}^4$ verifying 
\begin{equation} \label{Secuacion}
-t^{2}+x^{2}+y^{2}+z^{2}=1 \, .
\end{equation}
We will denote the structures related to $S_{1}^{3}$ by $\mathcal{N}_S$,  $\mathcal{H}_S$, etc.
Because of \cite[Prop. 4.28]{On83}, light rays in $\mathcal{N}_S$ are straight lines in $\mathbb{M}^{4}$ contained in $S_{1}^{3}$, that is, light rays in $\mathbb{M}^{4}$ too.

Let us consider the Cauchy surface in $S_{1}^{3}$ given by $C_S=\overline{C} \cap S_{1}^{3}$, that is, the 2-surface satisfying 
$$
\left\{ 
\begin{array}{l}
t=0 \\ 
x^2 +y^2 +z^2 =1
\end{array}
\right.   
$$
so we can parametrize $C_S$ by 
\begin{equation}\label{Cparam}
\left\{ 
\begin{array}{l}
x=\cos u \sin w \\ 
y=\sin u \sin w \\ 
z=\cos w
\end{array}
\right.   
\end{equation}

Obviously, the null geodesic $\gamma\in \overline{\mathcal{N}}$ will entirely lie in $S^{3}_{1}$ if it satisfies equation (\ref{Secuacion}), so for every $s$ we have
\[
-s^2 + \left(x+s\cos \theta \sin \phi \right)^{2} + \left(y +s\sin \theta \sin \phi\right)^{2} + \left(z+s\cos \phi\right)^{2} = 1  \, ,
\]
which can be simplified into
\[
2s\left(\left(x\cos \theta +y\sin \theta\right)\sin \phi+z\cos \phi\right)=0  \, ,
\]
therefore 
\begin{equation}  \label{cond_v1}
\left(x\cos \theta +y\sin \theta\right)\sin \phi+z\cos \phi=0 \, ,
\end{equation}
and hence, we solve 
$$
\cot \phi=-\frac{x \cos \theta+y \sin \theta}{z} \, .
$$%
By the relation (\ref{Cparam}) we can write
$$
\cot \phi=-\cos \left(\theta-u\right) \tan w  
$$%
so $\phi$ only depends on the variables $u,w,\theta$. 
We will abbreviate it as
$$
\cot \phi=f\left(u,w,\theta\right)  
$$

Let us restrict the contact form $\alpha$ to $\mathcal{N}_S$ using:
\begin{equation} \label{restriccion_1}
\left\{ 
\begin{array}{l}
x = \cos u \sin w \\ 
y = \sin u \sin w \\ 
z = \cos w \\
\theta = \theta \\
\phi = \mathrm{arccot} f\left(u,w,\theta\right)%
\end{array}%
\right.   
\end{equation}%
Substituting the differentials
$$
\left\{ 
\begin{array}{l}
dx = -\sin u \sin w \, du + \cos u \cos w \, dw \\ 
dy = \cos u \sin w \, du + \sin u \cos w \, dw \\ 
dz = -\sin w \ dw\\
\end{array}%
\right.   
$$%
into $\overline{\alpha}$, we get:
\begin{equation}
\alpha_S = \left.\overline{\alpha}\right|_{\mathcal{N}_S}=   \frac{-\cos w \sin w \sin\left(\theta-u\right)}{\sqrt{\cos^{2}\left(\theta-u\right)\sin^{2} w + \cos^{2} w}} du - \frac{\cos\left(\theta-u\right)}{\sqrt{\cos^{2}\left(\theta-u\right)\sin^{2} w + \cos^{2} w}}  dw     
\end{equation}
where we have used the relations, obtained from (\ref{cond_v1}), given by
\begin{equation} \label{trigon-phi}
 \sin \phi = \frac{-\cos w}{\sqrt{\cos^{2}\left(\theta-u\right)\sin^{2} w + \cos^{2} w}} \, , \quad    \cos \phi = \frac{\sin w \cos\left(\theta-u\right)}{\sqrt{\cos^{2}\left(\theta-u\right)\sin^{2} w + \cos^{2} w}}  \, . \end{equation}

Then we can choose the following contact form in $\mathcal{N}_S$ 
$$ 
\alpha_S = \cos w \sin w \sin\left(\theta-u\right) du + \cos\left(\theta-u\right) dw \, ,
$$
and  the 2-plane that annihilates $\alpha_S$ is 
$$ 
\left(\mathcal{H}_S\right)_{\gamma} = \mathrm{span} \left\{  \textstyle{     -\cos\left(\theta-u\right)\left(\frac{\partial}{\partial u}\right)_{\gamma} + \cos w \sin w \sin\left(\theta-u\right) \left(\frac{\partial}{\partial w}\right)_{\gamma} , \left(\frac{\partial}{\partial \theta}\right)_{\gamma}     }      \right\}
$$

In order to find the future $l$-boundary of $3$--dimensional de Sitter space-time, in virtue of Section \ref{sec:embedded}, we will just restrict the results obtained in Section~\ref{sec:Mink-4} for $\mathbb{M}^{4}$ to the embedded $S^{3}_{1}$. 
So, using the expression (\ref{restriccion_1}) for the values $\left(u_0,w_0,\theta_0\right)$ we get: 
\[
\left(x_0, y_0, z_0, \theta_0, \phi_0\right) = \left(\cos u_0 \sin w_0, \, \sin u_0 \sin w_0, \, \cos w_0, \, \theta_0, \, \mathrm{arccot} f\left(u_0,w_0,\theta_0\right)\right)
\]
and substituting it, together with (\ref{trigon-phi}), into the equation (\ref{plane-orbit-boundary}), we obtain the equation of the orbit $\left(X_{S}^{+}\right)_{\gamma} = \overline{X}^{+}_{\gamma} \cap \mathcal{N}_S$ of $\oplus_{S}$ through $\gamma$ as a curve in the Cauchy surface $C_S$ given by
\begin{equation}\label{dS-orbit-boundary}
\cos \left(\theta_0-u\right)\tan w = \cos \left(\theta_0-u_0\right)\tan w_0
\end{equation}
or equivalently
\begin{equation}\label{dS-orbit-boundary-2}
f\left(u,w,\theta_0\right)=f\left(u_0,w_0,\theta_0\right).
\end{equation}

If we consider the inclusion in coordinates 
\begin{equation}\label{inc-Ns-N}
\begin{tabular}{l}
$i:\mathcal{N}_S \simeq \mathbb{S}^{2}\times\mathbb{S}^{1} \rightarrow \overline{\mathcal{N}}\simeq \mathbb{R}^{3}\times\mathbb{S}^{2}$ \\
$\left(u,w, \theta \right) \mapsto \left(\cos u \sin w, \, \sin u \sin w, \, \cos w, \, \theta , \, \mathrm{arccot} f\left(u,w,\theta\right)\right)$ 
\end{tabular}
\end{equation}
then its composition with the map (\ref{map-boundary}) is
\begin{equation}\label{dS-map-boundary}
\begin{tabular}{rcl}
$\mathcal{N}_S \simeq \mathbb{S}^{2}\times\mathbb{S}^{1}$ & $\rightarrow$ & $\partial^{+}\Sigma_S \subset \mathbb{R}^{1}\times\mathbb{S}^{2} $\\
$\left(u,w, \theta \right)$ & $\mapsto$ & $\left(0, \theta , \mathrm{arccot} f\left(u,w,\theta\right)\right)$
\end{tabular}
\end{equation}
For a fixed $\theta=\theta_0$, because (\ref{dS-orbit-boundary-2}), every level set $U_k = \left\{ \left(u,w\right)\in C_S: f\left(u,w,\theta_0\right)=k\right\}$ corresponds to an orbit of $\oplus_{S}$.
Since the image of 
\[
F\left(u,w\right)= f\left(u,w,\theta_0\right)= -\cos \left(\theta_0-u\right)\tan w 
\]
is $\left(-\infty,\infty\right)$ then the image of 
\[
G\left(u,w\right)= \mathrm{arccot} f\left(u,w,\theta_0\right) 
\]
is $\left(0,\pi\right)$, therefore the image of the map (\ref{dS-map-boundary}) is $\partial^{+}\Sigma_S = \lbrace 0 \rbrace \times\mathbb{S}^{2}\simeq \mathbb{S}^{2}$. 

By \cite[Prop. 4.28]{On83}, it can be easily observed that $I^{-}\left(p\right) \cap S^3_1 = I^{-}\left(p,S^3_1\right)$ and hence, for any light ray $\gamma\in \mathcal{N}_S $
\[
I^{-}\left(\gamma\right) \cap S^3_1 = I^{-}\left(\gamma,S^3_1\right)  \, .
\]
Thus, the restriction of TIPs of $\mathbb{M}^4$ to de Sitter space-time  are TIPs of $S^3_1$, and therefore the future $l$-boundary of de Sitter space-time coincides again with the part of the future $c$-boundary accessible by null geodesics.


\subsection{A family of $3$--dimensional space-times}\label{sec:alphaM}

In this section we will study the family of space-times given by $M_{\alpha}=\left\{\left(t,x,y\right)\in \mathbb{R}^3:t>0 \right\}$ with metric tensor $\mathbf{g}_{\alpha}=-t^{2\alpha}dt\otimes dt + dx\otimes dx + dy\otimes dy$. 

It is trivial to see that the transformations given by 
\begin{equation}\label{conformal-diffeo}
\begin{tabular}{ccccc}
\underline{For $ \alpha < -1$}: & & \underline{For $ \alpha = -1$}: & & \underline{For $ \alpha > -1$}: \vspace{3mm} \\
$\left\{
\begin{array}{l}
\overline{t} = \frac{t^{\alpha + 1}}{\alpha + 1} \\
\overline{x} = x \\
\overline{y} = y
\end{array}
\right.$
& &  
$\left\{
\begin{array}{l}
\overline{t} = \log t \\
\overline{x} = x \\
\overline{y} = y
\end{array}
\right.$
& &
$\left\{
\begin{array}{l}
\overline{t} = \frac{t^{\alpha + 1}}{\alpha + 1} - 1 \\
\overline{x} = x \\
\overline{y} = y
\end{array}
\right.$
\end{tabular}
\end{equation}
are conformal diffeomorphisms such that
$$
\begin{tabular}{ccccc}
\underline{For $ \alpha < -1$}: & & \underline{For $ \alpha = -1$}: & & \underline{For $ \alpha > -1$}: \vspace{3mm} \\
$M_{\alpha} \simeq \mathbb{M}^3$
& &  
$M_{-1} \simeq \mathbb{M}^3$
& &
$M_{\alpha} \simeq M_{*}$
\end{tabular}
$$
where the last space-time  $M_{*}$ denotes the $3$--dimensional Minkowski block studied in Section \ref{sec:Mink-3}. 
So, the space of light rays, its contact structure and the $l$-boundary of these space-times are already calculated in section~\ref{sec:Mink-3}. 

We will now examine the $l$-boundary for $\alpha>-1$.

Observe that the null vectors in $T_p M_{\alpha}$ are proportional to $v=\left(1,t^{\alpha}\cos \theta , t^{\alpha}\sin \theta\right)$ for $\theta\in \left[0,2\pi\right]$ at $p=\left(t,x,y\right)$, and the only non--zero Christoffel symbol is $\Gamma_{00}^{0}=\alpha  t^{-1}$. Hence, since the equations of geodesics are 
$$
\left\{
\begin{array}{l}
t'' + \frac{\alpha}{t}\left(t'\right)^2 = 0 \\
x'' = 0 \\
y'' = 0
\end{array}
\right.
$$
then the null geodesic $\gamma$ such that $\gamma\left(0\right)=\left(t_0, x_0, y_0  \right)$ and $\gamma'\left(0\right)=\left(1,t_{0}^{\alpha}\cos \theta_0 , t_{0}^{\alpha}\sin \theta_0  \right)$ for a given $\theta_0\in \left[0,2\pi\right]$ for $\alpha > -1$ can be written as
\[
\gamma\left(s\right)= \left( \left(\left(\alpha +1\right)t_{0}^{\alpha} s + t_{0}^{\alpha+1}\right)^{1/\left(\alpha + 1\right)} \, , \, x_0 + s t_{0}^{\alpha}\cos \theta_0 \, , \, y_0 + s t_{0}^{\alpha}\sin \theta_0\right)
\]
defined for $s\in \left(-\frac{t_0}{\alpha + 1}, \infty\right)$.

Observe that, when $-1<\alpha<0$, lightcones open wider as $t$ approaches to $0$, becoming a plane at the limit $t=0$. 
On the other hand, when $\alpha>0$, they close up when $t$ gets close to $0$, degenerating into a line when $t=0$. 
The case $\alpha=0$ corresponds to a Minkowski block isometric to $M_{*}$. 

Let us consider $C\equiv\left\{t=1\right\}$ as the global Cauchy surface we will use as origin of any given null geodesic 
\[
\gamma\left(s\right)= \left( \left(\left(\alpha +1\right) s + 1\right)^{1/\left(\alpha + 1\right)} \, , \, x_0 + s \cos \theta_0 \, , \, y_0 + s \sin \theta_0\right) = \left(t_s,x_s,y_s\right)
\]
Then the curve 
\[
\mu_{\theta}\left(\tau\right) = \left( \left(\left(\alpha +1\right)t_{s}^{\alpha} \tau + t_{s}^{\alpha+1}\right)^{1/\left(\alpha + 1\right)} \, , \, x_s + \tau t_{s}^{\alpha}\cos \theta \, , \, y_s + \tau t_{s}^{\alpha}\sin \theta\right)
\]
describes a null geodesic starting at $\gamma\left(s\right)$. 
So, for $\tau=\frac{-s}{t_s^{\alpha}}$, we have 
\[
\mu_{\theta}\left(-s/t_s^{\alpha}\right) = \left( 0 , \, x_0 + s \left(\cos \theta_0 - \cos \theta\right)  , y_0 + s \left(\sin \theta_0 - \sin \theta\right) \right) \in C .
\] 
Therefore, the coordinates of the sky of $\gamma\left(s\right)$ can be written by
\[
\psi\left(S\left(\gamma\left(s\right)\right)\right) \equiv \left\{
\begin{array}{l}
x\left(\theta\right) =  x_0 +s\left( \cos \theta_0  - \cos \theta   \right) \\
y\left(\theta\right) =  y_0 +s\left( \sin \theta_0  - \sin \theta   \right) \\
\theta\left(\theta\right) =  \theta 
\end{array}
\right. 
\]
Deriving with respect to $\theta$ at $\theta=\theta_0$, we obtain a generator of the tangent space of the sky $S\left(\gamma\left(s\right)\right)$ at $\gamma$, so
$$
T_{\gamma}S\left(\gamma\left(s\right)\right) = \mathrm{span}\left\{   \textstyle{    s\left( \sin\theta_0  \left( \frac{\partial}{\partial x} \right)_{\gamma} - \cos\theta_0  \left( \frac{\partial}{\partial y} \right)_{\gamma} \right) + \left( \frac{\partial}{\partial \theta} \right)_{\gamma}   }   \right\}
$$
and then
\[
\left(\ominus_{\alpha}\right)_{\gamma}=\lim_{s\mapsto \frac{-1}{\alpha + 1}}T_{\gamma}S\left(\gamma\left(s\right)\right) = \mathrm{span}\left\{    \textstyle{     -\sin\theta_0 \left( \frac{\partial}{\partial x} \right)_{\gamma} + \cos\theta_0 \left( \frac{\partial}{\partial y} \right)_{\gamma} + \left(\alpha + 1\right) \left( \frac{\partial}{\partial \theta} \right)_{\gamma}    }     \right\} \, .
\]

The solution $c\left(\tau\right)=\left( x\left(\tau\right), y\left(\tau\right), \theta\left(\tau\right) \right)$ of the initial value problem 
\[
\left\{
\begin{array}{l}
x'\left(\tau\right) =  -\sin \theta  \\
y'\left(\tau\right) =  \cos \theta  \\
\theta '\left(\tau\right) =  \alpha + 1 \\
c\left(0\right) = \left(x_0,y_0,\theta_0 \right)
\end{array}
\right. 
\]
describes the orbit $X^{-}_{\gamma}\subset \mathcal{N}_{\alpha}$ of $\ominus_{\alpha}$ passing by $\gamma$.
Then 
\[
c\left(\tau\right)=\left(x_0 + \frac{\cos \left(\left(\alpha + 1\right)\tau+ \theta_0\right)-\cos \theta_0}{\alpha+ 1}   \, , \, y_0 +   \frac{\sin \left(\left(\alpha + 1\right)\tau+ \theta_0\right)-\sin \theta_0}{\alpha+ 1}    \, , \, \left(\alpha + 1\right)\tau+ \theta_0\right) \, .
\] 

\begin{figure}[h]
  \centering
    \includegraphics[scale=0.4]{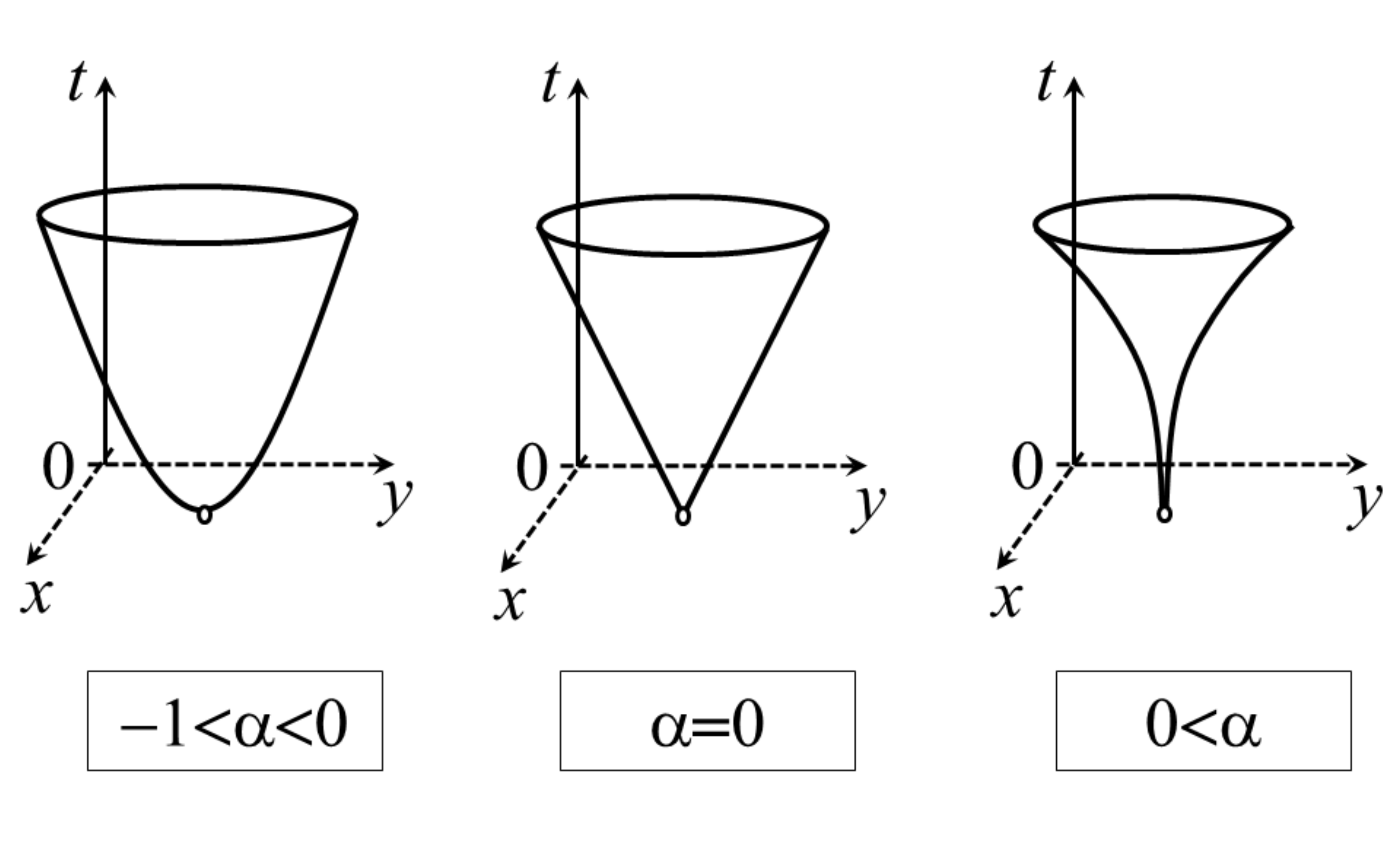}
  \caption{The $\alpha$-family of space-times.}
  \label{diapositiva6}
\end{figure}

It is easy to realize that the points in $M_{\alpha}$ in the orbit $X^{-}_{\gamma}$ verify 
\begin{equation}\label{conos-orbitas}
t^{2\alpha +2} = \left(\alpha + 1\right)^2 \left[    \textstyle{     \left( x - \left( x_0 - \frac{\cos \theta_0}{\alpha + 1} \right) \right)^2 + \left( y - \left( y_0 - \frac{\sin \theta_0}{\alpha + 1} \right) \right)^2     }     \right]
\end{equation}

A schematic picture of $X^{-}_{\gamma}$ can be seen in Figure \ref{diapositiva6}.

Observe that each orbit $X^{-}_{\gamma}$ is determined by the vertex of the surface (\ref{conos-orbitas}), therefore the past $l$-boundary can be identified with $\mathbb{R}^2$ such that any $\left(u,v\right)\in \mathbb{R}^2$ corresponds to the orbit of $\ominus_{\alpha}$ whose light rays emerges from the point $\left(t,x,y\right)=\left(0,u,v\right)$.

The differentiable structure of $\overline{M_{\alpha}}=M_{\alpha} \cup \partial^{-}\Sigma_{\alpha}$ cannot be the standard one induced from $\overline{M_{*}}=M_{*} \cup \partial^{-}\Sigma_{*}=\left\{\left(t,x,y\right)\in \mathbb{R}^3:t\geq -1 \right\}$ by the corresponding conformal mapping (\ref{conformal-diffeo}), because it would be needed that 
\[
\overline{M_{\alpha}}  \to  \overline{M_{*}} \, , \qquad
\left(t, x, y \right)  \mapsto  \left(\frac{t^{\alpha + 1 }}{\alpha + 1} -1 , x, y \right)
\] 
were differentiable, but it is not the case with the standard differentiable structure when $-1 < \alpha <0$.


\section{Conclusions and discussion}\label{sec:discussion}

The notion of a new causal boundary proposed by R. Low \cite{Lo06} and called $l$-boundary in this paper, which is based on the idea of determining all light rays which focus at the same point at infinity and treating this set as the `sky' of the common future endpoint of all of them, has been made precise and discussed carefully in the particular instance of three-dimensional space-times.        

It has been shown that under mild conditions, i.e., that the space $M$ doesn't have tangent skies, the regularity of the asymptotic distributions $\oplus$ and $\ominus$, and the smooth extension of the natural distribution $\widetilde{\mathcal{D}}$ on $\widetilde{\mathcal{N}}$ to its boundary, that such boundary $\partial \Sigma$ is well defined and makes the completed space $\overline{M}$ into a smooth manifold with boundary.  Let us point out here that the former condition can be removed as it will be shown elsewhere.   Space-times such that the $l$-boundary $\partial \Sigma$ exists and the completed space-time  $\overline{M} = M \bigcup \partial \Sigma$ is a smooth manifold with boundary could be called $l$-extendible. 

The $l$-boundary of a three-dimensional space-time  has been compared with the GKP $c$-boundary and it has been found that, even if in general the $l$-boundary is smaller, in the case that the conformal structure can be extended to the $l$-boundary the $l$-boundary and $c$-boundary are equivalent in the set where light rays are transversal.

Hence, a natural question emerges from the previous considerations: suppose that $M$ is a three-dimensional $l$-extendible space-time, can the conformal structure $\mathcal{C}$ on $M$ be smoothly extended to $\overline{M}$?

The answer to this question could seem to be negative.  Consider, for instance, the example $M_\alpha$, $\alpha = -1/2$, discussed in Sect. \ref{sec:alphaM} with representative metric $\mathbf{g} = -\frac{1}{t} dt \otimes dt + dx \otimes dx + dy\otimes dy$.  The space-time  $M_{-1/2}$ is conformally isometric to the block Minkowski space $M_*$ discussed in the second part of Section \ref{sec:Mink-3}, and we conclude that is $l$-extensible.  However it doesn't seem to be conformally extensible to the $l$-completed space $\overline{M}_{-1/2}$.  This apparent contradiction can be solved by noticing that the induced smooth structure on the $l$-completed space is not the one induced by the ambient smooth structure on $\mathbb{M}^3$.
It can be seen, the details will be discussed elsewhere, that there is a canonical projective conformal parameter on light rays such that the induced smooth structure on the boundary can be suitably described and the existence, or not, of a conformal extension to the $l$-boundary remains unanswered.



%
%

%




\section*{Acknowledgements}
The authors would like to thank the referee's comments and suggestions as well as the financial support provided by Ministry of Economy and Competitivity of Spain under the grant MTM2014-54692-P and Community of Madrid research project QUITEMAD+, S2013/ICE-2801.


\end{document}